\documentclass[11pt, a4paper]{article}
\usepackage{siunitx}
\usepackage{rotating}  %

\usepackage[a4paper,top=2cm,bottom=2cm,left=2.5cm,right=2.5cm,marginparwidth=1.75cm]{geometry}
\usepackage[linesnumbered,ruled,vlined]{algorithm2e}

\usepackage{amsmath}
\usepackage{array} %
\usepackage{graphicx}
\usepackage{xcolor} %
\usepackage{kotex}
\usepackage{placeins}
\usepackage[title]{appendix}
\usepackage{mathrsfs}
\usepackage{amsfonts}
\usepackage{amssymb}
\usepackage{amsthm}
\usepackage{booktabs} %
\usepackage{multirow}
\usepackage{caption}  %
\usepackage{threeparttable} %
\usepackage{listings}
\usepackage{enumitem}
\usepackage{chngcntr}
\usepackage{booktabs}
\usepackage{lipsum}
\usepackage{subcaption}
\usepackage{authblk}
\usepackage[T1]{fontenc}    %
\usepackage{csquotes}       %
\usepackage{diagbox}
\usepackage{natbib}

\usepackage{amsthm} %

\newtheorem{remark}{Remark}
\newtheorem{proposition}{Proposition}
\usepackage{setspace}
\usepackage{titlesec}
\titleformat{\section} %
  {\normalfont\Large\bfseries}{\thesection.}{1em}{}

\usepackage{float}   %
\usepackage{caption} %


\title{Online Change-Point Detection in Linear Regression with Autoregressive Errors via Streaming Variational Bayes}
\author[1]{\small Yujin Jeong}
\author[1]{Chaeeun Park}
\author[1]{Sihyeon Pyeon}
\author[1,2]{Seonghyun Jeong\thanks{Corresponding author: \texttt{sjeong@yonsei.ac.kr}}}
\author[1,2]{Taeyoung Park\thanks{Corresponding author: \texttt{tpark@yonsei.ac.kr}}}

\affil[1]{Department of Statistics and Data Science, Yonsei University}
\affil[2]{Department of Applied Statistics, Yonsei University}

\date{} %

\usepackage[colorlinks=true, allcolors=blue]{hyperref}
\usepackage{orcidlink}

\makeatletter
\let\SR@papertitle\@title
\let\SR@paperauthor\@author
\let\SR@titlelayout\@maketitle
\newcommand{\SRsupplementtitle}{%
  \par
  \begingroup
    \let\thanks\@gobble %
    \let\newpage\relax
    \def\@title{\SR@papertitle\\(Supplementary Material)}%
    \let\@author\SR@paperauthor
    \let\@date\@empty
    \renewenvironment{tabular}[2][]{\begin{center}}{\end{center}}%
    \SR@titlelayout
  \endgroup
  \thispagestyle{plain}%
}
\makeatother

\begin{document}
\maketitle

\begin{abstract}
Large-scale streaming time-series regression requires methods that can detect structural changes while accounting for serially correlated errors and quantifying uncertainty about the current regime. We propose a streaming variational Bayes method for linear regression with autoregressive errors that combines online changepoint detection with sequential posterior inference. The proposed method processes observations in chronological mini-batches and uses a finite-mixture variational approximation to represent separate retention and reset configurations for the regression coefficients and innovation variance, while the autoregressive coefficients are shared across regimes and learned sequentially. This construction yields block-specific changepoint probabilities and posterior uncertainty summaries. Closed-form variational updates and mixture pruning enable scalable sequential inference without repeated full-data refits. In simulations, the proposed method maintains low empirical false-alarm rates under stable regimes, with posterior coverage improving as the sample size increases. Under repeated structural changes, it provides accurate type-specific changepoint detection, small changepoint location errors, and short detection delays relative to frequentist sequential monitoring procedures, while posterior coverage approaches nominal levels after an initial adaptation period. An application to high-frequency Bitcoin volatility data illustrates distinct adaptation of the regression and innovation-variance components while accounting for residual autoregressive dependence.
\end{abstract}

\textbf{Keywords}: Streaming variational Bayes, Online changepoint detection,
Time-series regression, Autoregressive errors, Structural change,
Uncertainty quantification

\section{Introduction}

Large-scale time-series applications increasingly require regression
inference to be updated as observations arrive, while accounting for
serially correlated errors and possible structural changes. Linear regression
remains attractive because of its simplicity and interpretability, and
time-series regression with autoregressive (AR) errors provides a natural
framework by separating covariate effects from serial dependence remaining
in the regression residuals. When both the cumulative sample size and the
AR order are large, repeatedly refitting a full-data model can become
computationally and memory intensive. At the same time, the regression
coefficients or innovation variance may change abruptly as the data stream
moves between regimes. 
These considerations motivate scalable sequential methods that can perform online changepoint detection while providing uncertainty quantification (UQ) for the current regime.

Structural change creates a statistical difficulty beyond the computational
cost of processing an expanding data stream. Standard posterior recycling
is effective when the data-generating mechanism remains stable, because the
posterior from the preceding batch provides an increasingly informative
summary of past observations. After a regime change, however, a posterior
concentrated using a long history from the preceding regime can dominate the
contribution of incoming batches from the new regime.
{This posterior inertia can slow adaptation and distort
current-regime estimation and UQ.}
Moreover, changes in the regression coefficients and innovation variance
have distinct interpretations and need not occur simultaneously: a change
in the regression coefficients alters the conditional relationship between
the response and covariates, whereas a change in the innovation variance
alters the scale of the remaining unexplained variation. A useful sequential
procedure should therefore be able to weaken inherited information
selectively and represent uncertainty about which parameter block has changed.

Regression with AR errors has a long history in time-series analysis; see,
for example, \citet{boxjenkins1970}, \citet{brockwell1991}, and
\citet{hamilton1994}. Classical approaches include the Cochrane--Orcutt
and Prais--Winsten procedures and likelihood-based estimation
\citep{cochrane1949,prais1954,beach1978}. For Bayesian inference,
\citet{chib1993bayesAR} develops a Gibbs sampler based on a conditionally
transformed regression model. Sequential Monte Carlo and particle-learning
methods provide flexible sequential Bayesian inference
\citep{doucet2001sequential,carvalho2010particle}, although maintaining an
accurate particle approximation can be computationally demanding for
higher-dimensional static parameters. Variational Bayes (VB) provides
an optimization-based alternative
\citep{jordan1999,wainwright2008,blei2017}. Streaming variational Bayes
(SVB) propagates an approximate posterior across data batches
\citep{broderick2013streaming}, while updating variational Bayes provides
related recursive updates using newly arriving data
\citep{tomasetti2022updating}.
Such posterior-recycling schemes can inherit the
posterior inertia described above when the data-generating mechanism changes.

Several strands of methodology address structural change and sequential
nonstationarity. Discrete regime and changepoint models include
Markov-switching formulations and Bayesian multiple-changepoint methods
\citep{hamilton1989regime,chib1998changepoint,
pesaran2006breaks,fearnhead2006exact}. Stochastic variational inference
has also been developed for long sequences in hidden Markov and hidden
semi-Markov models \citep{johnson2014stochastic,foti2014stochastic}.
Online multiple-changepoint filtering and Bayesian online changepoint
detection recursively represent uncertainty about regime transitions
\citep{fearnhead2007online,adams2007bayesian}, with variational extensions
available when segment-level predictive distributions are intractable
\citep{turner2013online}. For dependent time series,
\citet{knoblauch2018spatiotemporal} combine online changepoint detection
with model selection over Bayesian autoregressive models. Frequentist alternatives include fluctuation- and cumulative-sum-based regression monitoring \citep{chu1996monitoring,zeileis2010testing}, including recent sequential monitoring for dynamic regression with autoregressive residuals \citep{niang2026online}, and functional-pruning changepoint detection \citep{romano2023focus}.

Another approach is to adapt by discounting inherited information.
Bayesian forgetting provides a general basis for recursive adaptation
\citep{kulhavy1993forgetting,west1997bayesian}, while power priors control
the influence of previous information through likelihood discounting
\citep{ibrahim2000power}.
{Within SVB, \citet{masegosa2017bayesian} introduce
hierarchical power priors, including an extension with multiple parameter-specific continuous forgetting factors.}
{\citet{moens2018hierarchical} also develops a
hierarchical variational filter for adaptive forgetting.}
Variational estimation of an unknown forgetting factor has been studied
for nonstationary AR models \citep{smidlquinn2004}. More closely related,
\citet{li2021irregular} combine binary change variables, posterior tempering,
and beam search over change histories for Bayesian online adaptation.

The remaining challenge is to combine block-specific online changepoint detection with sequential Bayesian inference and UQ in AR-error regression. Existing approaches
address different parts of this problem but do not directly provide this
combination. Changepoint formulations that restart the full parameter state
at a change can discard stable regression information when only the
innovation variance changes. The linear-regression formulation of
\citet{li2021irregular} tempers the inherited coefficient posterior through
a single change variable but treats the observation-noise variance as fixed.
Parameter-specific continuous forgetting can adapt to drift in individual
parameters, but it does not define a discrete change event, so changepoint
detection would rely on interpreting the inferred discount levels rather
than on probabilities of explicitly defined reset events. Frequentist sequential monitoring instead focuses on detecting parameter
instability, leaving post-change estimation to a separate step. Serial
dependence further complicates the problem, because ignoring
autocorrelation can produce spurious changepoints \citep{romano2022decafs}.

To address this gap, we develop separate-reset streaming variational Bayes (SR-SVB) for linear regression with AR errors. At each subsequent mini-batch, the method expands the retained mixture into candidate components corresponding to the four possible combinations of retaining or resetting the regression and innovation-variance blocks.
A reset discounts, rather than discards, inherited information, allowing Bayesian forgetting to operate separately across the two blocks. The AR coefficients, treated as common across
regimes, are updated with each batch to account for residual serial dependence.
The mixture weights determine the approximation propagated to the next batch and yield approximate marginal and joint reset probabilities for the two blocks. The resulting procedure
provides sequential posterior inference and block-specific changepoint
detection without repeated full-data refits, with per-batch computation and
memory that do not grow with the cumulative sample size.

Our methodological contributions are twofold. First, we develop a block-separable variational construction for online changepoint detection and posterior UQ in linear regression with AR errors. The propagated state uses componentwise independent Gaussian and inverse-gamma factors, which avoid coupling the regression factor to the variance scale and thereby allow inherited regression and innovation-variance information to be discounted separately. Second, we derive closed-form coordinate updates for the component-specific regression and innovation-variance factors, the shared AR factor, and the mixture weights, and use pruning to bound the number of propagated mixture components.
We evaluate SR-SVB under both stable and repeatedly changing regimes.
Under stability, we compare point estimation, posterior coverage, and
computational cost with full-batch references and nonadaptive SVB and
assess empirical false-alarm frequencies.
{Under repeated changes, we compare SR-SVB with two
frequentist sequential monitoring methods using type-specific detection
accuracy, changepoint location error, and detection delay, while also
examining posterior coverage and, through a joint-reset ablation, the
contribution of separate reset states.}
In the Bitcoin realized-volatility application, we illustrate separate
adaptation of the regression and innovation-variance blocks, including a
shift in the fitted U.S.-trading-hours contrast.

The remainder of this paper is organized as follows.
Section~\ref{sec:sec2} introduces the model and prewhitening construction.
Section~\ref{sec:sec3} develops the proposed SR-SVB method.
Section~\ref{sec:sec4} evaluates the proposed method through simulation
studies.
Section~\ref{sec:sec5} presents the real-data analysis.
Section~\ref{sec:sec6} concludes.
Additional derivations, implementation details, and ablation and
sensitivity experiments are provided in the \hyperref[sec:supplement]{supplementary material}.

\section{Model Specification}
\label{sec:sec2}

In this section,
we begin with a static-parameter AR-error regression model, which provides the starting point for the streaming updates.
The streaming inference procedure developed in
Section~\ref{sec:sec3} will additionally allow the regression and
innovation-variance states to adapt as new mini-batches arrive.

Time is indexed discretely by $t\in\{1,2,\ldots\}$, and at each time $t$ we observe $(y_t,\mathbf{x}_t)$, where $y_t\in\mathbb{R}$ denotes the response and $\mathbf{x}_t=(x_{t,1},\ldots,x_{t,k})^\top\in\mathbb{R}^k$ denotes the $k$-dimensional covariate vector. Mini-batches are indexed by $b\in\mathbb{N}$; batch $b$ starts at time $t_b$ and contains $n_b$ consecutive observations, with index set 
\begin{align*}
T_b:=\{t_b,t_b+1,\ldots,t_b+n_b-1\}.    
\end{align*}
We assume batches are consecutive and non-overlapping, so that $t_1=1$ and $t_{b+1}=t_b+n_b$ for $b\geq1$. After processing batch $B$, define $T_{(B)}:=\bigcup_{b=1}^B T_b$ and $n_{(B)}:=\sum_{b=1}^B n_b$ as the cumulative time index set and sample size, respectively. By construction, $T_{(B)}=\{1,2,\ldots,n_{(B)}\}$.

Under the static-parameter model, for
$t\in T_{(B)}$ with $t>p$, we model the response as
\begin{align}
\label{eq:linearreg}
y_t
=
\mathbf{x}_t^\top\boldsymbol{\beta}
+
\varepsilon_t,
\quad
\phi(L)\varepsilon_t=u_t,
\quad
u_t \overset{\mathrm{iid}}{\sim} \mathrm{N}(0,\sigma^2),
\end{align}
where $\boldsymbol{\beta}=(\beta_1,\ldots,\beta_k)^\top\in\mathbb{R}^k$ and $\boldsymbol{\phi}=(\phi_1,\ldots,\phi_p)^\top\in\mathbb{R}^p$ denote the regression and AR coefficients, respectively, $\phi(L):=1-\sum_{i=1}^p\phi_iL^i$ is the AR lag polynomial, and $L^i\varepsilon_t=\varepsilon_{t-i}$ is the backshift operator. In our specification, we do not constrain the AR coefficients $\boldsymbol{\phi}$ to satisfy weak stationarity. 

To facilitate standard linear regression updates and posterior computation, we transform the model to have independent errors. Let $T_{(B)}^\ast:=\{t\in T_{(B)}:t>p\}$ denote the subset of $T_{(B)}$ obtained by excluding the first $p$ observations used to initialize the lag structure. We adopt the conditional likelihood that conditions on these initial $p$ values; thus, the likelihood is computed only for observations with time indices in $T_{(B)}^\ast$, and we assume $n_1>p$. We then apply a prewhitening transformation based on the lag polynomial $\phi(L)$. For all $t\in T_{(B)}^\ast$,
\begin{align}
\label{eq:pre-whiten}
\begin{split}
y_t^\ast
&=
\phi(L)y_t
=
y_t-\sum_{j=1}^p\phi_jy_{t-j},\\
\mathbf{x}_t^\ast
&=
\phi(L)\mathbf{x}_t
=
\mathbf{x}_t-\sum_{j=1}^p\phi_j\mathbf{x}_{t-j},
\end{split}
\end{align}
which gives the conditionally independent regression representation $y_t^\ast=\mathbf{x}_t^{\ast\top}\boldsymbol{\beta}+u_t$, where $u_t\sim\mathrm{N}(0,\sigma^2)$.
For each batch $b\leq B$, define
$T_b^\ast:=\{t\in T_b:t>p\}$ and $n_b^\ast:=|T_b^\ast|$.
Since $t_1=1$ and $n_1>p$, we have $n_1^\ast=n_1-p$, whereas
$n_b^\ast=n_b$ for $b\geq2$.  Collect the transformed observations as
\begin{align*}
    \mathbf{y}_b^\ast & :=(y_t^\ast)_{t\in T_b^\ast}\in\mathbb{R}^{n_b^\ast},\\
    \mathbf{X}_b^\ast&:=[\mathbf{x}_t^{\ast\top}]_{t\in T_b^\ast}\in\mathbb{R}^{n_b^\ast\times k}.
\end{align*}
It follows that
$\mathbf{y}_b^\ast\mid\mathbf{X}_b^\ast,\boldsymbol{\beta},\sigma^2,\boldsymbol{\phi}
\sim
\mathrm{N}(\mathbf{X}_b^\ast\boldsymbol{\beta},
\sigma^2\mathbf{I}_{n_b^\ast})$.
Up to batch $B$, we write the cumulative raw response vector and design matrix as $\mathbf{y}_{(B)}:=(y_t)_{t\in T_{(B)}}\in\mathbb{R}^{n_{(B)}}$ and $\mathbf{X}_{(B)}:=[\mathbf{x}_t^\top]_{t\in T_{(B)}}\in\mathbb{R}^{n_{(B)}\times k}$. 
For any batch $b$, we use $\mathbf{y}_{b\mid T_b^\ast}:=(y_t)_{t\in T_b^\ast}$ and $\mathbf{X}_{b\mid T_b^\ast}:=[\mathbf{x}_t^\top]_{t\in T_b^\ast}$ to denote the corresponding unwhitened responses and covariates prior to prewhitening. Note that for $t\in T_b^\ast$ and any $j\in\{1,\ldots,p\}$, the lagged index $t-j$ in \eqref{eq:pre-whiten} may lie in an earlier batch. Hence, the streaming implementation retains the last $p$ raw observations from previous batches to evaluate $y_t^\ast$ and $\mathbf{x}_t^\ast$.

When the regression state changes across a mini-batch boundary, lagged residuals from the preceding batch need not correspond to the current regression state. For $b\geq2$, let $\boldsymbol{\beta}_b$ and $\sigma_b^2$ denote the regression and innovation-variance states in batch $b$, and let $\boldsymbol{\beta}_{b-1}$ denote the regression state associated with the immediately preceding batch. We assume $n_{b-1}\geq p$, so that any lag crossing the current boundary lies in batch $b-1$. The boundary representation underlying the two treatments considered below is
\begin{align}
\label{eq:boundary-working}
y_t^\ast
&=
\mathbf{x}_t^{\ast\top}\boldsymbol{\beta}_b
+
\sum_{\substack{1\leq j\leq p\\t-j<t_b}}
\phi_j\mathbf{x}_{t-j}^\top
\left(
\boldsymbol{\beta}_b-\boldsymbol{\beta}_{b-1}
\right)
+
u_t,
\quad
t\in T_b^\ast ,
\end{align}
where $u_t\sim\mathrm{N}(0,\sigma_b^2)$. Thus, the preceding regression state enters only through lags that cross the current batch boundary; once all $p$ lags lie within batch $b$, the additional term in \eqref{eq:boundary-working} vanishes. For the first batch, no preceding regression state is available, and the representation reduces to the ordinary prewhitened model in \eqref{eq:pre-whiten}.

For the main implementation, we replace the preceding regression state in
\eqref{eq:boundary-working} by
$\bar{\boldsymbol{\beta}}_{b-1}$, the mean regression state carried forward
from the preceding batch. This quantity is held fixed throughout the
batch-$b$ update. We refer to this plug-in construction as the
previous-mean (PM) boundary treatment.
For subsequent batchwise updates, we absorb the resulting boundary correction into the transformed response and design. For $b\geq2$ and $t\in T_b^\ast$, define
\begin{align*}
\tilde y_t^\ast
&:=
y_t^\ast
+
\sum_{\substack{1\leq j\leq p\\t-j<t_b}}
\phi_j\mathbf{x}_{t-j}^\top
\bar{\boldsymbol{\beta}}_{b-1},
\\
\tilde{\mathbf{x}}_t^\ast
&:=
\mathbf{x}_t^\ast
+
\sum_{\substack{1\leq j\leq p\\t-j<t_b}}
\phi_j\mathbf{x}_{t-j}.
\end{align*}
Then \eqref{eq:boundary-working}, under the PM treatment, is equivalently written as
\[
\tilde y_t^\ast
=
\tilde{\mathbf{x}}_t^{\ast\top}\boldsymbol{\beta}_b+u_t.
\]
For $b=1$, we set
$\tilde y_t^\ast=y_t^\ast$ and
$\tilde{\mathbf{x}}_t^\ast=\mathbf{x}_t^\ast$.
Collecting these quantities over $T_b^\ast$, define
\begin{align*}
\tilde{\mathbf{y}}_b^\ast
&:=
(\tilde y_t^\ast)_{t\in T_b^\ast}
\in\mathbb{R}^{n_b^\ast},
\\
\tilde{\mathbf{X}}_b^\ast
&:=
[\tilde{\mathbf{x}}_t^{\ast\top}]_{t\in T_b^\ast}
\in\mathbb{R}^{n_b^\ast\times k}.
\end{align*}
The PM boundary-adjusted working likelihood is therefore
\begin{align*}
\tilde{\mathbf{y}}_b^\ast
\mid
\tilde{\mathbf{X}}_b^\ast,
\boldsymbol{\beta}_b,
\sigma_b^2,
\boldsymbol{\phi}
&\sim
\mathrm{N}\!\left(
\tilde{\mathbf{X}}_b^\ast\boldsymbol{\beta}_b,
\sigma_b^2\mathbf{I}_{n_b^\ast}
\right).
\end{align*}

An alternative full-joint (FJ) treatment retains the preceding regression state as a local latent quantity rather than replacing it by a fixed posterior summary. 
For regression-reset candidates, the current and preceding regression states are updated jointly so that uncertainty in the preceding state and its dependence with the current state are reflected in the current-batch update.
We use the computationally faster PM treatment as the default implementation in the main text and provide the FJ treatment as an uncertainty-aware extension. Detailed derivations and simulation comparisons between PM and FJ are provided in Supplementary Sections~\ref{app:variational_fj}, \ref{app:elbo_fj}, and~\ref{sec:supp_pm_fj}.

To accommodate possible structural changes across mini-batches, we allow the information propagated for the regression coefficients and innovation variance to be selectively retained or discounted at batch boundaries. The corresponding latent reset mechanism and finite-mixture variational representation are introduced in Section~\ref{sec:sec3}.
\section{Separate-Reset SVB for Regression with AR(\texorpdfstring{$p$}{p}) Errors}
\label{sec:sec3}

In this section, we develop the SR-SVB method for the linear regression model with AR($p$) errors introduced in
Section~\ref{sec:sec2}. Section~\ref{subsec:recursive_state} introduces
the no-reset benchmark, baseline prior, and recursively propagated state.
Section~\ref{subsec:reset_prior_vf} constructs the separate-reset
pseudo-prior and corresponding finite-mixture variational family, and
Section~\ref{subsec:subsec3.2} derives the batchwise evidence lower bound
(ELBO), coordinate-ascent variational inference (CAVI) updates, pruning, and
posterior propagation.

\subsection{Recursive State Representation}
\label{subsec:recursive_state}
\label{subsec:variational_family}

We first recall the ordinary sequential Bayesian update under the
static-parameter model. Let
$\pi^{(0)}(\boldsymbol{\beta},\sigma^2,\boldsymbol{\phi})$
denote the baseline prior, specified below in
\eqref{eq:prior}. For $b\geq2$, the exact posterior after observing
batches $1,\ldots,b$ satisfies
\begin{equation}
\begin{aligned}
p(\boldsymbol{\beta},\sigma^2,\boldsymbol{\phi}
\mid \mathbf y_{(b)},\mathbf X_{(b)})
\propto
p(\mathbf y_b^\ast\mid
\mathbf X_b^\ast,
\boldsymbol{\beta},\sigma^2,\boldsymbol{\phi})
\,p(\boldsymbol{\beta},\sigma^2,\boldsymbol{\phi}
\mid \mathbf y_{(b-1)},\mathbf X_{(b-1)}).
\label{eq:sequential_posterior}
\end{aligned}
\end{equation}
This decomposition provides the no-reset benchmark for the streaming
recursion. Under this static-parameter recursion, past
likelihood contributions are summarized by the posterior from the
preceding batch, while the incoming batch contributes through its
conditional likelihood. The last $p$ raw observations are also retained
to construct the AR lag terms in this likelihood. In our
streaming variational construction, however, the preceding posterior is
replaced by a propagated variational approximation, and its inherited
parameter states are allowed to be selectively retained or reset.
Operationally, this construction requires two ingredients: a baseline
prior for initialization and a recursively propagated approximation
carrying forward inherited information. We define these ingredients below
and then use them to construct the adaptive batchwise target.

Before any data are observed, we initialize the recursion with the
independent baseline prior
\begin{align}
\pi^{(0)}(\boldsymbol{\beta},\sigma^2,\boldsymbol{\phi})
&=
\pi_\beta^{(0)}(\boldsymbol{\beta})
\pi_{\sigma^2}^{(0)}(\sigma^2)
\pi_\phi^{(0)}(\boldsymbol{\phi}),
\label{eq:prior}
\end{align}
where
\begin{align*}
\pi_\beta^{(0)}(\boldsymbol{\beta})
&=
\mathrm N_k\!\left(
\boldsymbol{\beta}\mid
\boldsymbol{\beta}_0^{(0)},
\left(\mathbf A_0^{(0)}\right)^{-1}
\right),\\
\pi_{\sigma^2}^{(0)}(\sigma^2)
&=
\mathrm{IG}\!\left(
\sigma^2\mid
\nu_0^{(0)}/2,
\delta_0^{(0)}/2
\right),\\
\pi_\phi^{(0)}(\boldsymbol{\phi})
&=
\mathrm N_p\!\left(
\boldsymbol{\phi}\mid
\boldsymbol{\phi}_0^{(0)},
\left(\boldsymbol{\Phi}_0^{(0)}\right)^{-1}
\right).
\end{align*}
Throughout, $\mathrm{IG}(a,d)$ denotes the inverse-gamma distribution
with shape $a$ and scale $d$, whose density is proportional to
$x^{-a-1}\exp(-d/x)$ for $x>0$. For the AR coefficients
$\boldsymbol{\phi}$, we adapt the Minnesota prior from Bayesian vector
autoregressions to the univariate AR($p$) error process
\citep{litterman1986forecasting,doan1984forecasting}. Specifically, we
set $\boldsymbol{\phi}_0^{(0)}=\mathbf 0$ and choose
$\boldsymbol{\Phi}_0^{(0)}
=\operatorname{diag}\left(\tau^{-2}1^\alpha,\ldots,\tau^{-2}p^\alpha\right)$,
where $\tau^2>0$ controls the overall scale and $\alpha\geq0$ controls
lag decay, with $\alpha=0$ corresponding to equal prior scaling across
lags. In our experiments, we use a diffuse baseline with $\tau^2=10^6$
and $\alpha=0$. For the remaining hyperparameters, we set
$\boldsymbol{\beta}_0^{(0)}=\mathbf 0$,
$\mathbf A_0^{(0)}=10^{-6}\mathbf I_k$, and
$(\nu_0^{(0)},\delta_0^{(0)})=(0.02,2)$.

To initialize the recursion at batch $b=1$, set
$K_0=1$ and $\omega_{0,1}=1$, and let
$q_{\beta,1}^{(0)}=\pi_\beta^{(0)}$,
$q_{\sigma^2,1}^{(0)}=\pi_{\sigma^2}^{(0)}$, and
$q_\phi^{(0)}=\pi_\phi^{(0)}$. Equivalently, $(\boldsymbol{\beta}_{0,1}^{(0)},\mathbf A_{0,1}^{(0)},
\nu_{0,1}^{(0)},\delta_{0,1}^{(0)})
=
(\boldsymbol{\beta}_0^{(0)},\mathbf A_0^{(0)},
\nu_0^{(0)},\delta_0^{(0)}).$
With this convention, $q^{(0)}(\boldsymbol{\beta},\sigma^2,\boldsymbol{\phi})
=
\pi^{(0)}(\boldsymbol{\beta},\sigma^2,\boldsymbol{\phi}).$
For a generic batch $b\geq1$, let $q^{(b-1)}$ denote the retained
approximation available at the start of batch $b$. For $b=1$, this is
the initialized baseline state $q^{(0)}$. For $b\geq2$, it is the
post-pruning output of batch $b-1$, after the retained candidate
children have been renormalized and relabeled as
$r=1,\ldots,K_{b-1}$. 
We assume inductively that this retained state
has the form
\begin{align}
q^{(b-1)}
(\boldsymbol{\beta},\sigma^2,\boldsymbol{\phi})
&=
\left(
\sum_{r=1}^{K_{b-1}}
\omega_{b-1,r}\,
q_{\beta,r}^{(b-1)}(\boldsymbol{\beta})
q_{\sigma^2,r}^{(b-1)}(\sigma^2)
\right)
q_{\phi}^{(b-1)}(\boldsymbol{\phi}),
\label{eq:retained_posterior}
\end{align}
where
\begin{align*}
q_{\beta,r}^{(b-1)}(\boldsymbol{\beta})
&=
\mathrm N_k\!\left(
\boldsymbol{\beta}\mid
\boldsymbol{\beta}_{0,r}^{(b-1)},
\left(\mathbf A_{0,r}^{(b-1)}\right)^{-1}
\right),\\
q_{\sigma^2,r}^{(b-1)}(\sigma^2)
&=
\mathrm{IG}\!\left(
\sigma^2\mid
\nu_{0,r}^{(b-1)}/2,
\delta_{0,r}^{(b-1)}/2
\right),\\
q_{\phi}^{(b-1)}(\boldsymbol{\phi})
&=
\mathrm N_p\!\left(
\boldsymbol{\phi}\mid
\boldsymbol{\phi}_0^{(b-1)},
\left(\boldsymbol{\Phi}_0^{(b-1)}\right)^{-1}
\right).
\end{align*}
Here, $K_{b-1}$ is the number of components in the retained state
available at the start of batch $b$, and
$r=1,\ldots,K_{b-1}$ indexes these retained parent components.
For $b\geq2$, they are precisely the candidate children retained after
pruning at batch $b-1$; for $b=1$, the single parent $r=1$ is the
initialized baseline component. The quantity $\omega_{b-1,r}$ is the
normalized mixture weight of retained parent $r$, with
$\omega_{b-1,r}\geq0$ and
$\sum_{r=1}^{K_{b-1}}\omega_{b-1,r}=1$. The factors for
$\boldsymbol{\beta}$ and $\sigma^2$ are component-specific, whereas
the AR factor $q_{\phi}^{(b-1)}$ is shared across components. The
subscript $0$ on the component parameters marks quantities stored for
use as prior hyperparameters in batch $b$, and the superscript records
the batch after which they were obtained. In particular,
$\mathbf A_{0,r}^{(b-1)}$ is the absolute precision of
$\boldsymbol{\beta}$ and is not scaled by $\sigma^2$.
This componentwise independence is deliberate rather than merely
computational. Under a conjugate normal--inverse-gamma component, the
conditional covariance of $\boldsymbol{\beta}$ is
$\sigma^2\mathbf A^{-1}$; hence, modifying the inverse-gamma state also
changes the marginal distribution of $\boldsymbol{\beta}$, even when
its conditional Gaussian parameters are retained. Using an absolute
Gaussian precision instead makes the propagated pseudo-prior reset
mechanism block-separable, so that the regression and
innovation-variance factors can be retained or discounted separately.
This choice makes $\boldsymbol{\beta}$ and $\sigma^2$ independent
within each component, while marginal dependence can still arise after
mixing over components. Equation~\eqref{eq:retained_posterior} therefore
provides the recursively propagated state.
For the PM boundary treatment, the corresponding retained regression
mean used at batch $b\geq2$ is
\begin{align}
\bar{\boldsymbol{\beta}}_{b-1}
&=
\mathbb E_{q^{(b-1)}}(\boldsymbol{\beta})
=
\sum_{r=1}^{K_{b-1}}
\omega_{b-1,r}\,
\boldsymbol{\beta}_{0,r}^{(b-1)}.
\label{eq:retained_beta_mean}
\end{align}

\subsection{Separate-Reset Pseudo-Prior and Finite-Mixture Variational Family}
\label{subsec:reset_prior_vf}

The block-separable retained state in
\eqref{eq:retained_posterior} allows the regression and
innovation-variance states to be retained or discounted independently.
We place reset coordinates on these two blocks only. This reflects the
inferential scope of the present formulation: structural changes are modeled
in the regression relationship and innovation scale, while the residual AR
structure is treated as a common dependence mechanism across regimes.
Allowing $\boldsymbol{\phi}$ to reset would make changes in residual
dependence an additional source of adaptation, complicating attribution of
the current-batch evidence to the regression and innovation-variance blocks.
The common factor $q_{\phi}^{(b-1)}$ is nevertheless updated after every
batch, so $\boldsymbol{\phi}$ continues to be learned sequentially rather
than being held fixed. Sharing this factor also allows the associated
prewhitening calculations to be reused across candidate children. Abrupt
changes in residual dependence are outside the scope of the present
formulation.

Accordingly, we treat each retained component as a parent and generate
candidate children under each retention and reset configuration. Define
\begin{align*}
\mathcal Z_1
&:=\{(0,0)\},\quad
\mathcal Z_b
:=\{0,1\}^2,
\quad b\geq2,
\end{align*}
and, for every $b\geq1$, let
\begin{align*}
\mathcal G_b
&=
\left\{
(r,z_\beta,z_{\sigma^2}):
r=1,\ldots,K_{b-1},\;
(z_\beta,z_{\sigma^2})\in\mathcal Z_b
\right\}.
\end{align*}
Thus, $\mathcal G_1=\{(1,0,0)\}$ is a singleton. Let $G_b$
denote the latent child index taking values in $\mathcal G_b$. For each
$g\in\mathcal G_b$, write
\[
g=(r_g,z_{\beta,g},z_{\sigma^2,g}),
\]
where $r_g$ denotes the parent index and
$z_{\beta,g},z_{\sigma^2,g}\in\{0,1\}$ denote the two reset indicators.
The value zero denotes retention, whereas the value one denotes a reset
implemented by discounting the inherited information. For a given parent
$r$, the four configurations $(r,0,0)$, $(r,1,0)$, $(r,0,1)$, and $(r,1,1)$
correspond to no reset, a regression reset, an innovation-variance reset,
and a joint reset, respectively.

For $b\geq2$, assign independent prior reset probabilities
$h_\beta,h_{\sigma^2}\in(0,1)$ to the two reset coordinates and define
\begin{align*}
\rho_\beta(z)
&=
h_\beta^z(1-h_\beta)^{1-z},
\quad
\rho_{\sigma^2}(z)
=
h_{\sigma^2}^z(1-h_{\sigma^2})^{1-z}.
\end{align*}
The prior weight assigned to child
$g\in\mathcal G_b$ is
\begin{align}
\tilde{\omega}_{b,g}
&=
\begin{cases}
1,
& b=1,\\
\omega_{b-1,r_g}\,
\rho_\beta(z_{\beta,g})\,
\rho_{\sigma^2}(z_{\sigma^2,g}),
& b\geq2.
\end{cases}
\label{eq:child_prior_weight}
\end{align}
For $b\geq2$, the weight in \eqref{eq:child_prior_weight} combines
the posterior weight of the parent with the prior probabilities of
its two reset coordinates. Since
$\sum_{r=1}^{K_{b-1}}\omega_{b-1,r}=1$ and
$\sum_{z=0}^1\rho_\beta(z)
=
\sum_{z=0}^1\rho_{\sigma^2}(z)=1$,
the child prior weights satisfy
$\sum_{g\in\mathcal G_b}\tilde{\omega}_{b,g}=1$. 

Throughout the batch-$b$ PM update below, we suppress the batch
subscript and write $\boldsymbol{\beta}$ for the current regression
state $\boldsymbol{\beta}_b$. For the regression state, let
$(\boldsymbol{\beta}_\star,\mathbf A_\star)$
denote a weak Gaussian baseline with
$\mathbf A_\star\succ\mathbf 0$, and let
$\eta_\beta\in(0,1]$ be the regression-discount factor.
For $g\in\mathcal G_b$, define
\begin{equation}
\begin{aligned}
\tilde{\mathbf A}_{b,g}
&=
\begin{cases}
\mathbf A_{0,r_g}^{(b-1)},
& z_{\beta,g}=0,\\
\mathbf A_\star+
\eta_\beta\mathbf A_{0,r_g}^{(b-1)},
& z_{\beta,g}=1,
\end{cases}
\\
\tilde{\boldsymbol{\beta}}_{b,g}
&=
\begin{cases}
\boldsymbol{\beta}_{0,r_g}^{(b-1)},
& z_{\beta,g}=0,\\
\tilde{\mathbf A}_{b,g}^{-1}
\left(
\mathbf A_\star\boldsymbol{\beta}_\star+
\eta_\beta
\mathbf A_{0,r_g}^{(b-1)}
\boldsymbol{\beta}_{0,r_g}^{(b-1)}
\right),
& z_{\beta,g}=1.
\end{cases}
\end{aligned}
\label{eq:child_regression_state}
\end{equation}
The $z_{\beta,g}=0$ case retains the inherited Gaussian information
as the current-batch regression pseudo-prior, whereas
$z_{\beta,g}=1$ discounts the inherited precision and
precision-weighted mean through $\eta_\beta$ and combines them with
the weak baseline. By default, we take
$(\boldsymbol{\beta}_\star,\mathbf A_\star)
=(\boldsymbol{\beta}_0^{(0)},\mathbf A_0^{(0)})$.

For the innovation-variance state, let $\eta_{\sigma^2}\in(0,1]$
be the variance-reset discount factor, with $\eta_{\sigma^2}=1$
corresponding to the no-discount limit. Define
\begin{equation}
\begin{aligned}
\tilde{\nu}_{b,g}
&=
2+
\eta_{\sigma^2}^{z_{\sigma^2,g}}
\left(
\nu_{0,r_g}^{(b-1)}-2
\right),\\
\tilde{\delta}_{b,g}
&=
\eta_{\sigma^2}^{z_{\sigma^2,g}}
\delta_{0,r_g}^{(b-1)}.
\end{aligned}
\label{eq:child_variance_state}
\end{equation}
The $z_{\sigma^2,g}=0$ case in
\eqref{eq:child_variance_state} retains the inherited inverse-gamma state. When
$z_{\sigma^2,g}=1$, the transformation discounts its effective information
by $\eta_{\sigma^2}$ while preserving the inherited mean whenever
$\nu_{0,r_g}^{(b-1)}>2$:
\begin{align*}
\frac{\tilde{\delta}_{b,g}}
     {\tilde{\nu}_{b,g}-2}
&=
\frac{
\eta_{\sigma^2}^{z_{\sigma^2,g}}\delta_{0,r_g}^{(b-1)}
}{
\eta_{\sigma^2}^{z_{\sigma^2,g}}(\nu_{0,r_g}^{(b-1)}-2)
}
=
\frac{\delta_{0,r_g}^{(b-1)}}
     {\nu_{0,r_g}^{(b-1)}-2}.
\end{align*}
This mean-preserving discount does not assume a particular direction
of variance change: it weakens the inherited information without
shifting its center, allowing the current-batch likelihood to determine
the new variance level.

The resulting child-specific pseudo-prior factors are
\begin{align*}
\tilde{\pi}_{\beta,g}^{(b)}(\boldsymbol{\beta})
&=
\mathrm N_k\!\left(
\boldsymbol{\beta}\mid
\tilde{\boldsymbol{\beta}}_{b,g},
\tilde{\mathbf A}_{b,g}^{-1}
\right),\\
\tilde{\pi}_{\sigma^2,g}^{(b)}(\sigma^2)
&=
\mathrm{IG}\!\left(
\sigma^2\mid
\tilde{\nu}_{b,g}/2,
\tilde{\delta}_{b,g}/2
\right).
\end{align*}
Combining these factors with the shared propagated AR factor gives
\begin{align*}
\tilde{\pi}_{g}^{(b)}
(\boldsymbol{\beta},\sigma^2,\boldsymbol{\phi})
&=
\tilde{\pi}_{\beta,g}^{(b)}(\boldsymbol{\beta})\,
\tilde{\pi}_{\sigma^2,g}^{(b)}(\sigma^2)\,
q_{\phi}^{(b-1)}(\boldsymbol{\phi}).
\end{align*}
Together with \eqref{eq:child_prior_weight}, this construction defines
the batch-$b$ prior hierarchically as
\begin{align*}
(\boldsymbol{\beta},\sigma^2,\boldsymbol{\phi})
\mid G_b=g
&\sim
\tilde{\pi}_{g}^{(b)},
\quad
g\in\mathcal G_b.
\end{align*}
For $b=1$, this construction reduces to the baseline prior
$\pi^{(0)}.$ 

For every $b\geq1$, under the PM boundary treatment, the hierarchical
construction induces an adaptive batchwise target. To distinguish it
from the exact posterior under the static-parameter model in
\eqref{eq:sequential_posterior}, let $\tilde p^{(b)}$ denote the
corresponding pseudo-posterior. Its density satisfies
\begin{align}
&\tilde p^{(b)}
\left(
G_b=g,
\boldsymbol{\beta},
\sigma^2,
\boldsymbol{\phi}
\mid
\mathbf{y}_{(b)},\mathbf{X}_{(b)}
\right)
\propto
\tilde{\omega}_{b,g}\,
p\!\left(
\tilde{\mathbf y}_b^\ast
\mid
\tilde{\mathbf X}_b^\ast,
\boldsymbol{\beta},
\sigma^2,
\boldsymbol{\phi}
\right)\,
\tilde{\pi}_{g}^{(b)}
(\boldsymbol{\beta},\sigma^2,\boldsymbol{\phi}),
\quad
g\in\mathcal G_b.
\label{eq:augmented_batch_target}
\end{align}
Here,
$\tilde{\mathbf y}_b^\ast$ and
$\tilde{\mathbf X}_b^\ast$
are the PM boundary-adjusted response and design matrix defined in
Section~\ref{sec:sec2}. For $b=1$, they reduce to the ordinary
prewhitened quantities $\mathbf y_1^\ast$ and $\mathbf X_1^\ast$.
We approximate the adaptive target in
\eqref{eq:augmented_batch_target} within the following finite-mixture
variational family. 
We use $\tilde q^{(b)}$ to denote the
candidate approximation on the full child set $\mathcal G_b$ before
pruning, and reserve $q^{(b)}$ for the retained approximation obtained
after pruning:
\begin{align}
&\tilde q^{(b)}
\left(
G_b=g,
\boldsymbol{\beta},
\sigma^2,
\boldsymbol{\phi}
\right)
=
\omega_{b,g}\,
q_{\beta,g}^{(b)}(\boldsymbol{\beta})\,
q_{\sigma^2,g}^{(b)}(\sigma^2)\,
q_{\phi}^{(b)}(\boldsymbol{\phi}),
\quad
g\in\mathcal G_b.
\label{eq:augmented_independent_vf}
\end{align}
Here, $\omega_{b,g}\geq0$ and
$\sum_{g\in\mathcal G_b}\omega_{b,g}=1$. The factors $q_{\beta,g}^{(b)}$, $q_{\sigma^2,g}^{(b)}$, and
$q_{\phi}^{(b)}$ are taken to be multivariate Gaussian, inverse-gamma,
and multivariate Gaussian, respectively, with $q_{\phi}^{(b)}$ shared
across all children. Their batch-specific parameterizations are introduced in
Section~\ref{subsec:subsec3.2} together with the coordinate-ascent
updates. Marginalizing \eqref{eq:augmented_independent_vf} over $G_b$
gives
\begin{align}
\tilde q^{(b)}
(\boldsymbol{\beta},\sigma^2,\boldsymbol{\phi})
&=
\left(
\sum_{g\in\mathcal G_b}
\omega_{b,g}\,
q_{\beta,g}^{(b)}(\boldsymbol{\beta})
q_{\sigma^2,g}^{(b)}(\sigma^2)
\right)
q_{\phi}^{(b)}(\boldsymbol{\phi}).
\label{eq:independent_vf_sec3}
\end{align}

The distribution $\tilde q^{(b)}$ in
\eqref{eq:independent_vf_sec3} is the candidate variational
approximation before pruning. After the batchwise optimization
converges, a subset of children is retained and its weights are
renormalized to obtain $q^{(b)}$. By construction, $q^{(b)}$ has the
same form as \eqref{eq:retained_posterior}, with $b$ replacing $b-1$,
and serves as the propagated state for batch $b+1$. The detailed
pruning, relabeling, and state-propagation rules are given in
Section~\ref{subsec:subsec3.2}.

\begin{remark}[No-reset reduction]
For $b\geq2$ and $g=(r,0,0)$, the inherited component is unchanged,
so that
$\tilde{\pi}_{(r,0,0)}^{(b)}
=
q_{\beta,r}^{(b-1)}
q_{\sigma^2,r}^{(b-1)}
q_{\phi}^{(b-1)}$.
Since $(1-h_\beta)(1-h_{\sigma^2})$ does not depend on $r$,
restricting to the no-reset children and renormalizing over the parent
index leaves the parent weights equal to $\omega_{b-1,r}$. Therefore,
\[
\sum_{r=1}^{K_{b-1}}
\omega_{b-1,r}\,
\tilde{\pi}_{(r,0,0)}^{(b)}
(\boldsymbol{\beta},\sigma^2,\boldsymbol{\phi})
=
q^{(b-1)}
(\boldsymbol{\beta},\sigma^2,\boldsymbol{\phi}).
\]
Thus, restricting to the no-reset children recovers
$q^{(b-1)}$ as the propagated pseudo-prior mixture. Under the PM
boundary treatment, however, the batch-$b$ update continues to use
the PM boundary-adjusted working likelihood defined in
Section~\ref{sec:sec2}. Hence, the no-reset reduction corresponds to
posterior recycling under the PM working likelihood and does not, in
general, coincide with the static-parameter recursion in
\eqref{eq:sequential_posterior}.
\end{remark}

\subsection{Batchwise Coordinate-Ascent Updates}
\label{subsec:subsec3.2}

We next derive the batchwise optimization and propagation steps for
the default PM implementation.
At batch $b$, the variational optimization approximates the adaptive
pseudo-posterior $\tilde p^{(b)}$ defined in
\eqref{eq:augmented_batch_target} by the pre-pruning candidate
distribution $\tilde q^{(b)}$ in
\eqref{eq:augmented_independent_vf}. For a fixed candidate set
$\mathcal G_b$, the batchwise ELBO is
\begin{equation}
\begin{aligned}
\mathcal L_b(q)
&=
\sum_{g\in\mathcal G_b}
\omega_{b,g}
\mathbb E_{
q_{\beta,g}^{(b)}
q_{\sigma^2,g}^{(b)}
q_{\phi}^{(b)}
}
\left[
\log
p\!\left(
\tilde{\mathbf y}_b^\ast
\mid
\tilde{\mathbf X}_b^\ast,
\boldsymbol{\beta},
\sigma^2,
\boldsymbol{\phi}
\right)
\right]
\\
&\quad-
\sum_{g\in\mathcal G_b}
\omega_{b,g}
\operatorname{KL}\!\left(
q_{\beta,g}^{(b)}
\,\middle\|\,
\tilde{\pi}_{\beta,g}^{(b)}
\right)
-
\sum_{g\in\mathcal G_b}
\omega_{b,g}
\operatorname{KL}\!\left(
q_{\sigma^2,g}^{(b)}
\,\middle\|\,
\tilde{\pi}_{\sigma^2,g}^{(b)}
\right)
\\
&\quad-
\operatorname{KL}\!\left(
q_{\phi}^{(b)}
\,\middle\|\,
q_{\phi}^{(b-1)}
\right)
+
\sum_{g\in\mathcal G_b}
\omega_{b,g}\log\tilde{\omega}_{b,g}
-
\sum_{g\in\mathcal G_b}
\omega_{b,g}\log\omega_{b,g}.
\end{aligned}
\label{eq:independent_mixture_elbo}
\end{equation}
We maximize \eqref{eq:independent_mixture_elbo} by coordinate ascent
over the variational factors and mixture weights. 

First, for fixed $q_{\sigma^2,g}^{(b)}$ and $q_{\phi}^{(b)}$, maximizing the
ELBO with respect to $q_{\beta,g}^{(b)}$ gives
\begin{align*}
q_{\beta,g}^{(b)}(\boldsymbol{\beta})
&\propto
\exp\Bigl\{
\mathbb E_{
q_{\sigma^2,g}^{(b)}
q_{\phi}^{(b)}
}
\left[
\log p\!\left(
\tilde{\mathbf y}_b^\ast
\mid
\tilde{\mathbf X}_b^\ast,
\boldsymbol{\beta},
\sigma^2,
\boldsymbol{\phi}
\right)
\right]
+
\log\tilde{\pi}_{\beta,g}^{(b)}(\boldsymbol{\beta})
\Bigr\}.
\end{align*}
The resulting factor is Gaussian. Its closed-form expression is given
in Proposition~\ref{prop:component_beta_update_independent}. Define
$\kappa_{b,g}
:=
\mathbb E_{q_{\sigma^2,g}^{(b)}}(\sigma^{-2}).$

\begin{proposition}[Componentwise update for the regression coefficients]
\label{prop:component_beta_update_independent}
For fixed $q_{\sigma^2,g}^{(b)}$ and $q_{\phi}^{(b)}$, the
coordinate-optimal regression factor is
\begin{align*}
q_{\beta,g}^{(b)}(\boldsymbol{\beta})
&=
\mathrm N_k\!\left(
\boldsymbol{\beta}
\mid
\mathbf m_{b,g},
\mathbf V_{b,g}^{-1}
\right),
\end{align*}
where
\begin{align*}
\mathbf V_{b,g}
&=
\tilde{\mathbf A}_{b,g}
+
\kappa_{b,g}
\mathbb E_{q_{\phi}^{(b)}}\!\left(
\tilde{\mathbf X}_b^{\ast\top}\tilde{\mathbf X}_b^\ast
\right),\\
\mathbf m_{b,g}
&=
\mathbf V_{b,g}^{-1}
\left\{
\tilde{\mathbf A}_{b,g}
\tilde{\boldsymbol{\beta}}_{b,g}
+
\kappa_{b,g}
\mathbb E_{q_{\phi}^{(b)}}\!\left(
\tilde{\mathbf X}_b^{\ast\top}\tilde{\mathbf y}_b^\ast
\right)
\right\}.
\end{align*}
\end{proposition}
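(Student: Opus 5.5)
We would begin from the standard mean-field optimality condition displayed just before the proposition. Within the batch-$b$ ELBO \eqref{eq:independent_mixture_elbo}, the factor $q_{\beta,g}^{(b)}$ appears only in the $g$-th expected log-likelihood term and in $\operatorname{KL}(q_{\beta,g}^{(b)}\|\tilde\pi_{\beta,g}^{(b)})$. Both terms carry the common weight $\omega_{b,g}$. For $\omega_{b,g}>0$ this weight factors out, so maximizing over $q_{\beta,g}^{(b)}$ amounts to maximizing
\[
\mathbb E_{q_{\beta,g}^{(b)}}\bigl[\mathbb E_{q_{\sigma^2,g}^{(b)}q_\phi^{(b)}}\log p(\tilde{\mathbf y}_b^\ast\mid\tilde{\mathbf X}_b^\ast,\boldsymbol\beta,\sigma^2,\boldsymbol\phi)+\log\tilde\pi_{\beta,g}^{(b)}(\boldsymbol\beta)\bigr]-\mathbb E_{q_{\beta,g}^{(b)}}\log q_{\beta,g}^{(b)}.
\]
By the Gibbs variational inequality, this is maximized by the normalized exponential of the bracketed term. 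We therefore only need to compute that exponent as a function of $\boldsymbol\beta$ and identify it as a Gaussian kernel.

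We would write the working log-likelihood up to constants as
\[
-\tfrac{n_b^\ast}{2}\log\sigma^2-\tfrac{1}{2\sigma^2}\|\tilde{\mathbf y}_b^\ast-\tilde{\mathbf X}_b^\ast\boldsymbol\beta\|^2 .
\]
Expanding the norm, the $\boldsymbol\beta$-dependent part is
\[
-\tfrac{1}{2\sigma^{2}}\bigl(\boldsymbol\beta^\top\tilde{\mathbf X}_b^{\ast\top}\tilde{\mathbf X}_b^\ast\boldsymbol\beta-2\boldsymbol\beta^\top\tilde{\mathbf X}_b^{\ast\top}\tilde{\mathbf y}_b^\ast\bigr).
\]
The key structural point is that $\tilde{\mathbf X}_b^\ast$ and $\tilde{\mathbf y}_b^\ast$ depend on $\boldsymbol\phi$ only, through the prewhitening \eqref{eq:pre-whiten} and the PM boundary correction. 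The latter involves $\bar{\boldsymbol\beta}_{b-1}$, which is held fixed and is not the current $\boldsymbol\beta$. Meanwhile, $\sigma^{-2}$ depends on $\sigma^2$ only. Because $q_{\sigma^2,g}^{(b)}q_\phi^{(b)}$ is a product measure, the expectation of each term factorizes. This gives $\kappa_{b,g}\,\mathbb E_{q_\phi^{(b)}}(\tilde{\mathbf X}_b^{\ast\top}\tilde{\mathbf X}_b^\ast)$ in the quadratic term and $\kappa_{b,g}\,\mathbb E_{q_\phi^{(b)}}(\tilde{\mathbf X}_b^{\ast\top}\tilde{\mathbf y}_b^\ast)$ in the linear term. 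The $\log\sigma^2$ term does not involve $\boldsymbol\beta$ and drops into the constant.

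Next we add $\log\tilde\pi_{\beta,g}^{(b)}(\boldsymbol\beta)=-\tfrac12\boldsymbol\beta^\top\tilde{\mathbf A}_{b,g}\boldsymbol\beta+\boldsymbol\beta^\top\tilde{\mathbf A}_{b,g}\tilde{\boldsymbol\beta}_{b,g}+\text{const}$, using \eqref{eq:child_regression_state}. Collecting terms gives the exponent $-\tfrac12\boldsymbol\beta^\top\mathbf V_{b,g}\boldsymbol\beta+\boldsymbol\beta^\top\mathbf V_{b,g}\mathbf m_{b,g}$, with $\mathbf V_{b,g}$ and $\mathbf m_{b,g}$ exactly as stated. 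Completing the square yields the $\mathrm N_k(\mathbf m_{b,g},\mathbf V_{b,g}^{-1})$ kernel.

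The one point needing care, and the main potential obstacle, is validity. We must show that $\mathbf V_{b,g}\succ\mathbf 0$ and that the moments are finite, so that the kernel is normalizable.
\begin{itemize}
\item \emph{Positive definiteness.} $\tilde{\mathbf A}_{b,g}\succ\mathbf 0$ by induction: it equals either the inherited precision, which is positive definite since the base case $\mathbf A_0^{(0)}=10^{-6}\mathbf I_k$ is and each update adds a PSD term, or $\mathbf A_\star+\eta_\beta\mathbf A_{0,r}^{(b-1)}$ with $\mathbf A_\star\succ\mathbf 0$. The added term is $\kappa_{b,g}>0$ times the expectation of a Gram matrix, hence PSD.
\item \emph{Finite moments.} $\mathbb E_{q_\phi^{(b)}}(\tilde{\mathbf X}_b^{\ast\top}\tilde{\mathbf X}_b^\ast)$ involves only first and second moments of the Gaussian $q_\phi^{(b)}$, because $\tilde{\mathbf x}_t^\ast$ is affine in $\boldsymbol\phi$. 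These moments are therefore finite and computable in closed form.
\item \emph{$\kappa_{b,g}$.} For an inverse-gamma factor, $\kappa_{b,g}$ equals shape over scale, which is finite and positive.
\end{itemize}
Finally, we would note the degenerate case $\omega_{b,g}=0$. There the factor does not affect the ELBO, and the stated form is taken by convention as the limit of the positive-weight case.
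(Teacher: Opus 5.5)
Your proposal is correct and follows essentially the same route as the paper's proof. You take the expectation of the log PM target under $q_{\sigma^2,g}^{(b)}q_\phi^{(b)}$, use independence to pull out $\kappa_{b,g}$ and the $\boldsymbol\phi$-moments $\mathbb E_{q_\phi^{(b)}}(\tilde{\mathbf X}_b^{\ast\top}\tilde{\mathbf X}_b^\ast)$ and $\mathbb E_{q_\phi^{(b)}}(\tilde{\mathbf X}_b^{\ast\top}\tilde{\mathbf y}_b^\ast)$, add the Gaussian pseudo-prior, and complete the square; your extra checks that $\omega_{b,g}$ factors out and that $\mathbf V_{b,g}\succ\mathbf 0$ are sound and make explicit what the paper leaves implicit.
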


Next, given Proposition~\ref{prop:component_beta_update_independent}, the
coordinate update for the innovation-variance factor is
\begin{align*}
q_{\sigma^2,g}^{(b)}(\sigma^2)
&\propto
\exp\Bigl\{
\mathbb E_{
q_{\beta,g}^{(b)}
q_{\phi}^{(b)}
}
\left[
\log p\!\left(
\tilde{\mathbf y}_b^\ast
\mid
\tilde{\mathbf X}_b^\ast,
\boldsymbol{\beta},
\sigma^2,
\boldsymbol{\phi}
\right)
\right]
+
\log\tilde{\pi}_{\sigma^2,g}^{(b)}(\sigma^2)
\Bigr\}.
\end{align*}
The resulting factor is inverse-gamma.

\begin{proposition}[Componentwise update for the innovation variance]
\label{prop:component_sigma_update_independent}
For fixed $q_{\beta,g}^{(b)}$ and $q_{\phi}^{(b)}$, the
coordinate-optimal innovation-variance factor is
\begin{align*}
q_{\sigma^2,g}^{(b)}(\sigma^2)
&=
\mathrm{IG}\!\left(
\sigma^2
\mid
\alpha_{b,g}/2,
\lambda_{b,g}/2
\right),
\end{align*}
where
\begin{equation}
\begin{aligned}
\alpha_{b,g}
&=
n_b^\ast
+
\tilde{\nu}_{b,g},
\\
\lambda_{b,g}
&=
\tilde{\delta}_{b,g}
+
\mathbb E_{q_{\phi}^{(b)}}\!\left(
\tilde{\mathbf y}_b^{\ast\top}\tilde{\mathbf y}_b^\ast
\right)
-
2\mathbf m_{b,g}^{\top}
\mathbb E_{q_{\phi}^{(b)}}\!\left(
\tilde{\mathbf X}_b^{\ast\top}\tilde{\mathbf y}_b^\ast
\right)
\\
&\quad+
\operatorname{tr}\!\left[
\mathbb E_{q_{\phi}^{(b)}}\!\left(
\tilde{\mathbf X}_b^{\ast\top}\tilde{\mathbf X}_b^\ast
\right)
\left(
\mathbf V_{b,g}^{-1}
+
\mathbf m_{b,g}\mathbf m_{b,g}^{\top}
\right)
\right].
\end{aligned}
\label{eq:sigma_update_independent}
\end{equation}
Consequently,
$\kappa_{b,g}=\alpha_{b,g}/\lambda_{b,g}$.
\end{proposition}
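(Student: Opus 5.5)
The plan is to start from the coordinate-optimal form displayed just before the statement,
\[
\log q_{\sigma^2,g}^{(b)}(\sigma^2)=\mathbb E_{q_{\beta,g}^{(b)}q_{\phi}^{(b)}}\bigl[\log p(\tilde{\mathbf y}_b^\ast\mid\tilde{\mathbf X}_b^\ast,\boldsymbol\beta,\sigma^2,\boldsymbol\phi)\bigr]+\log\tilde\pi_{\sigma^2,g}^{(b)}(\sigma^2)+\text{const}.
\]
This form follows from the ELBO \eqref{eq:independent_mixture_elbo}: for fixed $g$ with $\omega_{b,g}>0$, the only terms involving $q_{\sigma^2,g}^{(b)}$ are $\omega_{b,g}$ times the expected log-likelihood and $\omega_{b,g}$ times the negative KL to $\tilde\pi_{\sigma^2,g}^{(b)}$. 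The weight factors out, and the standard mean-field argument applies. I would then write this expression as a function of $\sigma^2$ and recognize an inverse-gamma kernel.

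\textbf{Step 1: the log-likelihood.} Under the PM working likelihood, the log-likelihood is
\[
-\tfrac{n_b^\ast}{2}\log\sigma^2-\tfrac{1}{2\sigma^2}\|\tilde{\mathbf y}_b^\ast-\tilde{\mathbf X}_b^\ast\boldsymbol\beta\|^2
\]
plus constants. It is important that $\tilde{\mathbf y}_b^\ast$ and $\tilde{\mathbf X}_b^\ast$ depend on $\boldsymbol\phi$ but not on $\boldsymbol\beta$. Here $\bar{\boldsymbol\beta}_{b-1}$ is held fixed, so the boundary correction is a deterministic function of $\boldsymbol\phi$ and stored raw data.

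\textbf{Step 2: the prior term.} The log pseudo-prior contributes
\[
-(\tilde\nu_{b,g}/2+1)\log\sigma^2-\tilde\delta_{b,g}/(2\sigma^2).
\]
Collecting powers gives shape $(n_b^\ast+\tilde\nu_{b,g})/2=\alpha_{b,g}/2$ and scale $\bigl(\tilde\delta_{b,g}+\mathbb E\|\tilde{\mathbf y}_b^\ast-\tilde{\mathbf X}_b^\ast\boldsymbol\beta\|^2\bigr)/2$.

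\textbf{Step 3: the expected squared residual.} This is the only real computation. Expand the square as
\[
\tilde{\mathbf y}^{\ast\top}\tilde{\mathbf y}^\ast-2\boldsymbol\beta^\top\tilde{\mathbf X}^{\ast\top}\tilde{\mathbf y}^\ast+\operatorname{tr}\bigl(\tilde{\mathbf X}^{\ast\top}\tilde{\mathbf X}^\ast\boldsymbol\beta\boldsymbol\beta^\top\bigr).
\]
Because the variational family factorizes $q_{\beta,g}^{(b)}q_{\phi}^{(b)}$, the expectation of each cross term is a product of expectations.

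Use $\mathbb E_{q_{\beta,g}}\boldsymbol\beta=\mathbf m_{b,g}$ and $\mathbb E_{q_{\beta,g}}\boldsymbol\beta\boldsymbol\beta^\top=\mathbf V_{b,g}^{-1}+\mathbf m_{b,g}\mathbf m_{b,g}^\top$ from Proposition~\ref{prop:component_beta_update_independent}. Together with linearity of trace and expectation, this yields exactly the three $\boldsymbol\phi$-moment terms in \eqref{eq:sigma_update_independent}.

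I would note that the $\boldsymbol\phi$-moments $\mathbb E_{q_\phi}(\tilde{\mathbf X}^{\ast\top}\tilde{\mathbf X}^\ast)$ and the others are finite. The entries are quadratic polynomials in $\boldsymbol\phi$ under a Gaussian $q_\phi$, so no explicit formula is needed for this proposition.

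\textbf{Step 4: positivity and the identity for $\kappa_{b,g}$.} To confirm that the normalized result is a proper $\mathrm{IG}(\alpha_{b,g}/2,\lambda_{b,g}/2)$, check the two parameters.

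For the shape, $\alpha_{b,g}>0$ because $\tilde\nu_{b,g}\ge 2-\eta(\cdot)$. More simply, $\tilde\nu_{b,g}>0$ whenever $\nu_{0,r}>0$, which holds inductively.

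For the scale, $\lambda_{b,g}>0$ because the expected squared residual is nonnegative and $\tilde\delta_{b,g}>0$.

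Finally, for $\sigma^2\sim\mathrm{IG}(a,d)$ we have $\sigma^{-2}\sim\mathrm{Gamma}(a,\text{rate } d)$, so $\mathbb E(\sigma^{-2})=a/d$. With $a=\alpha_{b,g}/2$ and $d=\lambda_{b,g}/2$, this gives $\kappa_{b,g}=\alpha_{b,g}/\lambda_{b,g}$.

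I expect Step 3 to be the only mildly delicate point. It requires care that the boundary-adjusted quantities depend on $\boldsymbol\phi$ alone, so that the factorization of the expectation is legitimate.
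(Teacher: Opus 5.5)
Your proposal is correct and follows essentially the same route as the paper. Like the paper, you write $\log q_{\sigma^2,g}^{(b)}$ as the expected log working likelihood plus the log inverse-gamma pseudo-prior, evaluate the expected residual sum of squares by factorizing over $q_{\beta,g}^{(b)}q_{\phi}^{(b)}$ with $\mathbb E(\boldsymbol\beta\boldsymbol\beta^\top)=\mathbf V_{b,g}^{-1}+\mathbf m_{b,g}\mathbf m_{b,g}^\top$, and read off the shape $\alpha_{b,g}/2$ and scale $\lambda_{b,g}/2$. Your positivity checks and the Gamma-moment derivation of $\kappa_{b,g}=\alpha_{b,g}/\lambda_{b,g}$ spell out details the paper leaves implicit, though you should drop the garbled bound $\tilde\nu_{b,g}\ge 2-\eta(\cdot)$, since your inductive argument that $\tilde\nu_{b,g}>0$ whenever $\nu_{0,r}>0$ already suffices.
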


The expectations appearing in
Propositions~\ref{prop:component_beta_update_independent} and
\ref{prop:component_sigma_update_independent} involve
$\boldsymbol{\phi}$ through the prewhitened response and design matrix.
To make this dependence explicit, we collect the $p$ lags associated
with each usable observation into lag matrices.
The resulting lag-matrix formulation uses the stored boundary lags
$\{(y_{t_b-j},\mathbf x_{t_b-j})\}_{j=1}^p$ and yields closed-form
expressions for the required expectations under
$q_{\phi}^{(b)}$.
Define the lagged-response matrix
$\mathbf M_{y,b}\in\mathbb R^{n_b^\ast\times p}$ and the lagged
covariate matrices
$\mathbf F_{b,\ell}\in\mathbb R^{n_b^\ast\times p}$,
$\ell=1,\ldots,k$, entrywise by
\begin{align*}
[\mathbf M_{y,b}]_{ij}
&=
\begin{cases}
y_{p+i-j},
& b=1,\\
y_{t_b+i-j-1},
& b\geq2,\ j<i,\\
y_{t_b+i-j-1}
-
\mathbf x_{t_b+i-j-1}^{\top}
\bar{\boldsymbol{\beta}}_{b-1},
& b\geq2,\ j\geq i,
\end{cases}
\\
[\mathbf F_{b,\ell}]_{ij}
&=
\begin{cases}
x_{p+i-j,\ell},
& b=1,\\
x_{t_b+i-j-1,\ell},
& b\geq2,\ j<i,\\
0,
& b\geq2,\ j\geq i.
\end{cases}
\end{align*}
for $i=1,\ldots,n_b^\ast$ and $j=1,\ldots,p$.
Here, $j\geq i$ corresponds to lags crossing the
batch boundary; their regression contribution is absorbed into
$\mathbf M_{y,b}$ through $\bar{\boldsymbol{\beta}}_{b-1}$, leaving
the corresponding entries of $\mathbf F_{b,\ell}$ equal to zero.
Let
\begin{align*}
\mathbf F_b
&=
\left[
\mathbf F_{b,1}\ 
\mathbf F_{b,2}\ 
\cdots\ 
\mathbf F_{b,k}
\right]
\in\mathbb R^{n_b^\ast\times pk},
\quad
\boldsymbol{\Phi}
=
\mathbf I_k\otimes\boldsymbol{\phi}
\in\mathbb R^{pk\times k},
\end{align*}
where $\otimes$ denotes the Kronecker product.
The PM boundary-adjusted response and design matrix can then be written as
\begin{align}
\tilde{\mathbf y}_b^\ast
&=
\mathbf y_{b\mid T_b^\ast}
-
\mathbf M_{y,b}\boldsymbol{\phi},
\quad
\tilde{\mathbf X}_b^\ast
=
\mathbf X_{b\mid T_b^\ast}
-
\mathbf F_b\boldsymbol{\Phi}. \nonumber
\end{align}

\begin{proposition}[Expected PM boundary-adjusted sufficient statistics]
\label{prop:phi_expectations}
For mini-batch $b$, under
$q_{\phi}^{(b)}(\boldsymbol{\phi})
=
\mathrm N_p(\boldsymbol{\phi}\mid\mathbf c_b,\mathbf D_b^{-1})$,
the expected prewhitened moments satisfy
\begin{align*}
\mathbb E_{q_{\phi}^{(b)}}\!\left[
\tilde{\mathbf X}_b^{\ast\top}\tilde{\mathbf X}_b^\ast
\right]
& =
\left(
\mathbf X_{b\mid T_b^\ast}
-
\mathbf F_b\bar{\boldsymbol{\Phi}}_b
\right)^\top
\left(
\mathbf X_{b\mid T_b^\ast}
-
\mathbf F_b\bar{\boldsymbol{\Phi}}_b
\right)
+
\left[
\operatorname{tr}\!\left(
\mathbf F_{b,i}^\top
\mathbf F_{b,j}
\mathbf D_b^{-1}
\right)
\right]_{i,j=1}^k,\\
\mathbb E_{q_{\phi}^{(b)}}\!\left[
\tilde{\mathbf X}_b^{\ast\top}\tilde{\mathbf y}_b^\ast
\right]
& =
\left(
\mathbf X_{b\mid T_b^\ast}
-
\mathbf F_b\bar{\boldsymbol{\Phi}}_b
\right)^\top
\left(
\mathbf y_{b\mid T_b^\ast}
-
\mathbf M_{y,b}\mathbf c_b
\right)
+
\left[
\operatorname{tr}\!\left(
\mathbf F_{b,i}^\top
\mathbf M_{y,b}
\mathbf D_b^{-1}
\right)
\right]_{i=1}^k,\\
\mathbb E_{q_{\phi}^{(b)}}\!\left[
\tilde{\mathbf y}_b^{\ast\top}\tilde{\mathbf y}_b^\ast
\right]
& =
\left(
\mathbf y_{b\mid T_b^\ast}
-
\mathbf M_{y,b}\mathbf c_b
\right)^\top
\left(
\mathbf y_{b\mid T_b^\ast}
-
\mathbf M_{y,b}\mathbf c_b
\right)
+
\operatorname{tr}\!\left(
\mathbf M_{y,b}^\top
\mathbf M_{y,b}
\mathbf D_b^{-1}
\right).
\end{align*}
Here,
$\bar{\boldsymbol{\Phi}}_b
=
\mathbf I_k\otimes\mathbf c_b
\in\mathbb R^{pk\times k}$. The first bracketed trace term denotes
the $k\times k$ matrix indexed by $(i,j)$, and the second denotes the
$k$-vector indexed by $i$.
\end{proposition}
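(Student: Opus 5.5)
The computation rests on the affine representation $\tilde{\mathbf y}_b^\ast=\mathbf y_{b\mid T_b^\ast}-\mathbf M_{y,b}\boldsymbol{\phi}$ and $\tilde{\mathbf X}_b^\ast=\mathbf X_{b\mid T_b^\ast}-\mathbf F_b\boldsymbol{\Phi}$ stated just before the proposition. Before using it, I will check entrywise that it reproduces the PM definitions of $\tilde y_t^\ast$ and $\tilde{\mathbf x}_t^\ast$ from Section~\ref{sec:sec2}:
\begin{itemize}
\item For lags with $j<i$, the lag $t-j$ lies inside batch $b$, and the entries give the ordinary prewhitening terms.
\item For $j\geq i$, the lag crosses the boundary. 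The covariate contribution $\phi_j\mathbf x_{t-j}$ is added back in $\tilde{\mathbf x}_t^\ast$, so it cancels. This matches $[\mathbf F_{b,\ell}]_{ij}=0$.
\item In the same case, the response receives $-\phi_j(y_{t-j}-\mathbf x_{t-j}^\top\bar{\boldsymbol\beta}_{b-1})$, which matches $[\mathbf M_{y,b}]_{ij}$.
\end{itemize}
I also note that the $\ell$-th column of $\mathbf F_b\boldsymbol\Phi$ equals $\mathbf F_{b,\ell}\boldsymbol\phi$, since $\boldsymbol\Phi=\mathbf I_k\otimes\boldsymbol\phi$. Hence $\tilde{\mathbf X}_b^\ast$ has columns $\mathbf X_{\cdot\ell}-\mathbf F_{b,\ell}\boldsymbol\phi$, each affine in $\boldsymbol\phi$.

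With this in hand, the expectations are second moments of affine functions of a Gaussian vector. I will write $\boldsymbol\phi=\mathbf c_b+\boldsymbol\xi$, where $\mathbb E\boldsymbol\xi=\mathbf 0$ and $\mathbb E\boldsymbol\xi\boldsymbol\xi^\top=\mathbf D_b^{-1}$. The $\ell$-th column of $\tilde{\mathbf X}_b^\ast$ then becomes $(\mathbf X_{\cdot\ell}-\mathbf F_{b,\ell}\mathbf c_b)-\mathbf F_{b,\ell}\boldsymbol\xi$, and $\tilde{\mathbf y}_b^\ast=(\mathbf y-\mathbf M_{y,b}\mathbf c_b)-\mathbf M_{y,b}\boldsymbol\xi$.

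For a generic pair of vectors $\mathbf a-\mathbf P\boldsymbol\xi$ and $\mathbf d-\mathbf Q\boldsymbol\xi$, the expectation of the inner product is
\[
\mathbb E\bigl[(\mathbf a-\mathbf P\boldsymbol\xi)^\top(\mathbf d-\mathbf Q\boldsymbol\xi)\bigr]=\mathbf a^\top\mathbf d+\operatorname{tr}(\mathbf P^\top\mathbf Q\mathbf D_b^{-1}).
\]
The cross terms vanish because they are linear in $\boldsymbol\xi$. The quadratic term reduces to the trace by the identity $\mathbb E[\boldsymbol\xi^\top\mathbf P^\top\mathbf Q\boldsymbol\xi]=\operatorname{tr}(\mathbf P^\top\mathbf Q\,\mathbb E\boldsymbol\xi\boldsymbol\xi^\top)$.

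I will apply this identity three times:
\begin{itemize}
\item With $(\mathbf P,\mathbf Q)=(\mathbf F_{b,i},\mathbf F_{b,j})$, it gives the $(i,j)$ entry of $\mathbb E[\tilde{\mathbf X}_b^{\ast\top}\tilde{\mathbf X}_b^\ast]$.
\item With $(\mathbf F_{b,i},\mathbf M_{y,b})$, it gives the $i$-th entry of $\mathbb E[\tilde{\mathbf X}_b^{\ast\top}\tilde{\mathbf y}_b^\ast]$.
\item With $(\mathbf M_{y,b},\mathbf M_{y,b})$, it gives $\mathbb E[\tilde{\mathbf y}_b^{\ast\top}\tilde{\mathbf y}_b^\ast]$.
\end{itemize}
The mean parts then assemble into the matrix products in the statement, using $\mathbf F_b\bar{\boldsymbol\Phi}_b$ with $\bar{\boldsymbol\Phi}_b=\mathbf I_k\otimes\mathbf c_b$.

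The algebra itself is routine. The main obstacle I expect is the bookkeeping in the first step: matching the index conventions, such as $t_b+i-j-1$, and the boundary case $j\geq i$ to the PM definitions. For $b=1$ the check is simpler, since the boundary adjustment is absent. One further point needs care. $\mathbf M_{y,b}$ depends on $\bar{\boldsymbol\beta}_{b-1}$, which is fixed during the batch-$b$ update and does not depend on $\boldsymbol\phi$. This is what keeps the representation affine in $\boldsymbol\phi$ with deterministic coefficient matrices, so that the identity above applies.
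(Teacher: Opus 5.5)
Your proposal is correct and follows essentially the same route as the paper. The paper expands the products of the affine forms in $\boldsymbol\phi$ and applies $\mathbb E(\boldsymbol\phi^\top\mathbf A\boldsymbol\phi)=\mathbf c_b^\top\mathbf A\mathbf c_b+\operatorname{tr}(\mathbf A\mathbf D_b^{-1})$ columnwise, which is equivalent to your centering $\boldsymbol\phi=\mathbf c_b+\boldsymbol\xi$; your entrywise check that $\tilde{\mathbf y}_b^\ast=\mathbf y_{b\mid T_b^\ast}-\mathbf M_{y,b}\boldsymbol\phi$ and $\tilde{\mathbf X}_b^\ast=\mathbf X_{b\mid T_b^\ast}-\mathbf F_b\boldsymbol\Phi$ match the PM definitions (including the boundary case $j\geq i$) is a correct extra verification that the paper takes as given.
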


Using the closed-form expressions in Proposition~\ref{prop:phi_expectations},
the expectations in
Propositions~\ref{prop:component_beta_update_independent}
and~\ref{prop:component_sigma_update_independent}
are evaluated under the current shared AR factor and reused across candidate children.

The common AR factor is then updated according to
\begin{align*}
q_{\phi}^{(b)}(\boldsymbol{\phi})
&\propto
\exp\Biggl\{
\sum_{g\in\mathcal G_b}
\omega_{b,g}
\mathbb E_{
q_{\beta,g}^{(b)}
q_{\sigma^2,g}^{(b)}
}
\left[
\log p\!\left(
\tilde{\mathbf y}_b^\ast
\mid
\tilde{\mathbf X}_b^\ast,
\boldsymbol{\beta},
\sigma^2,
\boldsymbol{\phi}
\right)
\right]
+
\log q_{\phi}^{(b-1)}(\boldsymbol{\phi})
\Biggr\}.
\end{align*}
The resulting coordinate update is multivariate Gaussian. To state it,
define the component-specific lagged residual design matrix
$\mathbf E_{b,g}\in\mathbb R^{n_b^\ast\times p}$ entrywise by
\begin{align*}
[\mathbf E_{b,g}]_{ij}
&=
\begin{cases}
y_{p+i-j}
-
\mathbf x_{p+i-j}^\top\mathbf m_{b,g},
& b=1,\\
y_{t_b+i-j-1}
-
\mathbf x_{t_b+i-j-1}^\top\mathbf m_{b,g},
& b\geq2,\ j<i,\\
y_{t_b+i-j-1}
-
\mathbf x_{t_b+i-j-1}^\top
\bar{\boldsymbol{\beta}}_{b-1},
& b\geq2,\ j\geq i.
\end{cases}
\end{align*}
for $i=1,\ldots,n_b^\ast$ and $j=1,\ldots,p$.

\begin{proposition}[Mixture-weighted update for the AR coefficients]
\label{prop:vb_phi}
For fixed component factors and mixture weights, the coordinate-optimal
common AR factor is
\begin{align*}
q_{\phi}^{(b)}(\boldsymbol{\phi})
&=
\mathrm N_p\!\left(
\boldsymbol{\phi}
\mid
\mathbf c_b,
\mathbf D_b^{-1}
\right),
\end{align*}
where
\begin{align*}
\mathbf D_b
&=
\boldsymbol{\Phi}_0^{(b-1)}
+
\sum_{g\in\mathcal G_b}
\omega_{b,g}
\kappa_{b,g}
\left\{
\mathbf E_{b,g}^{\top}\mathbf E_{b,g}
+
\sum_{i=1}^k\sum_{j=1}^k
\left[
\mathbf V_{b,g}^{-1}
\right]_{ij}
\mathbf F_{b,i}^{\top}\mathbf F_{b,j}
\right\},
\\
\mathbf c_b
&=
\mathbf D_b^{-1}
\Biggl\{
\boldsymbol{\Phi}_0^{(b-1)}
\boldsymbol{\phi}_0^{(b-1)}
\\
&\quad+
\sum_{g\in\mathcal G_b}
\omega_{b,g}
\kappa_{b,g}
\left[
\mathbf E_{b,g}^{\top}
\left(
\mathbf y_{b\mid T_b^\ast}
-
\mathbf X_{b\mid T_b^\ast}\mathbf m_{b,g}
\right)
+
\sum_{i=1}^k\sum_{j=1}^k
\left[
\mathbf V_{b,g}^{-1}
\right]_{ij}
\mathbf F_{b,i}^{\top}
\mathbf X_{b\mid T_b^\ast}\mathbf e_j
\right]
\Biggr\}.
\end{align*}
Here, $\mathbf e_j\in\mathbb R^k$ denotes the $j$th standard basis
vector, and $[\mathbf V_{b,g}^{-1}]_{ij}$ denotes the $(i,j)$th entry of
the covariance matrix of $q_{\beta,g}^{(b)}$.
\end{proposition}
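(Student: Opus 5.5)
The plan is to write the mixture-weighted expected log-likelihood as an explicit quadratic form in $\boldsymbol{\phi}$, add $\log q_{\phi}^{(b-1)}$, and complete the square. The factor will then be Gaussian by inspection, with the stated precision and mean.

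\emph{Step 1: residual as an affine function of $\boldsymbol{\phi}$.} For fixed $\boldsymbol{\beta}$, use the lag-matrix representation $\tilde{\mathbf y}_b^\ast=\mathbf y_{b\mid T_b^\ast}-\mathbf M_{y,b}\boldsymbol{\phi}$ and $\tilde{\mathbf X}_b^\ast=\mathbf X_{b\mid T_b^\ast}-\mathbf F_b\boldsymbol{\Phi}$. The key identity is
\[
\mathbf F_b(\mathbf I_k\otimes\boldsymbol{\phi})\boldsymbol{\beta}=\sum_{\ell=1}^k\beta_\ell\mathbf F_{b,\ell}\boldsymbol{\phi}.
\]
Define $\mathbf E(\boldsymbol{\beta}):=\mathbf M_{y,b}-\sum_{\ell}\beta_\ell\mathbf F_{b,\ell}$. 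The residual is then
\[
\tilde{\mathbf y}_b^\ast-\tilde{\mathbf X}_b^\ast\boldsymbol{\beta}=\mathbf r(\boldsymbol{\beta})-\mathbf E(\boldsymbol{\beta})\boldsymbol{\phi},\qquad \mathbf r(\boldsymbol{\beta}):=\mathbf y_{b\mid T_b^\ast}-\mathbf X_{b\mid T_b^\ast}\boldsymbol{\beta}.
\]
I will check entrywise that $\mathbf E(\mathbf m_{b,g})=\mathbf E_{b,g}$. For $b=1$ and for within-batch lags $j<i$, the entry is $y-\mathbf x^\top\mathbf m_{b,g}$. For boundary lags $j\geq i$, $\mathbf F_{b,\ell}$ vanishes, so the entry is the $\bar{\boldsymbol{\beta}}_{b-1}$-adjusted value already stored in $\mathbf M_{y,b}$.

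\emph{Step 2: expectation within a component.} Up to terms constant in $\boldsymbol{\phi}$, the log-likelihood is $-\tfrac{1}{2\sigma^2}\|\mathbf r(\boldsymbol{\beta})-\mathbf E(\boldsymbol{\beta})\boldsymbol{\phi}\|^2$. Under $q_{\beta,g}^{(b)}q_{\sigma^2,g}^{(b)}$, $\boldsymbol{\beta}$ and $\sigma^2$ are independent, so the expectation factorizes into $\kappa_{b,g}$ times a $\boldsymbol{\beta}$-expectation. Write $\boldsymbol{\beta}=\mathbf m_{b,g}+\boldsymbol{\epsilon}$ with $\mathbb E\boldsymbol{\epsilon}=\mathbf 0$ and $\operatorname{Cov}\boldsymbol{\epsilon}=\mathbf V_{b,g}^{-1}$. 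This gives
\[
\mathbf E(\boldsymbol{\beta})=\mathbf E_{b,g}-\sum_i\epsilon_i\mathbf F_{b,i},\qquad \mathbf r(\boldsymbol{\beta})=\mathbf r(\mathbf m_{b,g})-\mathbf X_{b\mid T_b^\ast}\boldsymbol{\epsilon}.
\]
Taking second moments:
\begin{itemize}
\item the quadratic coefficient is $\mathbb E[\mathbf E^\top\mathbf E]=\mathbf E_{b,g}^\top\mathbf E_{b,g}+\sum_{i,j}[\mathbf V_{b,g}^{-1}]_{ij}\mathbf F_{b,i}^\top\mathbf F_{b,j}$;
\item the linear coefficient is $\mathbb E[\mathbf E^\top\mathbf r]=\mathbf E_{b,g}^\top\mathbf r(\mathbf m_{b,g})+\sum_{i,j}[\mathbf V_{b,g}^{-1}]_{ij}\mathbf F_{b,i}^\top\mathbf X_{b\mid T_b^\ast}\mathbf e_j$.
\end{itemize}
In the linear coefficient, the two minus signs in the $\boldsymbol{\epsilon}$-terms multiply to a plus.

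\emph{Step 3: combine and complete the square.} Weight each component term by $\omega_{b,g}$, sum over $\mathcal G_b$, and add
\[
\log q_{\phi}^{(b-1)}=-\tfrac12(\boldsymbol{\phi}-\boldsymbol{\phi}_0^{(b-1)})^\top\boldsymbol{\Phi}_0^{(b-1)}(\boldsymbol{\phi}-\boldsymbol{\phi}_0^{(b-1)})+\text{const}.
\]
The exponent becomes $-\tfrac12\boldsymbol{\phi}^\top\mathbf D_b\boldsymbol{\phi}+\boldsymbol{\phi}^\top(\mathbf D_b\mathbf c_b)$, with $\mathbf D_b$ and $\mathbf c_b$ exactly as stated. $\mathbf D_b$ is positive definite because $\boldsymbol{\Phi}_0^{(b-1)}\succ\mathbf 0$ and every added term is positive semidefinite (each is an expected Gram matrix). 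Hence the coordinate-optimal factor is $\mathrm N_p(\mathbf c_b,\mathbf D_b^{-1})$.

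The main obstacle is the Step 1 bookkeeping. One must verify that the PM boundary adjustment is fully absorbed, so that the $\boldsymbol{\phi}$-coefficient is exactly $\mathbf E(\boldsymbol{\beta})$ with no leftover $\bar{\boldsymbol{\beta}}_{b-1}$ terms outside $\mathbf M_{y,b}$, and that the Kronecker identity holds with the stated ordering of $\mathbf F_b$. Once that is settled, Steps 2 and 3 are routine Gaussian moment calculations.
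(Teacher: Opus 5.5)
Your proposal is correct and follows essentially the same route as the paper's proof. Both write the residual as $\mathbf r(\boldsymbol\beta)-\mathbf E(\boldsymbol\beta)\boldsymbol\phi$, expand around $\mathbf m_{b,g}$ using the Gaussian second moments (the paper packages the fluctuation as $(\mathbf X_{b\mid T_b^\ast}-\mathbf F_b\boldsymbol\Phi)\boldsymbol\delta_{b,g}$ plus a trace term, which is the same computation), weight by $\omega_{b,g}\kappa_{b,g}$, add $\log q_\phi^{(b-1)}$, and complete the square; your positive-definiteness check on $\mathbf D_b$ is a small addition that the paper leaves implicit.
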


Finally, we update the variational factor for the latent child index
$G_b$, which is categorical with probabilities
$q_G^{(b)}(G_b=g)=\omega_{b,g}$ for $g\in\mathcal G_b$.
To express its coordinate update, define the component-local
variational score
\begin{align}
J_{b,g}
&=
\mathbb E_{
q_{\beta,g}^{(b)}
q_{\sigma^2,g}^{(b)}
q_{\phi}^{(b)}
}
\left[
\log
p\!\left(
\tilde{\mathbf y}_b^\ast
\mid
\tilde{\mathbf X}_b^\ast,
\boldsymbol{\beta},
\sigma^2,
\boldsymbol{\phi}
\right)
\right]
-
\operatorname{KL}\!\left(
q_{\beta,g}^{(b)}
\,\middle\|\,
\tilde{\pi}_{\beta,g}^{(b)}
\right)
-
\operatorname{KL}\!\left(
q_{\sigma^2,g}^{(b)}
\,\middle\|\,
\tilde{\pi}_{\sigma^2,g}^{(b)}
\right).
\label{eq:J_bg_independent}
\end{align}
Then the coordinate-optimal categorical factor satisfies
\begin{align*}
q_G^{(b)}(G_b=g)
&\propto
\tilde{\omega}_{b,g}\exp(J_{b,g}),
\quad
g\in\mathcal G_b.
\end{align*}

\begin{proposition}[Mixture-weight update]
\label{prop:mixture_weight_update}
For fixed component-specific factors
$\{q_{\beta,g}^{(b)},q_{\sigma^2,g}^{(b)}\}_{g\in\mathcal G_b}$
and fixed shared AR factor $q_{\phi}^{(b)}$, the coordinate-optimal
variational factor for $G_b$ is categorical with probabilities
\begin{align}
\omega_{b,g}
&=
\frac{
\tilde{\omega}_{b,g}\exp(J_{b,g})
}{
\displaystyle
\sum_{j\in\mathcal G_b}
\tilde{\omega}_{b,j}\exp(J_{b,j})
},
\quad
g\in\mathcal G_b.
\label{eq:weight_update_independent}
\end{align}
\end{proposition}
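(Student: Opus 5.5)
The plan is to isolate the terms of the batchwise ELBO in \eqref{eq:independent_mixture_elbo} that depend on the weights $\{\omega_{b,g}\}_{g\in\mathcal G_b}$ and then maximize over the probability simplex. With the component factors $\{q_{\beta,g}^{(b)},q_{\sigma^2,g}^{(b)}\}$ and the shared factor $q_\phi^{(b)}$ held fixed, the expected log-likelihood term and the two componentwise KL terms together form $\sum_g\omega_{b,g}J_{b,g}$, with $J_{b,g}$ as in \eqref{eq:J_bg_independent}. The term $\operatorname{KL}(q_\phi^{(b)}\|q_\phi^{(b-1)})$ does not involve the weights at all, because the AR factor is shared across children and therefore sits outside the mixture sum. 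This is the one structural point I need to check carefully.

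After this regrouping, the ELBO as a function of $\omega$ reads
\[
\mathcal L_b(\omega)=\sum_{g\in\mathcal G_b}\omega_{b,g}\bigl(J_{b,g}+\log\tilde\omega_{b,g}\bigr)-\sum_{g\in\mathcal G_b}\omega_{b,g}\log\omega_{b,g}+\text{const}.
\]
Set $Z:=\sum_{j\in\mathcal G_b}\tilde\omega_{b,j}\exp(J_{b,j})$, and let $\omega^\star$ denote the candidate weights in \eqref{eq:weight_update_independent}. Then
\[
\mathcal L_b(\omega)=\log Z-\operatorname{KL}(\omega\,\|\,\omega^\star)+\text{const}.
\]
This follows by direct expansion, since $\log\omega^\star_g=\log\tilde\omega_{b,g}+J_{b,g}-\log Z$. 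Gibbs' inequality then shows that the unique maximizer over the simplex is $\omega=\omega^\star$.

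For this to make sense, each $J_{b,g}$ must be finite. That holds because the factors are Gaussian and inverse-gamma with finite moments. We also need $\tilde\omega_{b,g}>0$, which follows from $h_\beta,h_{\sigma^2}\in(0,1)$ and positive parent weights. If some parent weight is zero, the convention $0\log0=0$ sets the corresponding child weight to zero, consistent with the formula.

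As an alternative check, a Lagrange multiplier argument gives the same result. Setting the derivative with respect to $\omega_g$ to zero yields $J_{b,g}+\log\tilde\omega_{b,g}-\log\omega_g-1=\lambda$, and normalizing gives the stated weights. Strict concavity of the entropy term ensures this stationary point is the global maximum.

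I expect no step to be genuinely hard. The only subtlety is the bookkeeping in the first paragraph: the shared AR KL term must be confirmed to be weight-independent.
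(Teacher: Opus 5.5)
Your proposal is correct and matches the paper's proof. The paper isolates the same weight-dependent objective $\sum_{g}\omega_{b,g}\{J_{b,g}+\log\tilde\omega_{b,g}-\log\omega_{b,g}\}$, with the shared AR KL term sitting outside the sum, and maximizes it by your Lagrange-multiplier ``alternative check,'' obtaining uniqueness from strict concavity. Your primary rewriting as $\log Z-\operatorname{KL}(\omega\,\|\,\omega^\star)$ plus Gibbs' inequality is a slightly tidier route to the same unique maximizer over the closed simplex, since it avoids arguing separately that the stationary point is interior.
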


The marginal regression- and variance-reset probabilities and the
joint reset probability under the pre-pruning candidate approximation
$\tilde q^{(b)}$ are
\begin{align}
\hat p_{\beta,b}
&=
\sum_{g\in\mathcal G_b}
\omega_{b,g}z_{\beta,g},
\quad
\hat p_{\sigma^2,b}
=
\sum_{g\in\mathcal G_b}
\omega_{b,g}z_{\sigma^2,g}, \quad 
\hat p_{\beta,\sigma^2,b}
=
\sum_{g\in\mathcal G_b}
\omega_{b,g}
z_{\beta,g}z_{\sigma^2,g}.
\label{eq:reset_probs}
\end{align}
Without pruning, the number of candidate components would grow by a
factor of four at every batch. After the batchwise CAVI iterations
converge, we therefore retain a set
$\mathcal I_b\subseteq\mathcal G_b$ satisfying
$1\leq|\mathcal I_b|\leq K_{\max}$, typically by selecting the
components with the largest posterior weights. Write
$\mathcal I_b
=
\{g_{b,1},\ldots,g_{b,K_b}\},$ and $
K_b=|\mathcal I_b|.$
We renormalize and relabel the retained child $g_{b,r}$ as component
$r=1,\ldots,K_b$ by setting
\begin{align*}
\omega_{b,r}
&=
\frac{
\omega_{b,g_{b,r}}
}{
\displaystyle
\sum_{g'\in\mathcal I_b}
\omega_{b,g'}
},
\quad
\mathbf m_{b,r}
=
\mathbf m_{b,g_{b,r}},
\quad
\mathbf V_{b,r}
=
\mathbf V_{b,g_{b,r}},
\quad
\alpha_{b,r}
=
\alpha_{b,g_{b,r}},
\quad
\lambda_{b,r}
=
\lambda_{b,g_{b,r}}.
\end{align*}
The retained weights $\omega_{b,r}$ are used directly as the parent
weights in batch $b+1$. For $r=1,\ldots,K_b$, the remaining state is
propagated as
\begin{equation}
\begin{alignedat}{2}
\boldsymbol{\beta}_{0,r}^{(b)}
&\leftarrow \mathbf m_{b,r},
&\quad
\mathbf A_{0,r}^{(b)}
&\leftarrow \mathbf V_{b,r},
\\
\nu_{0,r}^{(b)}
&\leftarrow \alpha_{b,r},
&
\delta_{0,r}^{(b)}
&\leftarrow \lambda_{b,r},
\\
\boldsymbol{\phi}_0^{(b)}
&\leftarrow \mathbf c_b,
&
\boldsymbol{\Phi}_0^{(b)}
&\leftarrow \mathbf D_b.
\end{alignedat}
\label{eq:propagate_state}
\end{equation}
Proofs of all propositions and additional derivation details are
provided in Supplementary Sections~\ref{app:variational_pm} and~\ref{app:elbo_pm}, while the overall updating
scheme is summarized in Algorithm~\ref{alg:algo1}.

\begin{algorithm}[t!]
\caption{SR-SVB algorithm for the linear regression model with AR($p$) errors under the PM boundary treatment}
\label{alg:algo1}
\LinesNotNumbered

\textbf{Initialize:}
Set $K_0=1$, $\omega_{0,1}=1$, and
$(\boldsymbol{\beta}_{0,1}^{(0)},\mathbf A_{0,1}^{(0)},
\nu_{0,1}^{(0)},\delta_{0,1}^{(0)})
=
(\boldsymbol{\beta}_0^{(0)},\mathbf A_0^{(0)},
\nu_0^{(0)},\delta_0^{(0)})$.
Initialize $\boldsymbol{\phi}_0^{(0)}$ and
$\boldsymbol{\Phi}_0^{(0)}$, choose
$h_\beta,h_{\sigma^2},\eta_\beta,\eta_{\sigma^2},
K_{\max}$, and $\epsilon_{\mathrm{ELBO}}>0$, and set
$\mathcal Z_1=\{(0,0)\}$ and
$\mathcal Z_b=\{0,1\}^2$ for $b\geq2$.\\

\For{$b=1,2,\ldots$}{

Form $\mathcal G_b
=
\{(r,z_\beta,z_{\sigma^2}):
r=1,\ldots,K_{b-1},\;
(z_\beta,z_{\sigma^2})\in\mathcal Z_b\}.$

For $b\geq2$, compute
$\bar{\boldsymbol{\beta}}_{b-1}$ from
\eqref{eq:retained_beta_mean} and construct the boundary-adjusted lag
matrices $\mathbf M_{y,b}$ and $\{\mathbf F_{b,\ell}\}_{\ell=1}^k$;
for $b=1$, use the ordinary prewhitening lag matrices.\\

Compute the candidate prior states
$\{
\tilde{\omega}_{b,g},
\tilde{\boldsymbol{\beta}}_{b,g},
\tilde{\mathbf A}_{b,g},
\tilde{\nu}_{b,g},
\tilde{\delta}_{b,g}
\}_{g\in\mathcal G_b}.$

For every $g\in\mathcal G_b$, initialize
$(\omega_{b,g},\mathbf m_{b,g},\mathbf V_{b,g},
\alpha_{b,g},\lambda_{b,g})
\leftarrow
(\tilde{\omega}_{b,g},
\tilde{\boldsymbol{\beta}}_{b,g},
\tilde{\mathbf A}_{b,g},
\tilde{\nu}_{b,g},
\tilde{\delta}_{b,g}).$

Set
$\mathbf c_b\leftarrow\boldsymbol{\phi}_0^{(b-1)}$,
$\mathbf D_b\leftarrow\boldsymbol{\Phi}_0^{(b-1)}$,
$\mathcal L_b^{\mathrm{old}}\leftarrow-\infty$, and
$\mathit{change}\leftarrow+\infty$.\\

\While{$\mathit{change}>\epsilon_{\mathrm{ELBO}}$}{

For every $g\in\mathcal G_b$, update
\[
q_{\beta,g}^{(b)}(\boldsymbol{\beta})
=
\mathrm N_k(\mathbf m_{b,g},\mathbf V_{b,g}^{-1}),
\quad
q_{\sigma^2,g}^{(b)}(\sigma^2)
=
\mathrm{IG}
\left(
\frac{\alpha_{b,g}}{2},
\frac{\lambda_{b,g}}{2}
\right).
\]

Update
\[
q_\phi^{(b)}(\boldsymbol{\phi})
=
\mathrm N_p(\mathbf c_b,\mathbf D_b^{-1})
\]
and $\{\omega_{b,g}\}_{g\in\mathcal G_b}$.\\

Compute $\mathcal L_b^{\mathrm{new}}$ and set $\mathit{change}
\leftarrow
|\mathcal L_b^{\mathrm{new}}-\mathcal L_b^{\mathrm{old}}|$
and
$\mathcal L_b^{\mathrm{old}}
\leftarrow\mathcal L_b^{\mathrm{new}}$.\\
}

Compute $\hat p_{\beta,b}$, $\hat p_{\sigma^2,b}$, and
$\hat p_{\beta,\sigma^2,b}$ from the pre-pruning candidate
approximation $\tilde q^{(b)}$.\\

Select
$\mathcal I_b=\{g_{b,1},\ldots,g_{b,K_b}\}
\subseteq\mathcal G_b$
with $1\leq K_b\leq K_{\max}$, and renormalize and relabel the
retained components as $r=1,\ldots,K_b$.

Use $\{\omega_{b,r}\}_{r=1}^{K_b}$ as the parent weights for
batch $b+1$ and propagate the remaining parameter states according to
\eqref{eq:propagate_state}.\\
Retain the last $p$ raw response--covariate pairs
for the next batch.
}
\end{algorithm}

\section{Experiments}
\label{sec:sec4}

In this section, we evaluate SR-SVB under both stable and changing data-generating regimes.
Section~\ref{subsec:sec4.1} examines estimation, UQ, and reported false alarms under a stable data-generating mechanism, whereas Section~\ref{subsec:sec4.2} evaluates type-specific changepoint detection and posterior coverage under structural changes.

\subsection{Stable-Regime Performance}
\label{subsec:sec4.1}

We evaluate finite-sample performance under a stable data-generating
mechanism.  The comparison includes the full-batch references MLE, Gibbs sampling, and full-batch variational Bayes (FBVB), and two streaming variational procedures: SVB, which uses the same componentwise independent variational family
as the proposed method but without reset states or mixture expansion, and SR-SVB, the proposed method. Under the stable regime, we evaluate performance through blockwise
parameter errors and empirical coverage of nominal 95\% regions for
$(\boldsymbol{\beta},\boldsymbol{\phi},\sigma^2)$.
For the proposed method, we additionally report false-alarm rates. 

\paragraph{Parameter setting.}
We set $p=k=50$ and consider
$n\in\{2000,10000,50000\}$.  Data are generated from
\eqref{eq:linearreg} with
$\mathbf x_t=(1,x_{t,2},\ldots,x_{t,k})^\top$, where
$x_{t,2},\ldots,x_{t,k}\overset{\mathrm{i.i.d.}}{\sim}\mathrm N(0,1)$.
We fix $\sigma^2=1$ and draw
$\boldsymbol{\beta}
\sim
\mathrm N_k(\mathbf 0,\mathbf I_k),\ 
\boldsymbol{\phi}
\sim
\mathrm{TN}_{p,\mathcal S_{\boldsymbol{\phi}}}
(
\mathbf 0,
0.5^2\operatorname{diag}
\left(1^{-1.7},2^{-1.7},\ldots,p^{-1.7}\right)),$
where $\mathrm{TN}_{p,\mathcal S_{\boldsymbol{\phi}}}$ denotes a
$p$-dimensional multivariate normal distribution truncated to
$\mathcal S_{\boldsymbol{\phi}}$, the set of AR($p$) coefficient vectors
that yield a stationary process.  One draw of
$(\boldsymbol{\beta},\boldsymbol{\phi})$ is held fixed across the three
sample-size scenarios, and we use $R=200$ Monte Carlo replications for each
sample size.

\paragraph{Inference settings.}
FBVB and SVB use the same componentwise independent variational
specification as the proposed method, with Gaussian factors for
$\boldsymbol{\beta}$ and $\boldsymbol{\phi}$ and an inverse-gamma factor
for $\sigma^2$. 
FBVB fits the full dataset under the static-parameter conditional
likelihood in Section~\ref{sec:sec2}, whereas SVB processes observations
in chronological mini-batches of size $n_b=500$ and propagates the
variational posterior from one batch to the next.
SR-SVB uses the same mini-batch size and the finite-mixture family
developed in Section~\ref{sec:sec3}.
For SR-SVB, we set
$h_\beta=h_{\sigma^2}=1-\sqrt{0.99}$,
$\eta_\beta=0.001$,
$\eta_{\sigma^2}=0.001$, and
$K_{\max}=10$.
These values are fixed across all replications and sample sizes.
Posterior summaries are computed from the retained post-pruning
approximation, whereas reset probabilities are evaluated before pruning.
All variational methods use the baseline priors
$\boldsymbol{\beta}\sim\mathrm N_k(\mathbf 0,10^6\mathbf I_k)$,
$\sigma^2\sim\mathrm{IG}(0.01,1)$, and
$\boldsymbol{\phi}\sim\mathrm N_p(\mathbf 0,10^6\mathbf I_p)$.
As additional full-batch references, MLE maximizes the Gaussian conditional
log-likelihood \citep[see, e.g.,][]{Robinson2006}, and the Gibbs
sampler follows \citet{chib1993bayesAR} to target the exact Bayesian posterior.

For runtime comparisons, convergence is assessed using changes in the
parameter estimates. For FBVB and SVB, we use the variational posterior
means, for MLE the current parameter estimates, and for Gibbs sampling the posterior means. We monitor the maximum relative change in
$\boldsymbol{\beta}$, $\boldsymbol{\phi}$, and $\sigma^2$ between successive
iterations, and all methods use the same convergence tolerance of $10^{-3}$. For Gibbs sampling, we enforce a minimum of 3,000 iterations regardless of whether the convergence criterion is met earlier. 
For SR-SVB, the ELBO stopping rule in Algorithm~\ref{alg:algo1} is replaced
by the same parameter-based criterion applied to the mixture posterior means,
with the relative change in the mixture weights additionally monitored.

\paragraph{Point-estimation accuracy.}

For each method, we report
$\lVert\hat{\boldsymbol{\beta}}-\boldsymbol{\beta}\rVert_2$,
$\lVert\hat{\boldsymbol{\phi}}-\boldsymbol{\phi}\rVert_2$, and
$|\hat{\sigma}^2-\sigma^2|$.
MLE uses the conditional-likelihood maximizer, and Gibbs uses posterior means
after burn-in.
For FBVB and SVB, the posterior means are
$\hat{\boldsymbol{\beta}}=\boldsymbol{m}$,
$\hat{\sigma}^2=\lambda/(\alpha-2)$, and
$\hat{\boldsymbol{\phi}}=\boldsymbol{c}$.
For SR-SVB, the retained post-pruning mixture has $K$ components
indexed by $r=1,\ldots,K$, and the corresponding posterior means are
$\hat{\boldsymbol{\beta}}=\sum_{r=1}^{K}\omega_r\boldsymbol{m}_r$,
$\hat{\sigma}^2=\sum_{r=1}^{K}\omega_r\lambda_r/(\alpha_r-2)$, and
$\hat{\boldsymbol{\phi}}=\boldsymbol{c}$.
For the streaming methods, all quantities refer to the final batch,
whose index is suppressed.

Table~\ref{tab:stable_regime_error} 
shows that the estimation errors for
$\boldsymbol{\beta}$ and $\boldsymbol{\phi}$ decrease rapidly with $n$,
and the full-batch methods are nearly indistinguishable at the two larger
sample sizes. Most importantly for the stable-regime comparison, SVB and
SR-SVB give very similar point-estimation accuracy. At $n=2000$, their
mean errors are respectively $0.164$ and $0.165$ for
$\boldsymbol{\beta}$, $0.187$ and $0.184$ for
$\boldsymbol{\phi}$, and $0.030$ and $0.029$ for $\sigma^2$.
Their differences remain negligible at $n=10000$ and $n=50000$, where
both procedures closely track FBVB. Thus, under this stable-regime design, SR-SVB achieves point-estimation
accuracy comparable to the nonadaptive SVB benchmark.
\paragraph{False-alarm assessment.}
For alarm reporting and false-alarm evaluation, we use the pre-pruning
reset probabilities defined in \eqref{eq:reset_probs}. Let $B_n=n/500$ denote the number
of batches for sample size $n$, and let $\hat p_{j,b}^{(r)}$ denote
the reset probability for channel $j\in\{\beta,\sigma^2\}$ at batch $b$
in Monte Carlo replication $r=1,\ldots,R$.
An alarm is declared in channel $j$ when $\hat p_{j,b}^{(r)}>c$,
with the fixed reporting threshold $c=0.50$.
The batch-level false-alarm frequency and stream-level false-alarm
probability are respectively computed as
\begin{equation}
\label{eq:false_alarm_metrics}
\begin{aligned}
\text{Batch level:}\quad
&\frac{1}{R(B_n-1)}
\sum_{r=1}^{R}
\sum_{b=2}^{B_n}
\mathbf 1
\left\{
\hat p_{j,b}^{(r)}>c
\right\},\\
\text{Stream level:}\quad
&\frac{1}{R}
\sum_{r=1}^{R}
\mathbf 1
\left\{
\max_{2\leq b\leq B_n}
\hat p_{j,b}^{(r)}>c
\right\}.
\end{aligned}
\end{equation}
The first batch is excluded because it has no preceding state to retain or
reset. 

As shown in Table~\ref{tab:stable_regime_false_alarm}, no reported false alarms were observed at $n=2000$ or $n=10000$ using the fixed threshold of 0.50. At $n=50000$, no regression-reset alarm
was observed, and only two variance-reset alarms occurred among the
$19{,}800$ eligible transitions. This gives a batch-level false-alarm
frequency of $2/19{,}800=1.01\times10^{-4}$ for the variance block.
The two alarms occurred in two different streams, giving a stream-level
false-alarm probability of $2/200=0.010$. Hence, at this reporting threshold, false-alarm exceedances were rare under the stable-regime design.

\paragraph{Coverage.}
To assess UQ, we report
95\% blockwise coverage across simulated datasets.
For MLE, the $\boldsymbol{\beta}$ and $\boldsymbol{\phi}$ regions are
asymptotic-normal ellipsoids, while $\sigma^2$ is treated on the
log-variance scale and mapped back to the original scale. For Gibbs sampling, the 95\% regions for
$\boldsymbol{\beta}$ and $\boldsymbol{\phi}$ are constructed directly
from the posterior draws using empirical Mahalanobis-distance cutoffs,
while the interval for $\sigma^2$ is obtained from the corresponding
posterior quantiles. 
For FBVB and SVB, the Gaussian variational factors for
$\boldsymbol{\beta}$ and $\boldsymbol{\phi}$ yield ellipsoidal 95\%
regions with $\chi_k^2$ and $\chi_p^2$ cutoffs, respectively, and
the interval for $\sigma^2$ is formed from the 0.025 and 0.975 quantiles
of its inverse-gamma marginal.
For SR-SVB, the 95\% region for $\boldsymbol{\beta}$ is an ellipsoid based
on the mixture posterior mean and covariance, with the cutoff given by the empirical 0.95 quantile of the squared Mahalanobis distances evaluated on posterior draws. The $\sigma^2$ interval is formed from the 0.025 and 0.975 quantiles of the
retained inverse-gamma mixture. The AR block has the shared Gaussian factor, so its 95\% region is
computed analytically using a $\chi_p^2$ cutoff.

At $n=2000$, the empirical coverage of the nominal 95\% region for the
50-dimensional $\boldsymbol{\beta}$ block is $0.565$ for both SVB and
SR-SVB, indicating substantial small-sample undercoverage despite their
similar point-estimation accuracy. The corresponding
$\boldsymbol{\phi}$ coverage is $0.740$ under SVB and $0.830$ under
SR-SVB, although both remain below the nominal level, while the
$\sigma^2$ coverages are $0.920$ and $0.925$, respectively. Coverage
improves as $n$ increases. At $n=10000$, SR-SVB and SVB have similar
$\boldsymbol{\beta}$ coverage, while SR-SVB is slightly closer to nominal
for $\boldsymbol{\phi}$ and $\sigma^2$. By $n=50000$, both streaming
variational procedures are substantially closer to nominal coverage
across all three blocks, with only small differences between them. Gibbs
and FBVB likewise approach nominal coverage as $n$ grows. Overall, the
stable-regime results show that SR-SVB attains estimation and UQ
performance similar to SVB at the larger sample sizes with low reported
false-alarm rates under the null.

\begin{table}[!t]
\centering
\caption{
Performance under the stable regime for $p=k=50$ over $R=200$ datasets.
Mean runtimes refer to full-data fits for batch methods and the final
mini-batch for streaming methods.
}
\label{tab:stable_regime}

\begin{subtable}[t]{\textwidth}
\centering
\caption{Blockwise estimation errors, empirical coverage, and runtime.}
\label{tab:stable_regime_error}
\resizebox{\linewidth}{!}{%
\begin{tabular}{ccc ccc ccc r}
\toprule
\multirow{2}{*}{$n$}
& \multirow{2}{*}{Method}
& \multirow{2}{*}{\shortstack{Batch\\size}}
& \multicolumn{1}{c}{$L_2$ error of $\boldsymbol{\beta}$}
& \multicolumn{1}{c}{$L_2$ error of $\boldsymbol{\phi}$}
& \multicolumn{1}{c}{$L_2$ error of $\sigma^2$}
& \multicolumn{3}{c}{95\% coverage}
& \multicolumn{1}{c}{\multirow{2}{*}{\shortstack{Runtime\\(seconds)}}} \\
& &
& \multicolumn{1}{c}{Mean (SD)}
& \multicolumn{1}{c}{Mean (SD)}
& \multicolumn{1}{c}{Mean (SD)}
& $\boldsymbol{\beta}$
& $\boldsymbol{\phi}$
& $\sigma^2$
& \\
\midrule

\multirow{5}{*}{$2000$}
& MLE
& $n$
& 0.156 (0.016)
& 0.182 (0.021)
& 0.051 (0.028)
& 0.785 & 0.840 & 0.640
& 0.102 \\

& Gibbs
& $n$
& 0.165 (0.146)
& 0.175 (0.019)
& 0.033 (0.023)
& 0.895 & 0.945 & 0.865
& 244.588 \\

& FBVB
& $n$
& 0.155 (0.016)
& 0.175 (0.019)
& 0.033 (0.023)
& 0.835 & 0.905 & 0.865
& 70.663 \\

& SVB
& $500$
& 0.164 (0.017)
& 0.187 (0.023)
& 0.030 (0.021)
& 0.565 & 0.740 & 0.920
& 0.148 \\

& SR-SVB
& $500$
& 0.165 (0.018)
& 0.184 (0.022)
& 0.029 (0.023)
& 0.565 & 0.830 & 0.925
& 0.328 \\

\midrule

\multirow{5}{*}{$10000$}
& MLE
& $n$
& 0.067 (0.007)
& 0.077 (0.009)
& 0.013 (0.010)
& 0.910 & 0.955 & 0.905
& 0.248 \\

& Gibbs
& $n$
& 0.067 (0.007)
& 0.076 (0.009)
& 0.012 (0.009)
& 0.900 & 0.970 & 0.935
& 420.968 \\

& FBVB
& $n$
& 0.067 (0.007)
& 0.076 (0.009)
& 0.012 (0.009)
& 0.920 & 0.975 & 0.935
& 219.969 \\

& SVB
& $500$
& 0.069 (0.008)
& 0.078 (0.010)
& 0.011 (0.009)
& 0.805 & 0.910 & 0.940
& 0.138 \\

& SR-SVB
& $500$
& 0.069 (0.008)
& 0.078 (0.010)
& 0.012 (0.010)
& 0.810 & 0.945 & 0.950
& 0.265 \\

\midrule

\multirow{5}{*}{$50000$}
& MLE
& $n$
& 0.030 (0.003)
& 0.034 (0.004)
& 0.005 (0.004)
& 0.940 & 0.960 & 0.940
& 2.246 \\

& Gibbs
& $n$
& 0.030 (0.003)
& 0.034 (0.004)
& 0.005 (0.004)
& 0.920 & 0.955 & 0.945
& 1676.842 \\

& FBVB
& $n$
& 0.030 (0.003)
& 0.034 (0.004)
& 0.005 (0.004)
& 0.935 & 0.955 & 0.945
& 730.291 \\

& SVB
& $500$
& 0.030 (0.003)
& 0.034 (0.004)
& 0.005 (0.004)
& 0.920 & 0.940 & 0.940
& 0.130 \\

& SR-SVB
& $500$
& 0.030 (0.003)
& 0.034 (0.004)
& 0.006 (0.005)
& 0.915 & 0.945 & 0.965
& 0.231 \\

\bottomrule
\end{tabular}%
}
\end{subtable}

\vspace{1em}

\begin{subtable}[t]{\textwidth}
\centering
\caption{False-alarm frequency and probability of SR-SVB at threshold $0.50$.}
\label{tab:stable_regime_false_alarm}

\resizebox{0.6\linewidth}{!}{%
\begin{tabular}{
c
*{4}{>{\centering\arraybackslash}p{1.6cm}}
}
\toprule
\multirow{2}{*}{$n$}
& \multicolumn{2}{c}{\shortstack{False-alarm frequency\\(batch level)}}
& \multicolumn{2}{c}{\shortstack{False-alarm probability\\(stream level)}} \\
\cmidrule(lr){2-3}
\cmidrule(lr){4-5}
& $\boldsymbol{\beta}$
& $\sigma^2$
& $\boldsymbol{\beta}$
& $\sigma^2$ \\
\midrule

$2000$
& 0.000
& 0.000
& 0.000
& 0.000 \\

$10000$
& 0.000
& 0.000
& 0.000
& 0.000 \\

$50000$
& 0.000
& $<0.001$
& 0.000
& 0.010 \\

\bottomrule
\end{tabular}%
}
\end{subtable}
\end{table}

\paragraph{Runtime.}
As shown in Table~\ref{tab:stable_regime_error},
the streaming procedures maintain low computational cost as the cumulative
sample size increases because each update processes only a fixed-size
mini-batch. The mean final-batch runtime of SVB ranges from
$0.130$ to $0.148$ seconds, while that of SR-SVB ranges from
$0.231$ to $0.328$ seconds. The additional cost of SR-SVB reflects the
mixture updates required for separate reset states, but the final-batch update times of SVB and SR-SVB are substantially smaller than the full-data fitting times of the batch Bayesian methods.

\subsection{Changepoint Detection and Adaptive Uncertainty Quantification}
\label{subsec:sec4.2}

We next evaluate SR-SVB under repeated structural changes, focusing on
type-specific changepoint detection, changepoint location error, and
detection delay, together with empirical coverage of the batchwise
posterior uncertainty summaries.
We compare SR-SVB with two frequentist sequential monitoring procedures:
FOCuS, a likelihood-ratio-based procedure implemented in the R package
\texttt{focus} \citep{romano2023focus,romano2026focus}, and fxmonitor,
a score-based regression monitoring procedure provided by the R package
\texttt{fxregime} \citep{zeileis2010testing}.

Our direct detection comparison is restricted to methods that yield
regression- and innovation-variance-sensitive monitoring statistics to
which a common type-specific matching rule can be applied. We do not include
\citet{li2021irregular}, \citet{niang2026online}, or
\citet{knoblauch2018spatiotemporal} because their original detection
outputs do not support the same regression-versus-variance
classification: the linear-regression formulation of
\citet{li2021irregular} treats the observation-noise variance as fixed,
\citet{niang2026online} targets coefficient changes, and
\citet{knoblauch2018spatiotemporal} infers segment-level changepoints
without assigning them to regression- or variance-specific change
types. Including these methods in the present type-specific comparison
would therefore require an additional parameter-block attribution
procedure not defined by the original methods.

Detection performance is compared under a common stratum-specific
calibration rule that limits the stream-level probability of at least
one alarm in either channel to at most $5\%$ on the calibration streams,
while the coverage trajectories assess the batchwise posterior UQ
provided by SR-SVB. Finally, we examine the role of the separate-reset
construction through a joint-reset ablation and a targeted
simultaneous-change sensitivity experiment, with full results reported
in Supplementary Section~\ref{sec:supp_reset_ablation}.

\paragraph{Data-generating setting.}

\begin{table}[t]
\centering
\caption{Mapping of the sixteen changepoint scenarios to the simulation settings.}
\label{tab:cp_scenarios}
\small
\setlength{\tabcolsep}{6pt}
\renewcommand{\arraystretch}{1.15}
\begin{tabular}{lcccccccc}
\toprule
& \multicolumn{4}{c}{Large AR}
& \multicolumn{4}{c}{Small AR} \\
\cmidrule(lr){2-5}
\cmidrule(lr){6-9}
Signal
& \multicolumn{2}{c}{Strong}
& \multicolumn{2}{c}{Weak}
& \multicolumn{2}{c}{Strong}
& \multicolumn{2}{c}{Weak} \\
$n$
& 25K & 50K
& 25K & 50K
& 25K & 50K
& 25K & 50K \\
\midrule
$N_{\mathrm{cp}}=10$
& S01 & S02 & S03 & S04
& S05 & S06 & S07 & S08 \\
$N_{\mathrm{cp}}=4$
& S09 & S10 & S11 & S12
& S13 & S14 & S15 & S16 \\
\bottomrule
\end{tabular}
\end{table}

We consider sixteen scenarios formed by crossing two fixed AR
specifications, strong- and weak-signal settings, sample sizes
$n\in\{25{,}000,50{,}000\}$, and numbers of changepoints
$N_{\mathrm{cp}}\in\{4,10\}$.
Data follow \eqref{eq:linearreg} with $p=k=30$,
piecewise-constant $\boldsymbol{\beta}$ and $\sigma^2$,
and time-invariant $\boldsymbol{\phi}$.
The covariates satisfy $x_{t,1}=1$ and
$x_{t,j}\overset{\mathrm{iid}}{\sim}\mathrm{N}(0,1)$
for $j=2,\ldots,30$.
We use an initial mini-batch of 500 observations followed by batches
of 250, giving 99 and 199 batches for the two sample sizes,
respectively.

We use two fixed stationary AR(30) specifications with
$\sum_{j=1}^{30}|\phi_j|=0.70$ and $0.945$.
The AR coefficients are held fixed across scenarios and replications,
and the error process continues through changepoints without restarting
its lag history.
Regression changes affect the first fifteen coefficients, including
the intercept, while the remaining coefficients are unchanged.
The jump directions are generated once and held fixed across scenarios
and replications.
Under the strong-signal setting, each regression jump satisfies
$\|\boldsymbol{\beta}_\ell-\boldsymbol{\beta}_{\ell-1}\|_2
=1.05\sigma_{\ell-1}$.
For $N_{\mathrm{cp}}=10$, the innovation-variance states are
$(1,2.05,2.05^2,2.05,1,2.05^{-1})$, whereas
$N_{\mathrm{cp}}=4$ uses the corresponding initial portion of the
same regression and variance paths.
The weak-signal setting preserves the same directions and ordering but
scales the regression path toward its initial state by $0.7$ and the
log-variance path by $0.7$, giving
$\boldsymbol{\beta}_0+0.7(\boldsymbol{\beta}_\ell-\boldsymbol{\beta}_0)$
and $(\sigma_\ell^2)^{0.7}$, respectively.
Regression and variance changes alternate, beginning with a regression
change, with no simultaneous changes.
For each $N_{\mathrm{cp}}$, a fixed irregular changepoint schedule is
defined on the mini-batch boundaries at $n=25{,}000$ and shared across
AR and signal-strength settings.
For $n=50{,}000$, the changepoint locations are doubled to preserve
their relative positions.
A changepoint $\tau$ marks the end of the preceding regime, with the
new parameters taking effect at $\tau+1$.

Each scenario is replicated $R=200$ times with fixed parameter paths
and changepoint locations.
For a given replicate, the same covariate and innovation draws are
used across scenarios, while different replicates use independent
draws. Table~\ref{tab:cp_scenarios} summarizes the sixteen changepoint
scenarios obtained by crossing the two AR specifications, sample sizes,
signal strengths, and numbers of changepoints.
Figure~\ref{fig:parameter_paths} shows the true paths of one active
regression coefficient, $\beta_{11}$, and the innovation variance,
$\sigma^2$, under the $N_{\mathrm{cp}}=4$ and $N_{\mathrm{cp}}=10$
settings. Solid and dashed lines denote the strong- and weak-signal
settings, respectively.

\begin{figure}[t]
\centering
\includegraphics[width=0.75\textwidth]{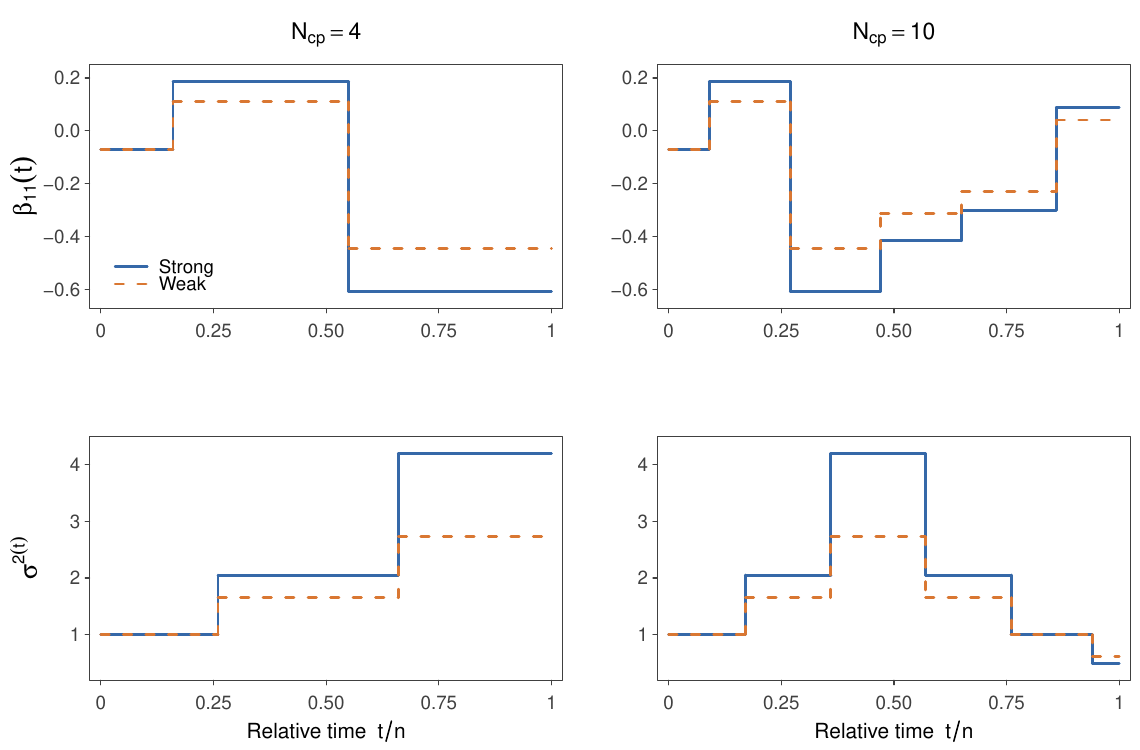}
\caption{
True paths of $\beta_{11}$ and $\sigma^2$ for the strong- and
weak-signal settings.
}
\label{fig:parameter_paths}
\end{figure}

\paragraph{Inference and comparison methods.}
SR-SVB uses the same reset and pruning settings as in the stable-regime
experiment, with $h_\beta=h_{\sigma^2}=1-\sqrt{0.99}$,
$\eta_\beta=\eta_{\sigma^2}=0.001$, $K_{\max}=10$, and a
within-batch relative-change tolerance of $10^{-3}$ for posterior
point estimates and mixture weights. SR-SVB models the AR errors and updates $q_{\phi}^{(b)}$ sequentially with each incoming batch.

FOCuS and fxmonitor are applied directly to the raw stream, using the first
500 observations as the initial reference and evaluating alarms at the
endpoints of subsequent 250-observation batches.
FOCuS uses an unknown-baseline multivariate Gaussian detector for the
regression channel \citep{romano2026focus} and a Gamma detector for the
variance channel \citep{ward2024exponential}.
Their inputs are reference-standardized batchwise ordinary least-squares coefficient
summaries and residual sums of squares, with batch sizes and residual
degrees of freedom as the respective weights. In fxmonitor, a
block-diagonal score covariance is estimated from the reference sample.
The regression and variance statistics are the maximum absolute
regression-score component and the absolute variance score, respectively,
each normalized by the package boundary.
{To evaluate repeated changes, we apply a common
channel-labeling and reference-refresh protocol to both monitoring methods.}
Under this protocol, an alarm is labeled as regression or variance when
only that channel exceeds its calibrated threshold, and as joint when both
do so. After an alarm, monitoring restarts once four new full batches have
been collected to form the next reference window.

\paragraph{False-alarm calibration and evaluation metrics.}
For each sample-size and AR-specification combination, we generate
500 no-change streams, split into 400 calibration and 100 independent
validation streams, with all methods evaluated on the same null data
and batch grid. For each method, channel-specific thresholds are
selected jointly from the calibration-stream maxima using a balanced-union
rule that allows at most 20 of the 400 streams to trigger an alarm in
either channel. The resulting threshold pair is shared across signal
strengths and changepoint counts and fixed before validation and
change-stream evaluation.
{Supplementary Table~\ref{tab:cp_calibration_validation} reports the corresponding
calibration and independent validation results. 
}

Alarm episodes are matched chronologically to true changes of the
same type using one-to-one matching.
The matching windows, episode construction, and reported-boundary
conventions are specified in Supplementary Section~\ref{app:changepoint_details}.
Matched events are counted as TP, unmatched alarm episodes as FP,
and unmatched true changes as FN.
For each method, scenario, and scope, we pool counts across
replications and report
$\mathrm{Recall}=\mathrm{TP}/(\mathrm{TP}+\mathrm{FN})$,
$\mathrm{Precision}=\mathrm{TP}/(\mathrm{TP}+\mathrm{FP})$, and
$F_1=2\mathrm{TP}/(2\mathrm{TP}+\mathrm{FP}+\mathrm{FN})$.
For each matched event $j$, let $b_j$ denote the last batch before
the true change, $\hat b_j$ the reported boundary index, and $a_j$
the alarm batch.
Over the matched-event set $\mathcal M$, we define
\[
L=|\mathcal M|^{-1}\sum_{j\in\mathcal M}|\hat b_j-b_j|,
\quad
D=|\mathcal M|^{-1}\sum_{j\in\mathcal M}\{a_j-(b_j+1)\},
\]
both measured in batches. Here, $L$ measures the absolute changepoint
location error on the batch grid, whereas $D$ measures the additional
detection delay from the first affected batch to the alarm batch.
Thus, $D=0$ denotes detection at the end of the first affected batch.
Both measures are conditional on matched events and should be
interpreted alongside recall.
Overall results include all alarm types, whereas block-specific
results include only alarms and true changes of the corresponding
type.
Because no simultaneous changes are simulated, each joint alarm
contributes one FP to the overall counts and is excluded from
block-specific results.

\begin{figure*}[!t]
\centering

\begin{subfigure}[t]{\textwidth}
    \centering
    \includegraphics[width=\textwidth]
    {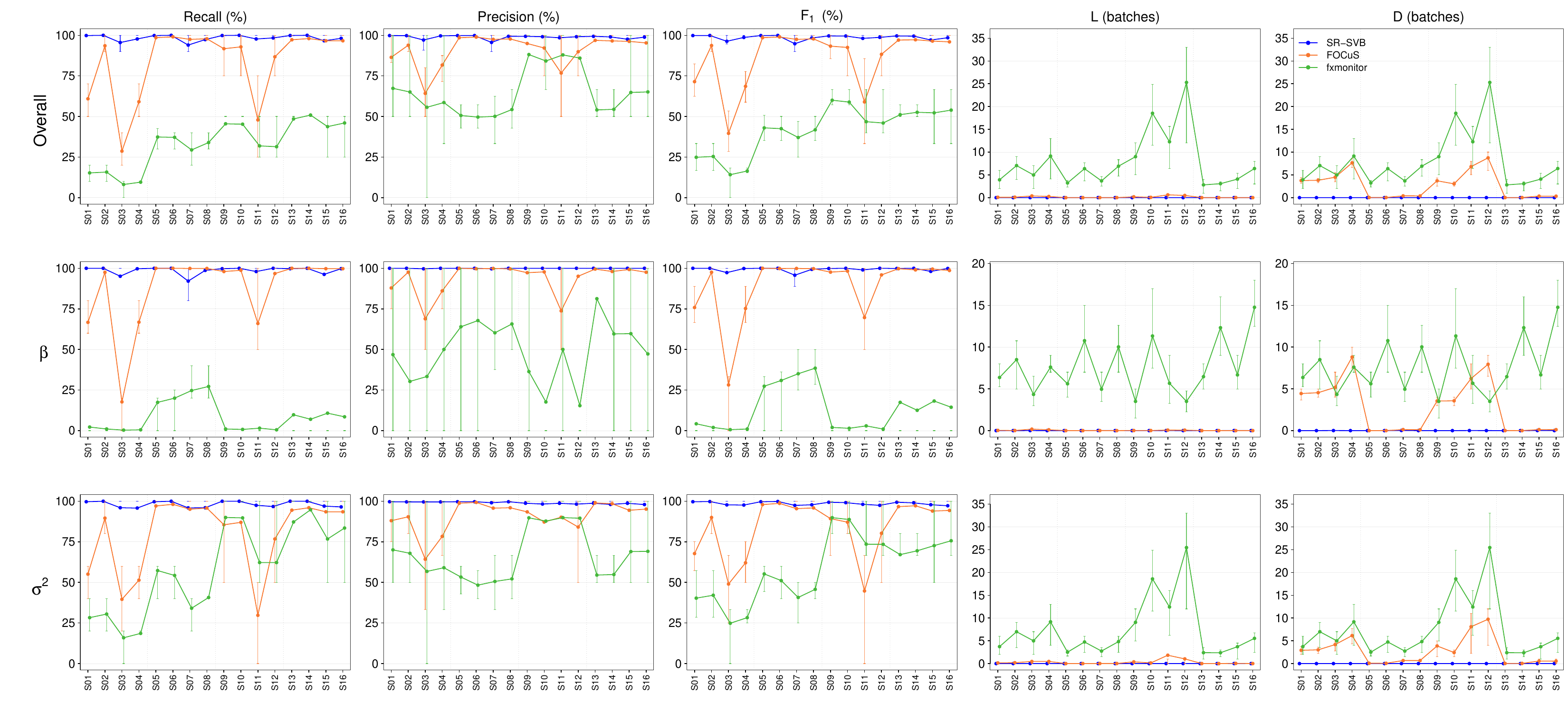}
    \caption{Detection performance.}
    \label{fig:cp_detection_panel}
\end{subfigure}

\vspace{0.3em}

\begin{subfigure}[t]{0.4\textwidth}
    \centering
    \includegraphics[width=\textwidth]
    {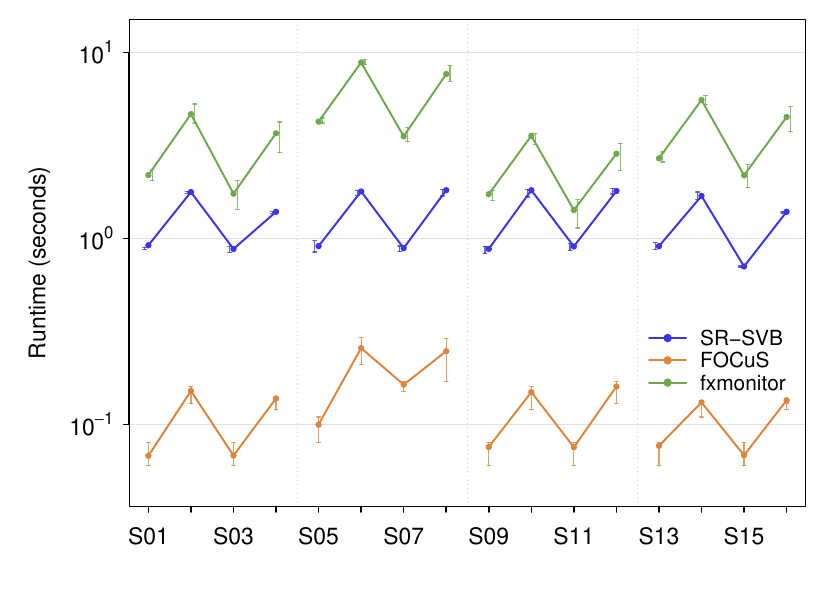}
    \caption{Mean runtime per dataset on a logarithmic scale.}
    \label{fig:runtime_panel}
\end{subfigure}

\caption{
Performance of SR-SVB, FOCuS, and fxmonitor across sixteen scenarios.
Recall, precision, and $F_1$ use pooled event counts; $L$ and $D$ are
averages over matched events.
Vertical bars show replicate-level interquartile ranges.
}
\label{fig:cp_detection}

\end{figure*}

\paragraph{Changepoint detection results.}
Figure~\ref{fig:cp_detection_panel} summarizes exact-type changepoint detection performance across the sixteen scenarios, with $R=200$ replications per scenario.
The plotted recall, precision, and $F_1$ values are computed from event counts pooled within each scenario and expressed as percentages, whereas $L$ and $D$ are averages over matched events on the batch grid; vertical bars show the corresponding replicate-level interquartile ranges ($Q_1$--$Q_3$).
SR-SVB maintains high recall, precision, and $F_1$ for both regression and innovation-variance changes across all sixteen scenarios, while keeping conditional location errors and detection delays small.
FOCuS performs competitively in several settings, with location errors comparable to those of SR-SVB, but shows greater scenario dependence: its recall and $F_1$ drop markedly in some scenarios, particularly S03 and S11, and its detection delays increase in several scenarios.
fxmonitor generally shows lower recall and $F_1$, most notably for regression changes, together with larger location errors and detection delays and wider replicate-level variation in several scenarios.
Overall, SR-SVB provides the most consistent detection performance, with generally the shortest detection delays among the three procedures and small location errors; these properties persist across the considered AR specifications, signal strengths, sample sizes, and numbers of changepoints.

\paragraph{Runtime results.}
Figure~\ref{fig:runtime_panel} reports the mean runtime per dataset
across the sixteen scenarios on a logarithmic scale.
Vertical bars indicate the first and third quartiles
of the replicate-level runtimes across the
$R=200$ replications.
FOCuS is consistently the fastest method, with runtimes on the order
of $10^{-1}$ seconds.
SR-SVB has the second-lowest computational cost, while fxmonitor is
the slowest of the three methods across the scenarios.
Although SR-SVB requires more computation than FOCuS, its runtime
remains below that of fxmonitor while additionally providing
finite-mixture variational inference and batchwise posterior
UQ for the model parameters.

\paragraph{Adaptive uncertainty quantification.}
\begin{figure*}[!t]
\centering
\includegraphics[width=1.0\textwidth]{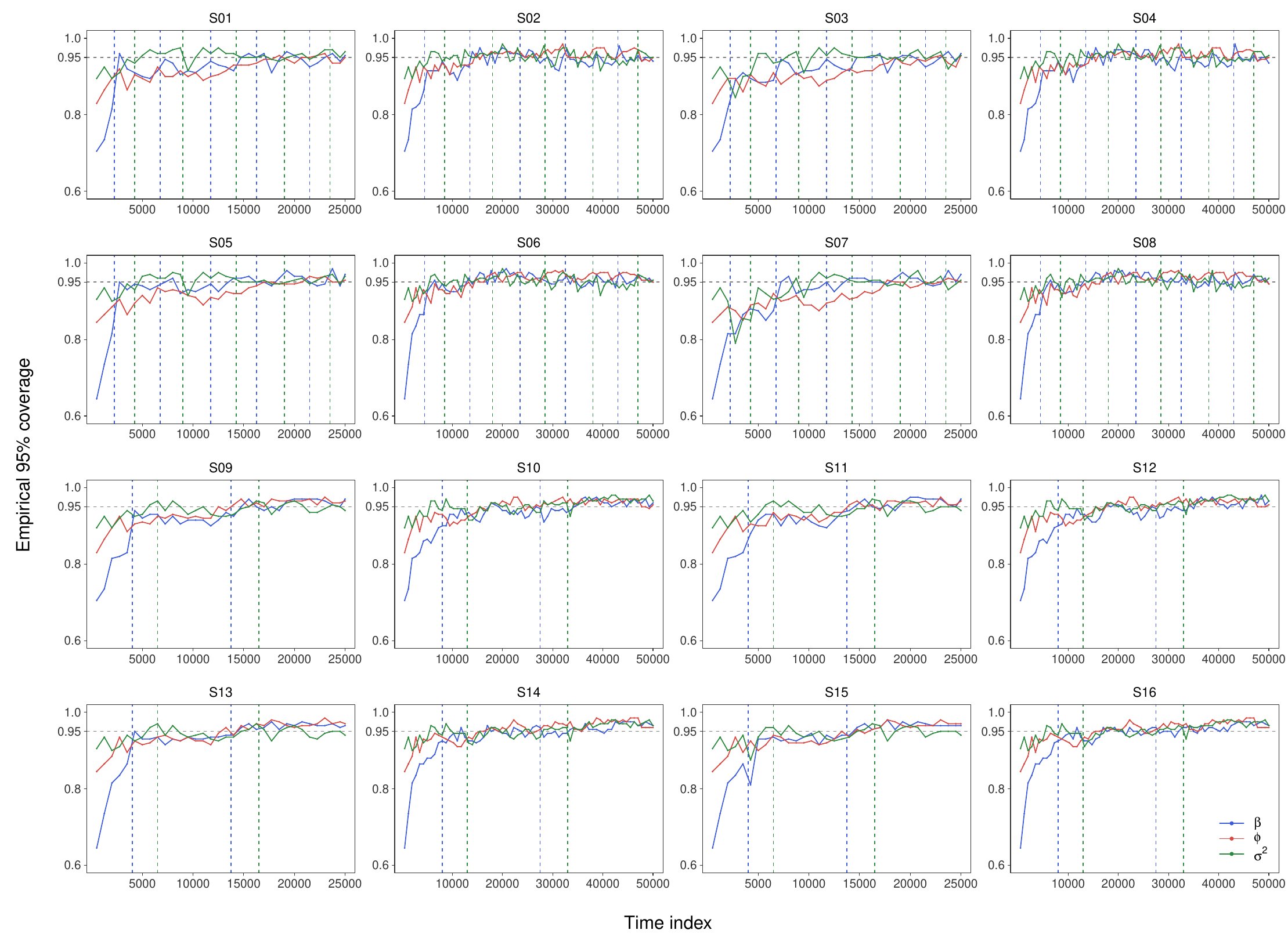}
\caption{
Empirical coverage of nominal 95\% SR-SVB regions across S01--S16.
Every third fitted batch and the final batch are shown.
Vertical dashed lines mark true changepoints; the horizontal line
marks 0.95.
}

\label{fig:adaptive_uq_coverage}
\end{figure*}

Figure~\ref{fig:adaptive_uq_coverage} reports empirical coverage of the
nominal 95\% SR-SVB regions across the sixteen change scenarios.
Coverage is initially below the nominal level for all three parameter
blocks, with the most pronounced undercoverage occurring for
$\boldsymbol{\beta}$.
As additional batches are processed, coverage for
$\boldsymbol{\beta}$, $\boldsymbol{\phi}$, and $\sigma^2$ generally
moves toward the nominal level and subsequently fluctuates around it.
The larger-sample scenarios tend to exhibit smoother coverage
trajectories after this initial adaptation period.
Although temporary departures from 0.95 occur around some changepoints, these deviations are generally followed by recovery and do
not accumulate as further changepoints occur.
In particular, the trajectories do not show a systematic deterioration
in coverage under weaker signals or with more changepoints.

\paragraph{Role of separate reset states.} 
To assess the role of separate reset states in changepoint
detection and parameter inference, we compare SR-SVB with
a restricted joint-reset SVB that permits only
\((z_\beta,z_{\sigma^2})\in\{(0,0),(1,1)\}\),
with all other settings held fixed; full results are reported in Supplementary Section~\ref{sec:supp_reset_ablation}. Across S01–S16, type-agnostic recall decreases from 98.99\% under SR-SVB to 58.54\% under joint resetting. This loss is driven primarily by variance-only changes: variance-change recall decreases from 98.06\% to 17.48\%, whereas regression-change recall remains near 100\%. Thus, forcing the two blocks to reset together can impede adaptation when only the innovation variance changes because adapting the variance then also requires discounting accumulated regression information. 
A complementary simultaneous-change experiment shows an asymmetric trade-off: the joint-reset model adapts fastest immediately after a simultaneous change. SR-SVB detects all simultaneous changes when alarm type is ignored, whereas exact-type joint-change recall is 84.3\% under the default prior and 95.6\% under the balanced prior defined in Supplementary Section~\ref{sec:supp_joint_change}. Thus, whereas restricting resets to occur jointly severely impairs detection of variance-only changes, allowing separate resets retains substantial detection performance even when both blocks change simultaneously. Increasing the prior mass assigned to the joint-reset state improves its early adaptation in this setting. These results support separate reset states when the underlying change structure may differ across parameter blocks, while indicating that the allocation of reset-state prior mass matters when simultaneous changes are common.

\section{Real Data Analysis}
\label{sec:sec5}

We use the Bitcoin application to illustrate separate adaptation of the
regression and innovation-variance blocks while learning residual
autoregressive dependence in a long streaming series.

\paragraph{Dataset.}
We analyze one-minute BTC/USD price and volume data from Bitstamp in the Kaggle 
\emph{Bitcoin Historical Data} dataset maintained by mczielinski.%
\footnote{\url{https://www.kaggle.com/datasets/mczielinski/bitcoin-historical-data}}
Five-minute closes are constructed on a UTC grid, and squared
five-minute log returns are summed over non-overlapping 30-minute
intervals to obtain realized variance. We use five-minute sampling
as a compromise between sensitivity to market microstructure noise
and information loss from coarser sampling. The time-of-week seasonal profile and final
predictor-standardization constants are estimated from the 2014
warm-up sample and held fixed throughout the streaming analysis.
Missing log realized variances in 39 half-hour intervals are replaced
by their warm-up seasonal medians, which can locally attenuate residual
variation. The streaming sample spans January~1, 2015 through
November~26, 2025, with $n=191{,}183$ half-hour intervals.

\paragraph{Variables.}
Let $r_{t,i}$, $i=1,\ldots,6$, denote the six constituent log returns
in half-hour interval $t$. We define $RV_t=\sum_{i=1}^{6}r_{t,i}^{2}$
and use $y_t=\log(RV_t+c)-m_{s(t)}$, where $c$ is a fixed offset
and $m_{s(t)}$ is the warm-up median of $\log(RV+c)$ for the
corresponding weekday and half-hour slot.
The predictors form four groups. Weekly volatility persistence is
represented by the preceding weekly response average
(\texttt{har\_week}). Lagged trading activity and price dynamics are
captured by relative log notional volume
(\texttt{rel\_log\_notional\_lag1}), measured against its recent
same-time-of-week baseline, and the half-hour return
(\texttt{ret\_30m\_lag1}). The 24-hour downside and jump shares
(\texttt{downside\_share\_24h\_lag1},
\texttt{jump\_share\_24h\_lag1}) summarize negative-return variation
and jump-like movements, while the trend ratio
(\texttt{trend\_ratio\_lag1}) measures net directional movement.
Macro controls comprise lagged daily return signals for E-mini Nasdaq-100
futures (\texttt{NQ\_prev\_day}) and the U.S.\ dollar index
(\texttt{DXY\_prev\_day}), together with the sign-reversed daily return signal
for Treasury-note futures as a yield-change proxy
(\texttt{Yield\_prev\_day}). Finally, a U.S.\ regular-hours proxy
(\texttt{us\_open}) and its interaction with lagged relative log
notional volume (\texttt{rel\_volume\_x\_us\_open}) allow the
activity--volatility association to differ during those hours.
Continuous predictors use fixed warm-up centering and scaling;
the intercept and \texttt{us\_open} remain unscaled, and missing
continuous predictors are set to zero after standardization.
Full definitions and interpretations are given in
Table~\ref{tab:btc_predictors}, with preprocessing details in
Supplementary Section~\ref{app:btc_details}.

\paragraph{Model setting.}
We apply SR-SVB with $k=12$ and $p=100$, treating the full regression
vector, including the intercept, and the innovation variance as
separate reset blocks. The initial batch contains $2{,}688$ observations
(eight weeks), followed by batches of up to $1{,}344$ observations
(four weeks), giving 142 batches. We use the diffuse baseline prior
$\pi^{(0)}$ in \eqref{eq:prior}, with the hyperparameters in
Section~\ref{subsec:variational_family} and $\alpha=0$.
The baseline reset parameters are
$h_\beta=h_{\sigma^2}=1-\sqrt{0.9999}\approx5.0\times10^{-5}$,
with $\eta_\beta=\eta_{\sigma^2}=10^{-4}$ and $K_{\max}=16$.
Compared with the simulations, smaller prior reset probabilities favor
retention, while stronger discounting permits greater adaptation
when a reset is selected. CAVI uses an absolute ELBO-change tolerance
of $10^{-3}$. For descriptive reporting, a channel alarm is declared
when its posterior reset probability is at least $0.50$.
Alarms are evaluated at batch ends, whereas reported boundaries
mark batch starts.

\begin{figure}[!t]
    \centering
    \includegraphics[width=0.9\linewidth]
    {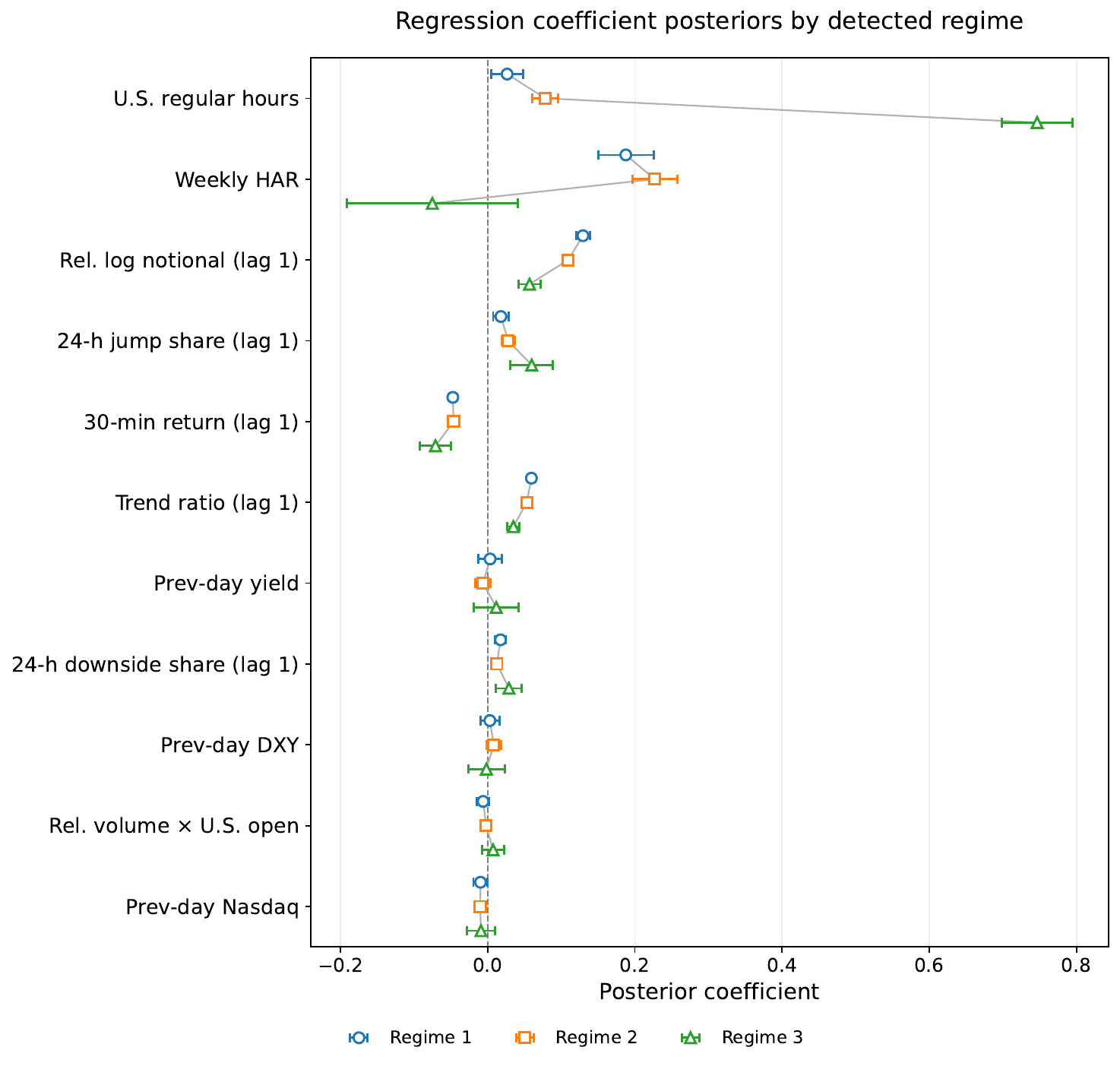}
    \caption{Regime-end variational posterior summaries for the regression
    coefficients, excluding the intercept.
    Open markers denote posterior means, horizontal bars show marginal
    95\% credible intervals, and gray segments connect the means across regimes.}
    \label{fig:btc_beta_regimes}
\end{figure}

\paragraph{Posterior summaries and detected changes.}
Regression-block alarms are reported for the batches beginning
May~13, 2021 and September~26, 2024. The first favors
regression-specific adaptation, with regression- and variance-reset
probabilities of approximately $1.000$ and $1.1\times10^{-4}$.
The second supports joint adaptation, with corresponding probabilities
of $0.998$ and $0.877$ and a joint-reset probability of $0.877$.
These block-level reports neither imply that every coefficient changes
nor identify exact economic break dates.

Figure~\ref{fig:btc_beta_regimes} compares regime-end coefficient
posteriors, conditional on the fixed seasonal adjustment and AR-error
specification. The weekly heterogeneous autoregressive (HAR) predictor and residual AR component
both represent temporal persistence.
The most pronounced difference concerns \texttt{us\_open}, whose
posterior mean is modest in the first two regimes and rises sharply
in the final regime. With the interaction included, this coefficient
represents the fitted U.S.-hours contrast in seasonally adjusted
log realized variance when lagged log notional volume equals its
recent baseline for the same time of week.
The interaction coefficient \texttt{rel\_volume\_x\_us\_open}
remains near zero, with marginal $95\%$ credible intervals containing
zero in all three regimes, providing limited evidence of an additional
activity-slope difference.
The \texttt{har\_week} posterior mean is positive in the first two
regimes and negative in the final regime, although its final credible
interval includes zero. In the final regime, the posterior mean for
\texttt{rel\_log\_notional\_lag1} decreases and that for
\texttt{jump\_share\_24h\_lag1} increases. The
\texttt{ret\_30m\_lag1} coefficient becomes more negative, indicating
a stronger negative conditional association with next-interval
log realized variance. Macro-financial coefficients remain small.

\begin{figure}[!t]
    \centering
    \includegraphics[width=0.95\linewidth]
    {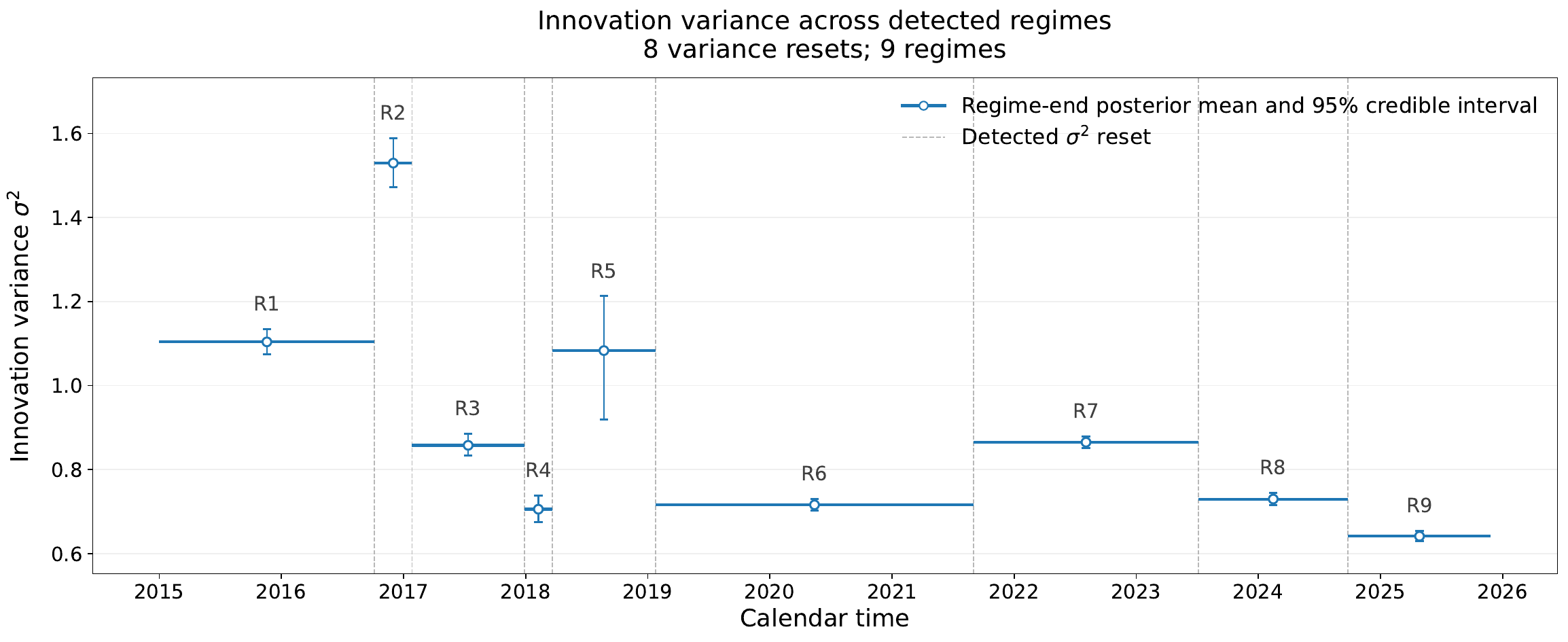}
    \caption{Retrospective innovation-variance summaries.
    Horizontal segments extend regime-end posterior means across the
    displayed intervals; bars show marginal 95\% credible intervals.
    Dashed lines mark reported variance-reset boundaries.}
    \label{fig:btc_sigma2_regimes}
\end{figure}

The variance channel meets the reporting threshold in nine batches.
For retrospective display, consecutive alarmed batches are grouped
and represented by the batch with the largest reset probability,
yielding eight variance-reset episodes and nine working-model regimes.
The next reported variance boundary after the May~2021 regression
boundary is in September~2021; both blocks have boundaries in
September~2024. Figure~\ref{fig:btc_sigma2_regimes} shows regime-end
variance summaries. The largest posterior mean occurs in the
October~2016--January~2017 regime, while the final-regime mean is
approximately $0.641$.

Economic developments provide context for these patterns. The first
reported regression boundary falls in May~2021, the month of the
crypto-market sell-off discussed by the IMF \citep{imf2021crypto},
while the January~2024 approval of U.S.\ spot Bitcoin ETFs provides
institutional context for the later regime \citep{gensler2024spot}.
These developments are not established causes of the reported changes;
in particular, the September~2024 boundary occurs well after the
ETF approval.

Figure~\ref{fig:btc_phi_final} displays the final posterior means and
marginal $95\%$ credible intervals for the AR(100) coefficients.
The largest coefficients occur at the first few half-hour lags, with
smaller positive peaks at lags 48 and 96, corresponding to 24 and
48 hours. Since $\alpha=0$ gives equal prior variances across lags,
these peaks are not imposed by lag-dependent shrinkage. They indicate
residual daily dependence that may reflect temporal structure beyond
the fixed seasonal adjustment and covariates. We regard $p=100$,
spanning slightly more than two days, as a truncation aimed at
capturing the leading short-run and daily components.

\begin{figure}[!t]
    \centering
    \includegraphics[width=0.95\linewidth]
    {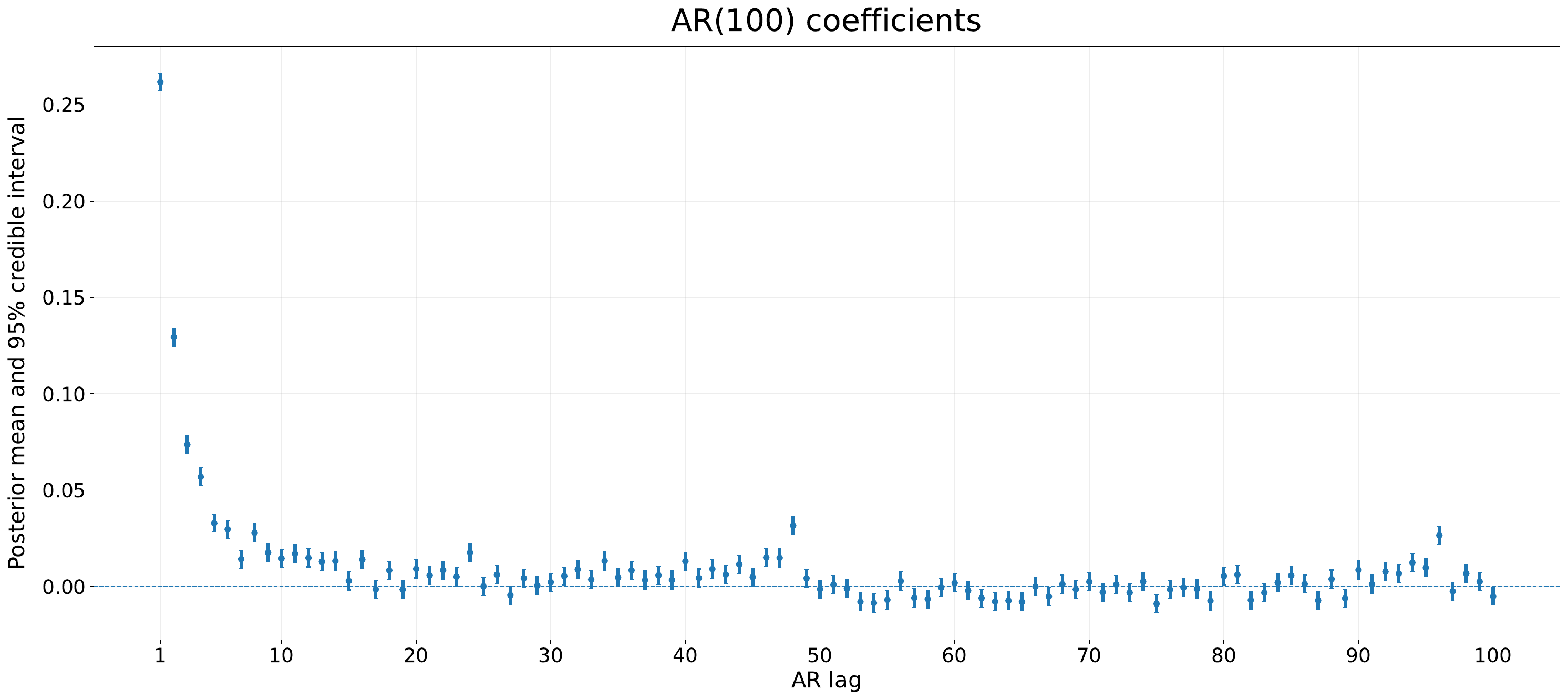}
    \caption{Posterior means and marginal 95\% credible intervals
    for the AR(100) coefficients $\phi_j$ after processing the
    final batch.}
    \label{fig:btc_phi_final}
\end{figure}

\section{Conclusion}\label{sec:sec6}

This study demonstrates that block-specific posterior forgetting provides a useful framework for combining online changepoint detection with sequential Bayesian inference in regression models with serially correlated errors. The numerical studies suggest that adaptivity can be introduced without substantially altering the behavior of the underlying streaming procedure when no structural change is present. When changes occur, the block-specific reset mechanism allows the affected parameter block to adapt without unnecessarily discarding information from blocks that remain stable, while empirical coverage generally recovers toward its nominal level after the transition. The ablation results support the benefit of this selective adaptation most clearly for variance-only changes. The Bitcoin application further illustrates distinct adaptation in the regression and innovation-variance components while residual autoregressive dependence continues to be learned sequentially.

The present formulation treats the regression coefficients and innovation variance as adaptive blocks while sharing the AR coefficients across regimes. Extensions could allow changes in the AR structure or extend changepoint localization to within mini-batches. A theoretical characterization of sequential variational approximation error and posterior uncertainty calibration under repeated structural changes would also be an important direction for future work.

\bibliographystyle{apalike}
\bibliography{ref}

\clearpage

\setcounter{section}{0}
\setcounter{subsection}{0}
\setcounter{subsubsection}{0}
\setcounter{equation}{0}
\setcounter{figure}{0}
\setcounter{table}{0}
\renewcommand{\thesection}{S\arabic{section}}
\renewcommand{\thesubsection}{\thesection.\arabic{subsection}}
\renewcommand{\thesubsubsection}{\thesubsection.\arabic{subsubsection}}
\renewcommand{\theequation}{S\arabic{equation}}
\renewcommand{\thefigure}{S\arabic{figure}}
\renewcommand{\thetable}{S\arabic{table}}

\renewcommand{\theHsection}{S\arabic{section}}
\renewcommand{\theHsubsection}{\theHsection.\arabic{subsection}}
\renewcommand{\theHsubsubsection}{\theHsubsection.\arabic{subsubsection}}
\renewcommand{\theHequation}{S\arabic{equation}}
\renewcommand{\theHfigure}{S\arabic{figure}}
\renewcommand{\theHtable}{S\arabic{table}}

\makeatletter
\def\@currentlabelname{Supplementary Material}
\makeatother
\phantomsection
\label{sec:supplement}
\addcontentsline{toc}{section}{Supplementary Material}
\SRsupplementtitle

\begin{abstract}
This supplementary material provides additional derivations, implementation
details, and sensitivity analyses for the proposed separate-reset streaming
variational Bayes method. We derive the variational updates and evidence lower
bounds under both the previous-mean and full-joint boundary treatments, and
compare their empirical performance. We further examine the role of the reset
structure through separate-versus-joint resetting and simultaneous-change
experiments. Additional details are provided for the changepoint simulation,
including comparator implementation, false-alarm calibration, and event
matching, as well as for the preprocessing, covariate construction, and
reporting procedures used in the Bitcoin application.
\end{abstract}

\section{Derivation of the Variational Distributions}
\label{app:variational_distributions}
This section provides the derivations underlying the variational updates. We first establish the conditional prewhitened likelihood and the corresponding static-posterior benchmark. Section~\ref{app:variational_pm} derives the batchwise variational updates under the previous-mean (PM) boundary treatment, while Section~\ref{app:variational_fj} develops the corresponding full-joint (FJ) formulation and updates that explicitly account for the current and previous regression states across batch boundaries.

We first recall the conditional prewhitened likelihood under the static-parameter benchmark. Conditional on the model parameters and the initial or preceding boundary lags, let $\sigma^2\boldsymbol\Sigma_b$ denote the covariance of $\mathbf y_{b\mid T_b^\ast}$. The within-batch part of the AR filter is represented by the lower-triangular matrix $\mathbf L_b$ with entries $[\mathbf L_b]_{rc}=1\{r=c\}-\sum_{j=1}^p\phi_j1\{r-c=j\}$. The boundary-lag terms contribute only to the conditional mean, whereas the transformed residual $\mathbf y_b^\ast-\mathbf X_b^\ast\boldsymbol\beta$ has covariance $\sigma^2\mathbf I_{n_b^\ast}$. Hence,
\begin{equation*}
\begin{aligned}
\mathbf L_b\boldsymbol\Sigma_b\mathbf L_b^\top
&=\mathbf I_{n_b^\ast},
\quad
\boldsymbol\Sigma_b^{-1}
=\mathbf L_b^\top\mathbf L_b, \\
|\boldsymbol\Sigma_b|
&=|\mathbf L_b^\top\mathbf L_b|^{-1}
=|\mathbf L_b|^{-2}
=1, \\
|\sigma^2\boldsymbol\Sigma_b|^{-1/2}
&=(\sigma^2)^{-n_b^\ast/2}
=|\sigma^2\mathbf I_{n_b^\ast}|^{-1/2},
\end{aligned}
\end{equation*}
where $|\mathbf L_b|=1$ because its diagonal entries are one. Thus the prewhitening transformation has unit Jacobian and introduces no additional determinant factor depending on $\boldsymbol\phi$. Denote the cumulative transformed response vector and design matrix as  $\mathbf{y}_{(B)}^\ast:=(y_t^\ast)_{t\in T_{(B)}^\ast}\in\mathbb{R}^{|T_{(B)}^\ast|}$ and $\mathbf{X}_{(B)}^\ast:=[\mathbf{x}_t^{\ast\top}]_{t\in T_{(B)}^\ast}\in\mathbb{R}^{|T_{(B)}^\ast|\times k}$, respectively. Suppressing the conditioning on the initial observations and the boundary lags in the batchwise factors, the likelihood through batch $B$ is
\begin{equation*}
\begin{aligned}
&p\!\left(
\mathbf y_{(B)}
\mid\mathbf X_{(B)},
\boldsymbol\beta,\sigma^2,\boldsymbol\phi
\right) \\
&\quad=
p\!\left(
\mathbf y_{(B)}^\ast
\mid\mathbf X_{(B)}^\ast,
\boldsymbol\beta,\sigma^2,\boldsymbol\phi
\right) \\
&\quad=
\prod_{b=1}^B
(2\pi\sigma^2)^{-n_b^\ast/2}
\exp\!\left\{
-\frac{1}{2\sigma^2}
\left(\mathbf y_b^\ast-\mathbf X_b^\ast\boldsymbol\beta\right)^\top
\left(\mathbf y_b^\ast-\mathbf X_b^\ast\boldsymbol\beta\right)
\right\}.
\end{aligned}
\end{equation*}
Consequently, the full posterior under this static-parameter model can be written as
\begin{equation*}
\begin{aligned}
&p\!\left(
\boldsymbol\beta,\sigma^2,\boldsymbol\phi
\mid\mathbf y_{(B)},\mathbf X_{(B)}
\right) \\
&\quad\propto
p\!\left(
\mathbf y_{(B)}^\ast
\mid\mathbf X_{(B)}^\ast,
\boldsymbol\beta,\sigma^2,\boldsymbol\phi
\right)
\pi_\beta^{(0)}(\boldsymbol\beta)
\pi_{\sigma^2}^{(0)}(\sigma^2)
\pi_\phi^{(0)}(\boldsymbol\phi).
\end{aligned}
\end{equation*}

\subsection{Previous-Mean Boundary Treatment}
\label{app:variational_pm}

For the batch-$b$ PM update, $\boldsymbol\beta$ and $\sigma^2$ denote the current states $\boldsymbol\beta_b$ and $\sigma_b^2$. We instead use the PM boundary-adjusted quantities $\tilde{\mathbf y}_b^\ast$ and $\tilde{\mathbf X}_b^\ast$ defined in Section~\ref{sec:sec2}. For $b\geq2$, we use the retained mixture mean $\bar{\boldsymbol\beta}_{b-1}$ as a plug-in value in the PM boundary adjustment. Conditional on the retained state and boundary lags, the adjustment changes only the boundary contribution to the conditional mean; the within-batch filter remains $\mathbf L_b$, so its unit-Jacobian calculation above still applies. The PM working likelihood is therefore
\begin{equation*}
\begin{aligned}
p\!\left(
\tilde{\mathbf y}_b^\ast\mid\tilde{\mathbf X}_b^\ast,
\boldsymbol\beta,\sigma^2,\boldsymbol\phi
\right)
&=(2\pi\sigma^2)^{-n_b^\ast/2}
\exp\!\left\{
-\frac{1}{2\sigma^2}
\left\|\tilde{\mathbf y}_b^\ast-
\tilde{\mathbf X}_b^\ast\boldsymbol\beta\right\|^2
\right\}.
\end{aligned}
\end{equation*}
For $b=1$, the adjusted quantities reduce to the ordinary prewhitened quantities. The cumulative posterior above is the static benchmark, not the adaptive PM target. At batch $b$, we combine the PM working likelihood with the child-specific pseudo-prior in Section~\ref{subsec:reset_prior_vf} to obtain the target for the variational updates:

\begin{equation*}
\begin{aligned}
&\tilde p^{(b)}\!\left(
G_b=g,\boldsymbol\beta,\sigma^2,\boldsymbol\phi
\mid\mathbf y_{(b)},\mathbf X_{(b)}
\right) \\
&\quad\propto
\tilde\omega_{b,g}\,
p\!\left(
\tilde{\mathbf y}_b^\ast
\mid\tilde{\mathbf X}_b^\ast,
\boldsymbol\beta,\sigma^2,\boldsymbol\phi
\right)
\tilde\pi_{\beta,g}^{(b)}(\boldsymbol\beta)
\tilde\pi_{\sigma^2,g}^{(b)}(\sigma^2)
q_\phi^{(b-1)}(\boldsymbol\phi),
\quad g\in\mathcal G_b.
\end{aligned}
\end{equation*}
We obtain the variational updates by taking the expected log of this target with respect to the factors held fixed, absorbing terms independent of the updated coordinate into $\mathrm{const}$.
For the proofs below, define
\begin{equation*}
\begin{aligned}
\mathbf C_{xx,b}
&:=\mathbb E_{q_\phi^{(b)}}\!\left[
\tilde{\mathbf X}_b^{\ast\top}\tilde{\mathbf X}_b^\ast\right],\quad
\mathbf C_{xy,b}
:=\mathbb E_{q_\phi^{(b)}}\!\left[
\tilde{\mathbf X}_b^{\ast\top}\tilde{\mathbf y}_b^\ast\right],\\
C_{yy,b}
&:=\mathbb E_{q_\phi^{(b)}}\!\left[
\tilde{\mathbf y}_b^{\ast\top}\tilde{\mathbf y}_b^\ast\right],\quad
\kappa_{b,g}:=\mathbb E_{q_{\sigma^2,g}^{(b)}}(\sigma^{-2}).
\end{aligned}
\end{equation*}
The matrices $\mathbf M_{y,b}$ and $\mathbf F_{b,i}$ below are the PM lag matrices defined in Section~\ref{subsec:subsec3.2}. In particular, for $b\geq2$ and a boundary-crossing lag $j\geq i$, $[\mathbf M_{y,b}]_{ij}=y_{t_b+i-j-1}-\mathbf x_{t_b+i-j-1}^\top\bar{\boldsymbol\beta}_{b-1}$ and $[\mathbf F_{b,\ell}]_{ij}=0$.

\begin{proof}[Proof of Proposition~\ref{prop:component_beta_update_independent}]
For fixed $q_{\sigma^2,g}^{(b)}$ and $q_\phi^{(b)}$, the child-specific target above gives
\begin{equation*}
\begin{aligned}
\log q_{\beta,g}^{(b)}(\boldsymbol\beta)
&=
\mathbb E_{q_{\sigma^2,g}^{(b)}q_\phi^{(b)}}
\!\left[
\log\tilde p^{(b)}\!\left(
G_b=g,\boldsymbol\beta,\sigma^2,\boldsymbol\phi
\mid\mathbf y_{(b)},\mathbf X_{(b)}
\right)
\right]
+\mathrm{const} \\
&=
\mathbb E_{q_{\sigma^2,g}^{(b)}q_\phi^{(b)}}
\!\left[
\log p\!\left(
\tilde{\mathbf y}_b^\ast
\mid\tilde{\mathbf X}_b^\ast,
\boldsymbol\beta,\sigma^2,\boldsymbol\phi
\right)
\right]
+\log\tilde\pi_{\beta,g}^{(b)}(\boldsymbol\beta)
+\mathrm{const}.
\end{aligned}
\end{equation*}
Since
\[
\mathbb E_{q_\phi^{(b)}}
\!\left[
\left\|\tilde{\mathbf y}_b^\ast-\tilde{\mathbf X}_b^\ast\boldsymbol\beta\right\|^2
\right]
=
C_{yy,b}
-2\boldsymbol\beta^\top\mathbf C_{xy,b}
+\boldsymbol\beta^\top\mathbf C_{xx,b}\boldsymbol\beta,
\]
we obtain
\begin{equation*}
\begin{aligned}
\log q_{\beta,g}^{(b)}(\boldsymbol\beta)
&=
-\frac{\kappa_{b,g}}{2}
\left(
C_{yy,b}
-2\boldsymbol\beta^\top\mathbf C_{xy,b}
+\boldsymbol\beta^\top\mathbf C_{xx,b}\boldsymbol\beta
\right) \\
&\quad
-\frac12
\left(\boldsymbol\beta-\tilde{\boldsymbol\beta}_{b,g}\right)^\top
\tilde{\mathbf A}_{b,g}
\left(\boldsymbol\beta-\tilde{\boldsymbol\beta}_{b,g}\right)
+\mathrm{const} \\
&=
-\frac12\boldsymbol\beta^\top
\left(
\tilde{\mathbf A}_{b,g}
+\kappa_{b,g}\mathbf C_{xx,b}
\right)\boldsymbol\beta
+\boldsymbol\beta^\top
\left(
\tilde{\mathbf A}_{b,g}\tilde{\boldsymbol\beta}_{b,g}
+\kappa_{b,g}\mathbf C_{xy,b}
\right)
+\mathrm{const}.
\end{aligned}
\end{equation*}
With
\begin{equation*}
\begin{aligned}
\mathbf V_{b,g}
&=
\tilde{\mathbf A}_{b,g}
+\kappa_{b,g}\mathbf C_{xx,b}, \\
\mathbf m_{b,g}
&=
\mathbf V_{b,g}^{-1}
\left(
\tilde{\mathbf A}_{b,g}\tilde{\boldsymbol\beta}_{b,g}
+\kappa_{b,g}\mathbf C_{xy,b}
\right),
\end{aligned}
\end{equation*}
the preceding kernel becomes
\[
\log q_{\beta,g}^{(b)}(\boldsymbol\beta)
=
-\frac12
\left(\boldsymbol\beta-\mathbf m_{b,g}\right)^\top
\mathbf V_{b,g}
\left(\boldsymbol\beta-\mathbf m_{b,g}\right)
+\mathrm{const}.
\]
Hence $q_{\beta,g}^{(b)}(\boldsymbol\beta)=\mathrm N_k(\boldsymbol\beta\mid\mathbf m_{b,g},\mathbf V_{b,g}^{-1})$.
\end{proof}

\begin{proof}[Proof of Proposition~\ref{prop:component_sigma_update_independent}]
For fixed $q_{\beta,g}^{(b)}$ and $q_\phi^{(b)}$,
\[
\log q_{\sigma^2,g}^{(b)}(\sigma^2)
=
\mathbb E_{q_{\beta,g}^{(b)}q_\phi^{(b)}}
\!\left[
\log p\!\left(\tilde{\mathbf y}_b^\ast\mid\tilde{\mathbf X}_b^\ast,\boldsymbol\beta,\sigma^2,\boldsymbol\phi\right)
\right]
+
\log\tilde\pi_{\sigma^2,g}^{(b)}(\sigma^2)
+
\mathrm{const}.
\]
The expected residual sum of squares is
\begin{equation*}
\begin{aligned}
\mathcal R_{b,g}
&:=
\mathbb E_{q_{\beta,g}^{(b)}q_\phi^{(b)}}
\!\left[
\left\|\tilde{\mathbf y}_b^\ast-\tilde{\mathbf X}_b^\ast\boldsymbol\beta\right\|^2
\right] \\
&=
C_{yy,b}
-2\mathbf m_{b,g}^\top\mathbf C_{xy,b}
+\operatorname{tr}\!\left
\{
\mathbf C_{xx,b}
\mathbb E_{q_{\beta,g}^{(b)}}
\!\left(\boldsymbol\beta\boldsymbol\beta^\top\right)
\right\} \\
&=
C_{yy,b}
-2\mathbf m_{b,g}^\top\mathbf C_{xy,b}
+\operatorname{tr}\!\left[
\mathbf C_{xx,b}
\left(
\mathbf V_{b,g}^{-1}
+\mathbf m_{b,g}\mathbf m_{b,g}^\top
\right)
\right].
\end{aligned}
\end{equation*}
Therefore,
\begin{equation*}
\begin{aligned}
\log q_{\sigma^2,g}^{(b)}(\sigma^2)
&=
-\frac{n_b^\ast}{2}\log\sigma^2
-\frac{\mathcal R_{b,g}}{2\sigma^2}
-\left(\frac{\tilde\nu_{b,g}}{2}+1\right)\log\sigma^2
-\frac{\tilde\delta_{b,g}}{2\sigma^2}
+\mathrm{const} \\
&=
-\left(\frac{n_b^\ast+\tilde\nu_{b,g}}{2}+1\right)\log\sigma^2
-\frac{\tilde\delta_{b,g}+\mathcal R_{b,g}}{2\sigma^2}
+\mathrm{const}.
\end{aligned}
\end{equation*}
Thus $q_{\sigma^2,g}^{(b)}(\sigma^2)=\mathrm{IG}(\sigma^2\mid\alpha_{b,g}/2,\lambda_{b,g}/2)$, where
\[
\alpha_{b,g}=n_b^\ast+\tilde\nu_{b,g},
\quad
\lambda_{b,g}=\tilde\delta_{b,g}+\mathcal R_{b,g},
\]
which is the stated update. \end{proof}

\begin{proof}[Proof of Proposition~\ref{prop:phi_expectations}]
Write $\mathbf X=\mathbf X_{b\mid T_b^\ast}$, $\mathbf y=\mathbf y_{b\mid T_b^\ast}$, $\mathbf M=\mathbf M_{y,b}$, $\mathbf F_i=\mathbf F_{b,i}$, $\boldsymbol\Phi=\mathbf I_k\otimes\boldsymbol\phi$, and $\bar{\boldsymbol\Phi}_b=\mathbf I_k\otimes\mathbf c_b$. Under $q_\phi^{(b)}(\boldsymbol\phi)=\mathrm N_p(\boldsymbol\phi\mid\mathbf c_b,\mathbf D_b^{-1})$,
\[
\mathbb E(\boldsymbol\phi)=\mathbf c_b,
\quad
\mathbb E(\boldsymbol\phi\boldsymbol\phi^\top)=\mathbf D_b^{-1}+\mathbf c_b\mathbf c_b^\top,
\quad
\mathbb E(\boldsymbol\phi^\top\mathbf A\boldsymbol\phi)
=\mathbf c_b^\top\mathbf A\mathbf c_b+\operatorname{tr}(\mathbf A\mathbf D_b^{-1}).
\]
First,
\begin{equation*}
\begin{aligned}
\tilde{\mathbf X}_b^{\ast\top}\tilde{\mathbf X}_b^\ast
&=
(\mathbf X-\mathbf F_b\boldsymbol\Phi)^\top
(\mathbf X-\mathbf F_b\boldsymbol\Phi) \\
&=
\mathbf X^\top\mathbf X
-\mathbf X^\top\mathbf F_b\boldsymbol\Phi
-\boldsymbol\Phi^\top\mathbf F_b^\top\mathbf X
+\boldsymbol\Phi^\top\mathbf F_b^\top\mathbf F_b\boldsymbol\Phi,
\end{aligned}
\end{equation*}
and
\[
\boldsymbol\Phi^\top\mathbf F_b^\top\mathbf F_b\boldsymbol\Phi
=
\left[
\boldsymbol\phi^\top\mathbf F_i^\top\mathbf F_j\boldsymbol\phi
\right]_{i,j=1}^k.
\]
Hence,
\begin{equation*}
\begin{aligned}
\mathbb E_{q_\phi^{(b)}}
\!\left(
\boldsymbol\Phi^\top\mathbf F_b^\top\mathbf F_b\boldsymbol\Phi
\right)
&=
\bar{\boldsymbol\Phi}_b^\top\mathbf F_b^\top\mathbf F_b\bar{\boldsymbol\Phi}_b
+
\left[
\operatorname{tr}\!\left(
\mathbf F_i^\top\mathbf F_j\mathbf D_b^{-1}
\right)
\right]_{i,j=1}^k,
\end{aligned}
\end{equation*}
which gives
\[
\mathbf C_{xx,b}
=
(\mathbf X-\mathbf F_b\bar{\boldsymbol\Phi}_b)^\top
(\mathbf X-\mathbf F_b\bar{\boldsymbol\Phi}_b)
+
\left[
\operatorname{tr}\!\left(
\mathbf F_i^\top\mathbf F_j\mathbf D_b^{-1}
\right)
\right]_{i,j=1}^k.
\]
Second,
\begin{equation*}
\begin{aligned}
\tilde{\mathbf X}_b^{\ast\top}\tilde{\mathbf y}_b^\ast
&=
(\mathbf X-\mathbf F_b\boldsymbol\Phi)^\top
(\mathbf y-\mathbf M\boldsymbol\phi) \\
&=
\mathbf X^\top\mathbf y
-\mathbf X^\top\mathbf M\boldsymbol\phi
-\boldsymbol\Phi^\top\mathbf F_b^\top\mathbf y
+\boldsymbol\Phi^\top\mathbf F_b^\top\mathbf M\boldsymbol\phi,
\end{aligned}
\end{equation*}
where
\[
\boldsymbol\Phi^\top\mathbf F_b^\top\mathbf M\boldsymbol\phi
=
\left[
\boldsymbol\phi^\top\mathbf F_i^\top\mathbf M\boldsymbol\phi
\right]_{i=1}^k.
\]
Therefore,
\begin{equation*}
\begin{aligned}
\mathbb E_{q_\phi^{(b)}}
\!\left(
\boldsymbol\Phi^\top\mathbf F_b^\top\mathbf M\boldsymbol\phi
\right)
&=
\bar{\boldsymbol\Phi}_b^\top\mathbf F_b^\top\mathbf M\mathbf c_b
+
\left[
\operatorname{tr}\!\left(
\mathbf F_i^\top\mathbf M\mathbf D_b^{-1}
\right)
\right]_{i=1}^k,
\end{aligned}
\end{equation*}
so that
\[
\mathbf C_{xy,b}
=
(\mathbf X-\mathbf F_b\bar{\boldsymbol\Phi}_b)^\top
(\mathbf y-\mathbf M\mathbf c_b)
+
\left[
\operatorname{tr}\!\left(
\mathbf F_i^\top\mathbf M\mathbf D_b^{-1}
\right)
\right]_{i=1}^k.
\]
Finally,
\begin{equation*}
\begin{aligned}
\mathbb E_{q_\phi^{(b)}}
\!\left(\tilde{\mathbf y}_b^{\ast\top}\tilde{\mathbf y}_b^\ast\right)
&=
\mathbb E_{q_\phi^{(b)}}
\!\left[
(\mathbf y-\mathbf M\boldsymbol\phi)^\top
(\mathbf y-\mathbf M\boldsymbol\phi)
\right] \\
&=
\mathbf y^\top\mathbf y
-2\mathbf y^\top\mathbf M\mathbf c_b
+\mathbf c_b^\top\mathbf M^\top\mathbf M\mathbf c_b
+\operatorname{tr}\!\left(
\mathbf M^\top\mathbf M\mathbf D_b^{-1}
\right) \\
&=
(\mathbf y-\mathbf M\mathbf c_b)^\top
(\mathbf y-\mathbf M\mathbf c_b)
+\operatorname{tr}\!\left(
\mathbf M^\top\mathbf M\mathbf D_b^{-1}
\right),
\end{aligned}
\end{equation*}
which proves all three identities. \end{proof}

\begin{proof}[Proof of Proposition~\ref{prop:vb_phi}]
For compactness, retain the shorthand from the preceding proof and set $\boldsymbol\Sigma_{b,g}=\mathbf V_{b,g}^{-1}$, $\mathbf r_{b,g}=\mathbf y-\mathbf X\mathbf m_{b,g}$, and $\boldsymbol\delta_{b,g}=\boldsymbol\beta-\mathbf m_{b,g}$. Since $\mathbf E_{b,g}=\mathbf M-\sum_{i=1}^k m_{b,g,i}\mathbf F_i$,
\begin{equation*}
\begin{aligned}
\tilde{\mathbf y}_b^\ast-\tilde{\mathbf X}_b^\ast\boldsymbol\beta
&=
\mathbf y-\mathbf X\boldsymbol\beta
-\left(
\mathbf M-\sum_{i=1}^k\beta_i\mathbf F_i
\right)\boldsymbol\phi \\
&=
\mathbf r_{b,g}-\mathbf E_{b,g}\boldsymbol\phi
-(\mathbf X-\mathbf F_b\boldsymbol\Phi)\boldsymbol\delta_{b,g}.
\end{aligned}
\end{equation*}
Using $\mathbb E(\boldsymbol\delta_{b,g})=0$ and $\mathbb E(\boldsymbol\delta_{b,g}\boldsymbol\delta_{b,g}^\top)=\boldsymbol\Sigma_{b,g}$,
\begin{equation*}
\begin{aligned}
\mathbb E_{q_{\beta,g}^{(b)}}
\!\left[
\left\|\tilde{\mathbf y}_b^\ast-\tilde{\mathbf X}_b^\ast\boldsymbol\beta\right\|^2
\right]
&=
\left\|\mathbf r_{b,g}-\mathbf E_{b,g}\boldsymbol\phi\right\|^2 \\
&\quad+
\operatorname{tr}\!\left[
(\mathbf X-\mathbf F_b\boldsymbol\Phi)^\top
(\mathbf X-\mathbf F_b\boldsymbol\Phi)
\boldsymbol\Sigma_{b,g}
\right].
\end{aligned}
\end{equation*}
The two terms on the right satisfy
\begin{equation*}
\begin{aligned}
\left\|\mathbf r_{b,g}-\mathbf E_{b,g}\boldsymbol\phi\right\|^2
&=
\mathbf r_{b,g}^\top\mathbf r_{b,g}
-2\boldsymbol\phi^\top\mathbf E_{b,g}^\top\mathbf r_{b,g}
+\boldsymbol\phi^\top\mathbf E_{b,g}^\top\mathbf E_{b,g}\boldsymbol\phi, \\
\operatorname{tr}\!\left[
(\mathbf X-\mathbf F_b\boldsymbol\Phi)^\top
(\mathbf X-\mathbf F_b\boldsymbol\Phi)
\boldsymbol\Sigma_{b,g}
\right]
&=
\operatorname{tr}\!\left(\mathbf X^\top\mathbf X\boldsymbol\Sigma_{b,g}\right) \\
&\quad
-2\boldsymbol\phi^\top
\sum_{i=1}^k\sum_{j=1}^k
[\boldsymbol\Sigma_{b,g}]_{ij}
\mathbf F_i^\top\mathbf X\mathbf e_j \\
&\quad
+\boldsymbol\phi^\top
\left(
\sum_{i=1}^k\sum_{j=1}^k
[\boldsymbol\Sigma_{b,g}]_{ij}
\mathbf F_i^\top\mathbf F_j
\right)
\boldsymbol\phi.
\end{aligned}
\end{equation*}
Consequently,
\begin{equation*}
\begin{aligned}
\mathbb E_{q_{\beta,g}^{(b)}}
\!\left[
\left\|\tilde{\mathbf y}_b^\ast-\tilde{\mathbf X}_b^\ast\boldsymbol\beta\right\|^2
\right]
&=
a_{b,g}
-2\boldsymbol\phi^\top
\left[
\mathbf E_{b,g}^\top\mathbf r_{b,g}
+
\sum_{i=1}^k\sum_{j=1}^k
[\mathbf V_{b,g}^{-1}]_{ij}
\mathbf F_i^\top\mathbf X\mathbf e_j
\right] \\
&\quad
+\boldsymbol\phi^\top
\left[
\mathbf E_{b,g}^\top\mathbf E_{b,g}
+
\sum_{i=1}^k\sum_{j=1}^k
[\mathbf V_{b,g}^{-1}]_{ij}
\mathbf F_i^\top\mathbf F_j
\right]
\boldsymbol\phi,
\end{aligned}
\end{equation*}
where $a_{b,g}$ is independent of $\boldsymbol\phi$. The common-factor coordinate update is
\[
\log q_\phi^{(b)}(\boldsymbol\phi)
=
\sum_{g\in\mathcal G_b}
\omega_{b,g}
\mathbb E_{q_{\beta,g}^{(b)}q_{\sigma^2,g}^{(b)}}
\!\left[
\log p\!\left(\tilde{\mathbf y}_b^\ast\mid\tilde{\mathbf X}_b^\ast,\boldsymbol\beta,\sigma^2,\boldsymbol\phi\right)
\right]
+
\log q_\phi^{(b-1)}(\boldsymbol\phi)
+
\mathrm{const}.
\]
Substitution of the preceding quadratic form gives
\begin{equation*}
\begin{aligned}
&\log q_\phi^{(b)}(\boldsymbol\phi)\\
& =  
-\frac12\boldsymbol\phi^\top
\Bigg[
\boldsymbol\Phi_0^{(b-1)}
+
\sum_{g\in\mathcal G_b}
\omega_{b,g}\kappa_{b,g}
\Bigg\{
\mathbf E_{b,g}^\top\mathbf E_{b,g} 
+
\sum_{i=1}^k\sum_{j=1}^k
[\mathbf V_{b,g}^{-1}]_{ij}
\mathbf F_{b,i}^\top\mathbf F_{b,j}
\Bigg\}
\Bigg]
\boldsymbol\phi \\
&\quad
+\boldsymbol\phi^\top
\Bigg\{
\boldsymbol\Phi_0^{(b-1)}\boldsymbol\phi_0^{(b-1)}
+
\sum_{g\in\mathcal G_b}
\omega_{b,g}\kappa_{b,g}
\Bigg[
\mathbf E_{b,g}^\top
\left(
\mathbf y_{b\mid T_b^\ast}
-\mathbf X_{b\mid T_b^\ast}\mathbf m_{b,g}
\right) 
+
\sum_{i=1}^k\sum_{j=1}^k
[\mathbf V_{b,g}^{-1}]_{ij}
\mathbf F_{b,i}^\top
\mathbf X_{b\mid T_b^\ast}\mathbf e_j
\Bigg]
\Bigg\}\\
&\quad +\mathrm{const}.
\end{aligned}
\end{equation*}
The quadratic coefficient is $\mathbf D_b$, and the linear coefficient is $\mathbf D_b\mathbf c_b$ by the definitions in Proposition~\ref{prop:vb_phi}. Therefore,
\[
\log q_\phi^{(b)}(\boldsymbol\phi)
=
-\frac12
(\boldsymbol\phi-\mathbf c_b)^\top
\mathbf D_b
(\boldsymbol\phi-\mathbf c_b)
+\mathrm{const},
\]
so $q_\phi^{(b)}(\boldsymbol\phi)=\mathrm N_p(\boldsymbol\phi\mid\mathbf c_b,\mathbf D_b^{-1})$. \end{proof}

\begin{proof}[Proof of Proposition~\ref{prop:mixture_weight_update}]
For fixed component factors and fixed $q_\phi^{(b)}$, the terms of the batchwise ELBO depending on $\boldsymbol\omega_b=(\omega_{b,g})_{g\in\mathcal G_b}$ are
\[
\mathcal L_b(\boldsymbol\omega_b)
=
\sum_{g\in\mathcal G_b}
\omega_{b,g}
\left
\{
J_{b,g}+\log\tilde\omega_{b,g}-\log\omega_{b,g}
\right\}
+\mathrm{const},
\quad
\sum_{g\in\mathcal G_b}\omega_{b,g}=1.
\]
With multiplier $\zeta$,
\begin{equation*}
\begin{aligned}
\mathcal H(\boldsymbol\omega_b,\zeta)
&=
\sum_{g\in\mathcal G_b}
\omega_{b,g}
\left\{
J_{b,g}+\log\tilde\omega_{b,g}-\log\omega_{b,g}
\right\}
+\zeta
\left(
\sum_{g\in\mathcal G_b}\omega_{b,g}-1
\right), \\
0
&=
\frac{\partial\mathcal H}{\partial\omega_{b,g}}
=
J_{b,g}+\log\tilde\omega_{b,g}-\log\omega_{b,g}-1+\zeta,
\end{aligned}
\end{equation*}
so $\omega_{b,g}=\exp(\zeta-1)\tilde\omega_{b,g}\exp(J_{b,g})$. Enforcing the simplex constraint yields
\[
\omega_{b,g}
=
\frac{
\tilde\omega_{b,g}\exp(J_{b,g})
}{
\displaystyle\sum_{j\in\mathcal G_b}
\tilde\omega_{b,j}\exp(J_{b,j})
},
\quad g\in\mathcal G_b.
\]
The objective is strictly concave on the interior of the simplex, so this stationary point is the unique coordinate maximizer. \end{proof}

\subsection{Full-Joint Boundary Treatment}
\label{app:variational_fj}

The FJ treatment uses the same separate-reset candidate construction, child-weight rule, and regression and variance discount formulas as the PM treatment. In this subsection, the retained factors and their parameters are those produced by the FJ recursion. We write $\sigma^2$ for the current innovation variance $\sigma_b^2$. 
At $b=1$, the singleton candidate uses the baseline prior and ordinary conditional prewhitening, so the PM and FJ updates coincide.

For $b\geq2$, retain the assumption $n_{b-1}\geq p$. To separate the two regression states, define the raw lagged-response matrix $\mathbf M_b\in\mathbb R^{n_b^\ast\times p}$ and the lagged-covariate matrices $\mathbf F_{b,\ell}^{c},\mathbf F_{b,\ell}^{p}\in\mathbb R^{n_b^\ast\times p}$ by
\begin{equation*}
\begin{aligned}
{}[\mathbf M_b]_{ij}
&=y_{t_b+i-j-1},\\
[\mathbf F_{b,\ell}^{c}]_{ij}
&=x_{t_b+i-j-1,\ell}\,1\{j<i\},\quad
[\mathbf F_{b,\ell}^{p}]_{ij}
=x_{t_b+i-j-1,\ell}\,1\{j\geq i\},
\end{aligned}
\end{equation*}
for $i=1,\ldots,n_b^\ast$, $j=1,\ldots,p$, and $\ell=1,\ldots,k$. The superscripts $c$ and $p$ indicate current- and previous-batch contributions. Set $\mathbf F_b^{c}=[\mathbf F_{b,1}^{c}\ \cdots\ \mathbf F_{b,k}^{c}]$, $\mathbf F_b^{p}=[\mathbf F_{b,1}^{p}\ \cdots\ \mathbf F_{b,k}^{p}]$, and $\boldsymbol\Phi=\mathbf I_k\otimes\boldsymbol\phi$. The boundary representation in main-text equation~\eqref{eq:boundary-working} becomes
\begin{equation*}
\begin{aligned}
\mathbf z_b(\boldsymbol\phi)&=\mathbf y_{b\mid T_b^\ast}-\mathbf M_b\boldsymbol\phi,\\
\mathbf A_b(\boldsymbol\phi)&=\mathbf X_{b\mid T_b^\ast}-\mathbf F_b^{c}\boldsymbol\Phi,\quad
\mathbf B_b(\boldsymbol\phi)=-\mathbf F_b^{p}\boldsymbol\Phi,\\
\mathbf z_b(\boldsymbol\phi)
&=\mathbf A_b(\boldsymbol\phi)\boldsymbol\beta_b
+\mathbf B_b(\boldsymbol\phi)\boldsymbol\beta_{b-1}
+\mathbf u_b,
\quad \mathbf u_b\sim\mathrm N_{n_b^\ast}(\mathbf 0,\sigma^2\mathbf I_{n_b^\ast}).
\end{aligned}
\end{equation*}
Thus $\mathbf z_b=\mathbf y_b^\ast$ and $\mathbf A_b+\mathbf B_b=\mathbf X_b^\ast$ for ordinary prewhitening. Conditional on both regression states and the stored boundary lags, the within-batch filter is again $\mathbf L_b$, so no additional determinant factor is introduced.

For a regression-reset candidate $g$ with $z_{\beta,g}=1$, define $\boldsymbol\theta_b=(\boldsymbol\beta_b^\top,\boldsymbol\beta_{b-1}^\top)^\top\in\mathbb R^{2k}$ and $\mathbf W_b(\boldsymbol\phi)=[\mathbf A_b(\boldsymbol\phi)\ \mathbf B_b(\boldsymbol\phi)]$. Its local regression pseudo-prior is
\begin{equation*}
\begin{aligned}
\tilde\pi_{\beta,g}^{(b),\mathrm{FJ}}(\boldsymbol\theta_b)
&=\tilde\pi_{\beta,g}^{(b)}(\boldsymbol\beta_b)
q_{\beta,r_g}^{(b-1)}(\boldsymbol\beta_{b-1})\\
&=\mathrm N_{2k}(\boldsymbol\theta_b\mid\boldsymbol\mu_{0,b,g},\mathbf P_{0,b,g}^{-1}),
\end{aligned}
\end{equation*} where
\begin{align} 
\label{app:mu_P_fj}\boldsymbol\mu_{0,b,g}
=\begin{pmatrix}
\tilde{\boldsymbol\beta}_{b,g}\\
\boldsymbol\beta_{0,r_g}^{(b-1)}
\end{pmatrix},\quad
\mathbf P_{0,b,g}
=\begin{pmatrix}
\tilde{\mathbf A}_{b,g}&\mathbf 0\\
\mathbf 0&\mathbf A_{0,r_g}^{(b-1)}
\end{pmatrix}.
\end{align}
Conditional on the retained state and parent index, the current pseudo-prior parameters are fixed. The candidate target is
\begin{equation*}
\begin{aligned}
&\tilde p_{\mathrm{FJ}}^{(b)}\!\left(
G_b=g,\boldsymbol\theta_b,\sigma^2,\boldsymbol\phi
\mid\mathbf y_{(b)},\mathbf X_{(b)}\right)\\
&\quad\propto\tilde\omega_{b,g}\,
p\!\left(\mathbf z_b\mid\mathbf W_b,\boldsymbol\theta_b,\sigma^2,\boldsymbol\phi\right)
\tilde\pi_{\beta,g}^{(b),\mathrm{FJ}}(\boldsymbol\theta_b)
\tilde\pi_{\sigma^2,g}^{(b)}(\sigma^2)
q_\phi^{(b-1)}(\boldsymbol\phi),
\end{aligned}
\end{equation*}
where the Gaussian likelihood has residual $\mathbf z_b-\mathbf W_b\boldsymbol\theta_b$ and covariance $\sigma^2\mathbf I_{n_b^\ast}$. Its candidate variational factorization is
\begin{equation*}
\tilde q_{\mathrm{FJ}}^{(b)}(G_b=g,\boldsymbol\theta_b,\sigma^2,\boldsymbol\phi)
=\omega_{b,g}\,q_{\beta,g}^{(b)}(\boldsymbol\theta_b)
q_{\sigma^2,g}^{(b)}(\sigma^2)q_\phi^{(b)}(\boldsymbol\phi).
\end{equation*}
The regression factor is joint Gaussian in the two regression states.

For a regression-retention candidate with $z_{\beta,g}=0$, impose $\boldsymbol\beta_b=\boldsymbol\beta_{b-1}$ and use the inherited Gaussian $q_{\beta,r_g}^{(b-1)}(\boldsymbol\beta_b)$ once. The target and candidate approximation are instead defined on the $k$-dimensional regression coordinate:
\begin{equation*}
\begin{aligned}
\tilde p_{\mathrm{FJ}}^{(b)}\!\left(
G_b=g,\boldsymbol\beta_b,\sigma^2,\boldsymbol\phi
\mid\mathbf y_{(b)},\mathbf X_{(b)}\right)&
\propto\tilde\omega_{b,g}\,
p\!\left(\mathbf z_b\mid\mathbf A_b+\mathbf B_b,
\boldsymbol\beta_b,\sigma^2,\boldsymbol\phi\right)
q_{\beta,r_g}^{(b-1)}(\boldsymbol\beta_b)
\tilde\pi_{\sigma^2,g}^{(b)}(\sigma^2)
q_\phi^{(b-1)}(\boldsymbol\phi),\\
\tilde q_{\mathrm{FJ}}^{(b)}(G_b=g,\boldsymbol\beta_b,\sigma^2,\boldsymbol\phi)
&=\omega_{b,g}\,q_{\beta,g}^{(b)}(\boldsymbol\beta_b)
q_{\sigma^2,g}^{(b)}(\sigma^2)q_\phi^{(b)}(\boldsymbol\phi).
\end{aligned}
\end{equation*}
This includes variance-only reset candidates with $(z_{\beta,g},z_{\sigma^2,g})=(0,1)$. Densities and KL divergences are evaluated in the corresponding coordinate space.

\begin{proof}[Derivation of the regression-reset update]
Fix $q_{\sigma^2,g}^{(b)}$ and $q_\phi^{(b)}$, and let $\kappa_{b,g}=\mathbb E_{q_{\sigma^2,g}^{(b)}}(\sigma^{-2})$. Suppressing the argument $\boldsymbol\phi$ of $\mathbf z_b$ and $\mathbf W_b$, the coordinate update is
\begin{equation*}
\begin{aligned}
\log q_{\beta,g}^{(b)}(\boldsymbol\theta_b)
&=-\frac{\kappa_{b,g}}{2}
\mathbb E_{q_\phi^{(b)}}\!\left[
\|\mathbf z_b-\mathbf W_b\boldsymbol\theta_b\|^2\right]\\
&\quad-\frac12(\boldsymbol\theta_b-\boldsymbol\mu_{0,b,g})^\top
\mathbf P_{0,b,g}(\boldsymbol\theta_b-\boldsymbol\mu_{0,b,g})
+\mathrm{const}\\
&=-\frac12\boldsymbol\theta_b^\top\mathbf H_{b,g}\boldsymbol\theta_b
+\boldsymbol\theta_b^\top\mathbf H_{b,g}\mathbf m_{b,g}
+\mathrm{const}.
\end{aligned}
\end{equation*}
Completing the square gives $q_{\beta,g}^{(b)}(\boldsymbol\theta_b)=\mathrm N_{2k}(\boldsymbol\theta_b\mid\mathbf m_{b,g},\boldsymbol\Sigma_{b,g})$, where
\begin{equation*}
\begin{aligned}
\mathbf H_{b,g}
&=\mathbf P_{0,b,g}+\kappa_{b,g}
\mathbb E_{q_\phi^{(b)}}(\mathbf W_b^\top\mathbf W_b),\quad
\boldsymbol\Sigma_{b,g}=\mathbf H_{b,g}^{-1},\\
\mathbf m_{b,g}
&=\boldsymbol\Sigma_{b,g}
\left\{\mathbf P_{0,b,g}\boldsymbol\mu_{0,b,g}
+\kappa_{b,g}\mathbb E_{q_\phi^{(b)}}(\mathbf W_b^\top\mathbf z_b)\right\}.
\end{aligned}
\end{equation*}
Partition the updated mean and covariance as
\begin{equation*}
\mathbf m_{b,g}=\begin{pmatrix}\mathbf m_{b,g}^{c}\\\mathbf m_{b,g}^{p}\end{pmatrix},\quad
\boldsymbol\Sigma_{b,g}=\begin{pmatrix}
\boldsymbol\Sigma_{b,g}^{cc}&\boldsymbol\Sigma_{b,g}^{cp}\\
\boldsymbol\Sigma_{b,g}^{pc}&\boldsymbol\Sigma_{b,g}^{pp}
\end{pmatrix}.
\end{equation*}
For $q_\phi^{(b)}=\mathrm N_p(\mathbf c_b,\mathbf D_b^{-1})$ and affine matrix functions $\mathbf U(\boldsymbol\phi)=\mathbf U_0-\sum_{i=1}^p\phi_i\mathbf U_i$ and $\mathbf V(\boldsymbol\phi)=\mathbf V_0-\sum_{i=1}^p\phi_i\mathbf V_i$ of compatible dimensions, the calculation in the proof of Proposition~\ref{prop:phi_expectations} gives
\begin{equation*}
\mathbb E_{q_\phi^{(b)}}\!\left[\mathbf U(\boldsymbol\phi)^\top\mathbf V(\boldsymbol\phi)\right]
=\mathbf U(\mathbf c_b)^\top\mathbf V(\mathbf c_b)
+\sum_{i=1}^p\sum_{j=1}^p[\mathbf D_b^{-1}]_{ij}\mathbf U_i^\top\mathbf V_j.
\end{equation*}
Taking $\mathbf U,\mathbf V$ from $\mathbf A_b$, $\mathbf B_b$, and $\mathbf z_b$ gives the required expectations.
\end{proof}

\begin{proof}[Derivation of the regression-retention update]
Under $\boldsymbol\beta_b=\boldsymbol\beta_{b-1}$, the boundary residual is $\mathbf z_b-(\mathbf A_b+\mathbf B_b)\boldsymbol\beta_b=\mathbf y_b^\ast-\mathbf X_b^\ast\boldsymbol\beta_b$. Hence,
\begin{equation*}
\begin{aligned}
\log q_{\beta,g}^{(b)}(\boldsymbol\beta_b)
&=-\frac{\kappa_{b,g}}{2}
\mathbb E_{q_\phi^{(b)}}\!\left[
\|\mathbf z_b-(\mathbf A_b+\mathbf B_b)\boldsymbol\beta_b\|^2\right]\\
&\quad-\frac12(\boldsymbol\beta_b-\boldsymbol\beta_{0,r_g}^{(b-1)})^\top
\mathbf A_{0,r_g}^{(b-1)}
(\boldsymbol\beta_b-\boldsymbol\beta_{0,r_g}^{(b-1)})+\mathrm{const}\\
&=-\frac12\boldsymbol\beta_b^\top\mathbf V_{b,g}\boldsymbol\beta_b
+\boldsymbol\beta_b^\top\mathbf V_{b,g}\mathbf m_{b,g}+\mathrm{const},
\end{aligned}
\end{equation*}
where
\begin{equation*}
\begin{aligned}
\mathbf V_{b,g}
&=\mathbf A_{0,r_g}^{(b-1)}+\kappa_{b,g}
\mathbb E_{q_\phi^{(b)}}\!\left[(\mathbf A_b+\mathbf B_b)^\top(\mathbf A_b+\mathbf B_b)\right],\\
\mathbf m_{b,g}
&=\mathbf V_{b,g}^{-1}\left\{
\mathbf A_{0,r_g}^{(b-1)}\boldsymbol\beta_{0,r_g}^{(b-1)}
+\kappa_{b,g}\mathbb E_{q_\phi^{(b)}}\!\left[(\mathbf A_b+\mathbf B_b)^\top\mathbf z_b\right]
\right\}.
\end{aligned}
\end{equation*}
Completing the square yields $q_{\beta,g}^{(b)}(\boldsymbol\beta_b)=\mathrm N_k(\boldsymbol\beta_b\mid\mathbf m_{b,g},\mathbf V_{b,g}^{-1})$.
\end{proof}

\begin{proof}[Derivation of the innovation-variance update]
For a regression-reset candidate, define
\begin{equation*}
\begin{aligned}
\mathcal R_{b,g}^{\mathrm{FJ}}
&:=\mathbb E_{q_{\beta,g}^{(b)}q_\phi^{(b)}}
\!\left[\|\mathbf z_b-\mathbf W_b\boldsymbol\theta_b\|^2\right]\\
&=\mathbb E_{q_\phi^{(b)}}\!\left[
\|\mathbf z_b-\mathbf W_b\mathbf m_{b,g}\|^2
+\operatorname{tr}(\mathbf W_b\boldsymbol\Sigma_{b,g}\mathbf W_b^\top)
\right].
\end{aligned}
\end{equation*}
For a regression-retention candidate, the corresponding quantity is
\begin{equation*}
\begin{aligned}
\mathcal R_{b,g}^{\mathrm{FJ}}
&=\mathbb E_{q_\phi^{(b)}}\Big[
\|\mathbf z_b-(\mathbf A_b+\mathbf B_b)\mathbf m_{b,g}\|^2\\
&\quad+\operatorname{tr}\!\left\{
(\mathbf A_b+\mathbf B_b)^\top(\mathbf A_b+\mathbf B_b)\mathbf V_{b,g}^{-1}
\right\}\Big].
\end{aligned}
\end{equation*}
In both cases, the same inverse-gamma pseudo-prior gives
\begin{equation*}
\begin{aligned}
\log q_{\sigma^2,g}^{(b)}(\sigma^2)
&=-\left(\frac{n_b^\ast+\tilde\nu_{b,g}}{2}+1\right)\log\sigma^2
-\frac{\tilde\delta_{b,g}+\mathcal R_{b,g}^{\mathrm{FJ}}}{2\sigma^2}
+\mathrm{const}.
\end{aligned}
\end{equation*}
Consequently, $q_{\sigma^2,g}^{(b)}=\mathrm{IG}(\alpha_{b,g}/2,\lambda_{b,g}/2)$ with $\alpha_{b,g}=n_b^\ast+\tilde\nu_{b,g}$, $\lambda_{b,g}=\tilde\delta_{b,g}+\mathcal R_{b,g}^{\mathrm{FJ}}$, and $\kappa_{b,g}=\alpha_{b,g}/\lambda_{b,g}$.
\end{proof}

\begin{proof}[Derivation of the shared AR update]
Define the lagged-residual matrix as a function of the two regression states:
\begin{equation*}
\mathbf E_b(\boldsymbol\beta_b,\boldsymbol\beta_{b-1})
=\mathbf M_b-\sum_{i=1}^k\beta_{b,i}\mathbf F_{b,i}^{c}
-\sum_{i=1}^k\beta_{b-1,i}\mathbf F_{b,i}^{p}.
\end{equation*}
Then the innovation vector is
\begin{equation*}
\mathbf z_b-\mathbf A_b\boldsymbol\beta_b-\mathbf B_b\boldsymbol\beta_{b-1}
=\mathbf y_{b\mid T_b^\ast}-\mathbf X_{b\mid T_b^\ast}\boldsymbol\beta_b
-\mathbf E_b(\boldsymbol\beta_b,\boldsymbol\beta_{b-1})\boldsymbol\phi.
\end{equation*}
For a regression-reset candidate, write
\begin{equation*}
\mathbf r_{b,g}=\mathbf y_{b\mid T_b^\ast}-\mathbf X_{b\mid T_b^\ast}\mathbf m_{b,g}^{c},\quad
\mathbf E_{b,g}=\mathbf M_b
-\sum_{i=1}^k m_{b,g,i}^{c}\mathbf F_{b,i}^{c}
-\sum_{i=1}^k m_{b,g,i}^{p}\mathbf F_{b,i}^{p}.
\end{equation*}
Expanding around the joint regression mean gives the two expectations needed for the AR quadratic:
\begin{equation*}
\begin{aligned}
\mathbb E_{q_{\beta,g}^{(b)}}(\mathbf E_b^\top\mathbf E_b)
&=\mathbf E_{b,g}^\top\mathbf E_{b,g}\\
&\quad+\sum_{i=1}^k\sum_{j=1}^k\Big[
[\boldsymbol\Sigma_{b,g}^{cc}]_{ij}(\mathbf F_{b,i}^{c})^\top\mathbf F_{b,j}^{c}
+[\boldsymbol\Sigma_{b,g}^{cp}]_{ij}(\mathbf F_{b,i}^{c})^\top\mathbf F_{b,j}^{p}\\
&\hspace{27mm}
+[\boldsymbol\Sigma_{b,g}^{pc}]_{ij}(\mathbf F_{b,i}^{p})^\top\mathbf F_{b,j}^{c}
+[\boldsymbol\Sigma_{b,g}^{pp}]_{ij}(\mathbf F_{b,i}^{p})^\top\mathbf F_{b,j}^{p}
\Big],\\
\mathbb E_{q_{\beta,g}^{(b)}}\!\left[\mathbf E_b^\top(\mathbf y_{b\mid T_b^\ast}-\mathbf X_{b\mid T_b^\ast}\boldsymbol\beta_b)\right]
&=\mathbf E_{b,g}^\top\mathbf r_{b,g}\\
&\quad+\sum_{i=1}^k\sum_{j=1}^k\Big[
[\boldsymbol\Sigma_{b,g}^{cc}]_{ij}(\mathbf F_{b,i}^{c})^\top\mathbf X_{b\mid T_b^\ast}\mathbf e_j
+[\boldsymbol\Sigma_{b,g}^{pc}]_{ij}(\mathbf F_{b,i}^{p})^\top\mathbf X_{b\mid T_b^\ast}\mathbf e_j
\Big].
\end{aligned}
\end{equation*}
Here $\mathbf e_j$ is the $j$th standard basis vector in $\mathbb R^k$.

For a retention candidate, the same moment identities hold after setting $\mathbf m_{b,g}^{c}=\mathbf m_{b,g}^{p}=\mathbf m_{b,g}$ and $\boldsymbol\Sigma_{b,g}^{cc}=\boldsymbol\Sigma_{b,g}^{cp}=\boldsymbol\Sigma_{b,g}^{pc}=\boldsymbol\Sigma_{b,g}^{pp}=\mathbf V_{b,g}^{-1}$. These substitutions are used only for the moment calculations.

For either candidate type, collecting the terms in $\boldsymbol\phi$ gives
\begin{equation*}
\log q_\phi^{(b)}(\boldsymbol\phi)
=-\frac12\boldsymbol\phi^\top\mathbf D_b\boldsymbol\phi
+\boldsymbol\phi^\top\mathbf D_b\mathbf c_b+\mathrm{const},
\end{equation*}
where
\begin{equation*}
\begin{aligned}
\mathbf D_b
&=\boldsymbol\Phi_0^{(b-1)}
+\sum_{g\in\mathcal G_b}\omega_{b,g}\kappa_{b,g}
\mathbb E_{q_{\beta,g}^{(b)}}(\mathbf E_b^\top\mathbf E_b),\\
\mathbf c_b
&=\mathbf D_b^{-1}\left\{
\boldsymbol\Phi_0^{(b-1)}\boldsymbol\phi_0^{(b-1)}
+\sum_{g\in\mathcal G_b}\omega_{b,g}\kappa_{b,g}
\mathbb E_{q_{\beta,g}^{(b)}}\!\left[
\mathbf E_b^\top(\mathbf y_{b\mid T_b^\ast}-\mathbf X_{b\mid T_b^\ast}\boldsymbol\beta_b)\right]
\right\}.
\end{aligned}
\end{equation*}
Completing the square gives $q_\phi^{(b)}(\boldsymbol\phi)=\mathrm N_p(\boldsymbol\phi\mid\mathbf c_b,\mathbf D_b^{-1})$.
\end{proof}

\begin{proof}[Derivation of the candidate scores and weights]
For fixed continuous factors, the calculation in the proof of Proposition~\ref{prop:mixture_weight_update} applies with the component-local score $J_{b,g}^{\mathrm{FJ}}$ derived in Section~\ref{app:elbo_fj}, giving
\begin{equation*}
\omega_{b,g}
=\frac{\tilde\omega_{b,g}\exp(J_{b,g}^{\mathrm{FJ}})}
{\displaystyle\sum_{j\in\mathcal G_b}\tilde\omega_{b,j}\exp(J_{b,j}^{\mathrm{FJ}})}.
\end{equation*}
\end{proof}

After convergence, the reset probabilities are computed from the full candidate weights as in main-text equation~\eqref{eq:reset_probs}. For a regression-reset candidate, the current regression marginal is
\begin{equation*}
\begin{aligned}
q_{\beta,g}^{(b),c}(\boldsymbol\beta_b)
&=\int q_{\beta,g}^{(b)}(\boldsymbol\beta_b,\boldsymbol\beta_{b-1})
\,d\boldsymbol\beta_{b-1}\\
&=\mathrm N_k(\boldsymbol\beta_b\mid\mathbf m_{b,g}^{c},\boldsymbol\Sigma_{b,g}^{cc}).
\end{aligned}
\end{equation*}
For retention, set $q_{\beta,g}^{(b),c}=q_{\beta,g}^{(b)}$, $\mathbf m_{b,g}^{c}=\mathbf m_{b,g}$, and $\boldsymbol\Sigma_{b,g}^{cc}=\mathbf V_{b,g}^{-1}$. Retain $\mathcal I_b=\{g_{b,1},\ldots,g_{b,K_b}\}$ by the same pruning rule as in the main text, and renormalize $\omega_{b,r}=\omega_{b,g_{b,r}}/\sum_{g\in\mathcal I_b}\omega_{b,g}$. The propagated state is
\begin{equation*}
q^{(b)}(\boldsymbol\beta,\sigma^2,\boldsymbol\phi)
=\left\{\sum_{r=1}^{K_b}\omega_{b,r}\,
q_{\beta,g_{b,r}}^{(b),c}(\boldsymbol\beta)
q_{\sigma^2,g_{b,r}}^{(b)}(\sigma^2)\right\}q_\phi^{(b)}(\boldsymbol\phi),
\end{equation*}
with stored parameters
\begin{equation*}
\begin{alignedat}{2}
\boldsymbol\beta_{0,r}^{(b)}&\leftarrow\mathbf m_{b,g_{b,r}}^{c},
&\quad\mathbf A_{0,r}^{(b)}&\leftarrow(\boldsymbol\Sigma_{b,g_{b,r}}^{cc})^{-1},\\
\nu_{0,r}^{(b)}&\leftarrow\alpha_{b,g_{b,r}},
&\delta_{0,r}^{(b)}&\leftarrow\lambda_{b,g_{b,r}},\\
\boldsymbol\phi_0^{(b)}&\leftarrow\mathbf c_b,
&\boldsymbol\Phi_0^{(b)}&\leftarrow\mathbf D_b.
\end{alignedat}
\end{equation*}
The previous regression vector and its cross-covariance are used only within the current batch.

\section{Derivation of the ELBO}
\label{app:elbo}

\subsection{Previous-Mean Boundary Treatment}
\label{app:elbo_pm}
We use the PM working likelihood and the moments defined in Section~\ref{app:variational_pm}.
At batch $b$, the ELBO associated with the augmented candidate distribution $\tilde q^{(b)}$ is
\begin{equation*}
\begin{aligned}
\mathcal L_b(q)
&=
\mathbb E_{\tilde q^{(b)}}
\Big[
\log p\!\left(
\tilde{\mathbf y}_b^\ast
\mid
\tilde{\mathbf X}_b^\ast,
\boldsymbol\beta,
\sigma^2,
\boldsymbol\phi
\right)
+
\log \tilde\omega_{b,G_b}
+
\log \tilde\pi_{\beta,G_b}^{(b)}(\boldsymbol\beta) \\
&\hspace{24mm}
+
\log \tilde\pi_{\sigma^2,G_b}^{(b)}(\sigma^2)
+
\log q_\phi^{(b-1)}(\boldsymbol\phi)
-
\log \omega_{b,G_b}
-
\log q_{\beta,G_b}^{(b)}(\boldsymbol\beta) \\
&\hspace{24mm}
-
\log q_{\sigma^2,G_b}^{(b)}(\sigma^2)
-
\log q_\phi^{(b)}(\boldsymbol\phi)
\Big].
\end{aligned}
\end{equation*}
The factorization over $G_b$ gives
\begin{equation*}
\begin{aligned}
\mathbb E_{\tilde q^{(b)}}
\!\left[
\log \tilde q^{(b)}
\right]
&=
\sum_{g\in\mathcal G_b}
\omega_{b,g}\log\omega_{b,g}
+
\sum_{g\in\mathcal G_b}
\omega_{b,g}
\mathbb E_{q_{\beta,g}^{(b)}}
\!\left[
\log q_{\beta,g}^{(b)}(\boldsymbol\beta)
\right] \\
&\quad+
\sum_{g\in\mathcal G_b}
\omega_{b,g}
\mathbb E_{q_{\sigma^2,g}^{(b)}}
\!\left[
\log q_{\sigma^2,g}^{(b)}(\sigma^2)
\right]
+
\mathbb E_{q_\phi^{(b)}}
\!\left[
\log q_\phi^{(b)}(\boldsymbol\phi)
\right].
\end{aligned}
\end{equation*}
Let $\psi$ denote the digamma function. For $g\in\mathcal G_b$, set $\ell_{b,g}=\log(\lambda_{b,g}/2)-\psi(\alpha_{b,g}/2)$ and $\kappa_{b,g}=\alpha_{b,g}/\lambda_{b,g}$, and write
\[
\mathcal R_{b,g}
=
C_{yy,b}
-2\mathbf m_{b,g}^\top\mathbf C_{xy,b}
+
\operatorname{tr}\!\left[
\mathbf C_{xx,b}
\left(
\mathbf V_{b,g}^{-1}
+
\mathbf m_{b,g}\mathbf m_{b,g}^\top
\right)
\right].
\]
Then
\begin{equation*}
\begin{aligned}
&\mathbb E_{q_{\beta,g}^{(b)}q_{\sigma^2,g}^{(b)}q_\phi^{(b)}}
\!\left[
\log p\!\left(
\tilde{\mathbf y}_b^\ast
\mid
\tilde{\mathbf X}_b^\ast,
\boldsymbol\beta,
\sigma^2,
\boldsymbol\phi
\right)
\right] \\
&\quad=
-\frac{n_b^\ast}{2}\log(2\pi)
-\frac{n_b^\ast}{2}\ell_{b,g}
-\frac{\kappa_{b,g}}{2}\mathcal R_{b,g}.
\end{aligned}
\end{equation*}
Moreover,
\begin{equation*}
\begin{aligned}
\mathbb E_{q_{\beta,g}^{(b)}}
\!\left[
\log q_{\beta,g}^{(b)}(\boldsymbol\beta)
\right]
&=
\mathbb E_{q_{\beta,g}^{(b)}}
\!\left[
-\frac{k}{2}\log(2\pi)
+\frac12\log|\mathbf V_{b,g}|
-\frac12
(\boldsymbol\beta-\mathbf m_{b,g})^\top
\mathbf V_{b,g}
(\boldsymbol\beta-\mathbf m_{b,g})
\right] \\
&=
-\frac{k}{2}\log(2\pi)
+\frac12\log|\mathbf V_{b,g}|
-\frac{k}{2}, \\
\mathbb E_{q_{\sigma^2,g}^{(b)}}
\!\left[
\log q_{\sigma^2,g}^{(b)}(\sigma^2)
\right]
&=
\frac{\alpha_{b,g}}{2}
\log\!\left(\frac{\lambda_{b,g}}{2}\right)
-
\log\Gamma\!\left(\frac{\alpha_{b,g}}{2}\right)
-
\left(\frac{\alpha_{b,g}}{2}+1\right)\ell_{b,g}
-
\frac{\alpha_{b,g}}{2}, \\
\mathbb E_{q_\phi^{(b)}}
\!\left[
\log q_\phi^{(b)}(\boldsymbol\phi)
\right]
&=
\mathbb E_{q_\phi^{(b)}}
\!\left[
-\frac{p}{2}\log(2\pi)
+\frac12\log|\mathbf D_b|
-\frac12
(\boldsymbol\phi-\mathbf c_b)^\top
\mathbf D_b
(\boldsymbol\phi-\mathbf c_b)
\right] \\
&=
-\frac{p}{2}\log(2\pi)
+\frac12\log|\mathbf D_b|
-\frac{p}{2}.
\end{aligned}
\end{equation*}
The corresponding pseudo-prior expectations are
\begin{equation*}
\begin{aligned}
\mathbb E_{q_{\beta,g}^{(b)}}
\!\left[
\log \tilde\pi_{\beta,g}^{(b)}(\boldsymbol\beta)
\right]
&=
-\frac{k}{2}\log(2\pi)
+\frac12\log|\tilde{\mathbf A}_{b,g}| \\
&\quad-
\frac12
\operatorname{tr}\!\left(
\tilde{\mathbf A}_{b,g}\mathbf V_{b,g}^{-1}
\right)
-
\frac12
\left(
\mathbf m_{b,g}-\tilde{\boldsymbol\beta}_{b,g}
\right)^\top
\tilde{\mathbf A}_{b,g}
\left(
\mathbf m_{b,g}-\tilde{\boldsymbol\beta}_{b,g}
\right), \\
\mathbb E_{q_{\sigma^2,g}^{(b)}}
\!\left[
\log \tilde\pi_{\sigma^2,g}^{(b)}(\sigma^2)
\right]
&=
\frac{\tilde\nu_{b,g}}{2}
\log\!\left(\frac{\tilde\delta_{b,g}}{2}\right)
-
\log\Gamma\!\left(\frac{\tilde\nu_{b,g}}{2}\right) \\
&\quad-
\left(\frac{\tilde\nu_{b,g}}{2}+1\right)\ell_{b,g}
-
\frac{\tilde\delta_{b,g}}{2}\kappa_{b,g}, \\
\mathbb E_{q_\phi^{(b)}}
\!\left[
\log q_\phi^{(b-1)}(\boldsymbol\phi)
\right]
&=
-\frac{p}{2}\log(2\pi)
+\frac12\log|\boldsymbol\Phi_0^{(b-1)}| \\
&\quad-
\frac12
\operatorname{tr}\!\left(
\boldsymbol\Phi_0^{(b-1)}\mathbf D_b^{-1}
\right)
-
\frac12
\left(
\mathbf c_b-\boldsymbol\phi_0^{(b-1)}
\right)^\top
\boldsymbol\Phi_0^{(b-1)}
\left(
\mathbf c_b-\boldsymbol\phi_0^{(b-1)}
\right).
\end{aligned}
\end{equation*}
Hence,
\begin{equation*}
\begin{aligned}
\mathcal K_{\beta,b,g}
&:=
\operatorname{KL}\!\left(
q_{\beta,g}^{(b)}
\,\|\,
\tilde\pi_{\beta,g}^{(b)}
\right) \\
&=
\frac12
\Bigg[
\operatorname{tr}\!\left(
\tilde{\mathbf A}_{b,g}\mathbf V_{b,g}^{-1}
\right)
+
\left(
\mathbf m_{b,g}-\tilde{\boldsymbol\beta}_{b,g}
\right)^\top
\tilde{\mathbf A}_{b,g}
\left(
\mathbf m_{b,g}-\tilde{\boldsymbol\beta}_{b,g}
\right) \\
&\hspace{22mm}
-k
+
\log|\mathbf V_{b,g}|
-
\log|\tilde{\mathbf A}_{b,g}|
\Bigg], \\
\mathcal K_{\sigma^2,b,g}
&:=
\operatorname{KL}\!\left(
q_{\sigma^2,g}^{(b)}
\,\|\,
\tilde\pi_{\sigma^2,g}^{(b)}
\right) \\
&=
\frac{\tilde\nu_{b,g}}{2}
\log\!\left(
\frac{\lambda_{b,g}}{\tilde\delta_{b,g}}
\right)
+
\log\Gamma\!\left(
\frac{\tilde\nu_{b,g}}{2}
\right)
-
\log\Gamma\!\left(
\frac{\alpha_{b,g}}{2}
\right) \\
&\quad+
\frac{\alpha_{b,g}-\tilde\nu_{b,g}}{2}
\psi\!\left(
\frac{\alpha_{b,g}}{2}
\right)
-
\frac{\alpha_{b,g}}{2}
+
\frac{\alpha_{b,g}\tilde\delta_{b,g}}{2\lambda_{b,g}}, \\
\mathcal K_{\phi,b}
&:=
\operatorname{KL}\!\left(
q_\phi^{(b)}
\,\|\,
q_\phi^{(b-1)}
\right) \\
&=
\frac12
\Bigg[
\operatorname{tr}\!\left(
\boldsymbol\Phi_0^{(b-1)}\mathbf D_b^{-1}
\right)
+
\left(
\mathbf c_b-\boldsymbol\phi_0^{(b-1)}
\right)^\top
\boldsymbol\Phi_0^{(b-1)}
\left(
\mathbf c_b-\boldsymbol\phi_0^{(b-1)}
\right) \\
&\hspace{22mm}
-p
+
\log|\mathbf D_b|
-
\log|\boldsymbol\Phi_0^{(b-1)}|
\Bigg].
\end{aligned}
\end{equation*}
With
\[
J_{b,g}
=
-\frac{n_b^\ast}{2}\log(2\pi)
-\frac{n_b^\ast}{2}\ell_{b,g}
-\frac{\kappa_{b,g}}{2}\mathcal R_{b,g}
-\mathcal K_{\beta,b,g}
-\mathcal K_{\sigma^2,b,g},
\]
the batchwise ELBO is
\[
\mathcal L_b(q)
=
\sum_{g\in\mathcal G_b}
\omega_{b,g}
\left[
J_{b,g}
+
\log\tilde\omega_{b,g}
-
\log\omega_{b,g}
\right]
-
\mathcal K_{\phi,b}.
\]
Equivalently,
\begin{equation*}
\begin{aligned}
\mathcal L_b(q)
&=
\sum_{g\in\mathcal G_b}
\omega_{b,g}
\Bigg[
-\frac{n_b^\ast}{2}\log(2\pi)
-\frac{n_b^\ast}{2}
\left\{
\log\!\left(\frac{\lambda_{b,g}}{2}\right)
-
\psi\!\left(\frac{\alpha_{b,g}}{2}\right)
\right\}
-
\frac{\alpha_{b,g}}{2\lambda_{b,g}}\mathcal R_{b,g} \\
&\quad-
\frac12
\Bigg\{
\operatorname{tr}\!\left(
\tilde{\mathbf A}_{b,g}\mathbf V_{b,g}^{-1}
\right)
+
\left(
\mathbf m_{b,g}-\tilde{\boldsymbol\beta}_{b,g}
\right)^\top
\tilde{\mathbf A}_{b,g}
\left(
\mathbf m_{b,g}-\tilde{\boldsymbol\beta}_{b,g}
\right) \\
&\hspace{28mm}
-k
+
\log|\mathbf V_{b,g}|
-
\log|\tilde{\mathbf A}_{b,g}|
\Bigg\} \\
&\quad-
\frac{\tilde\nu_{b,g}}{2}
\log\!\left(
\frac{\lambda_{b,g}}{\tilde\delta_{b,g}}
\right)
-
\log\Gamma\!\left(
\frac{\tilde\nu_{b,g}}{2}
\right)
+
\log\Gamma\!\left(
\frac{\alpha_{b,g}}{2}
\right) \\
&\quad-
\frac{\alpha_{b,g}-\tilde\nu_{b,g}}{2}
\psi\!\left(
\frac{\alpha_{b,g}}{2}
\right)
+
\frac{\alpha_{b,g}}{2}
-
\frac{\alpha_{b,g}\tilde\delta_{b,g}}{2\lambda_{b,g}}
+
\log\tilde\omega_{b,g}
-
\log\omega_{b,g}
\Bigg] \\
&\quad-
\frac12
\Bigg[
\operatorname{tr}\!\left(
\boldsymbol\Phi_0^{(b-1)}\mathbf D_b^{-1}
\right)
+
\left(
\mathbf c_b-\boldsymbol\phi_0^{(b-1)}
\right)^\top
\boldsymbol\Phi_0^{(b-1)}
\left(
\mathbf c_b-\boldsymbol\phi_0^{(b-1)}
\right) \\
&\hspace{24mm}
-p
+
\log|\mathbf D_b|
-
\log|\boldsymbol\Phi_0^{(b-1)}|
\Bigg].
\end{aligned}
\end{equation*}

\subsection{Full-Joint Boundary Treatment}
\label{app:elbo_fj}

We use the FJ factors and moments defined in Section~\ref{app:variational_fj}. The factorization over $G_b$ gives
\begin{equation*}
\begin{aligned}
\mathbb E_{\tilde q_{\mathrm{FJ}}^{(b)}}
\!\left[\log\tilde q_{\mathrm{FJ}}^{(b)}\right]
&=\sum_{g\in\mathcal G_b}\omega_{b,g}\log\omega_{b,g}
+\sum_{g\in\mathcal G_b}\omega_{b,g}
\mathbb E_{q_{\beta,g}^{(b)}}\!\left[\log q_{\beta,g}^{(b)}\right]\\
&\quad+\sum_{g\in\mathcal G_b}\omega_{b,g}
\mathbb E_{q_{\sigma^2,g}^{(b)}}\!\left[\log q_{\sigma^2,g}^{(b)}(\sigma^2)\right]
+\mathbb E_{q_\phi^{(b)}}\!\left[\log q_\phi^{(b)}(\boldsymbol\phi)\right].
\end{aligned}
\end{equation*}
Set $\ell_{b,g}=\log(\lambda_{b,g}/2)-\psi(\alpha_{b,g}/2)$ and $\kappa_{b,g}=\alpha_{b,g}/\lambda_{b,g}$. Using $\mathcal R_{b,g}^{\mathrm{FJ}}$ from Section~\ref{app:variational_fj}, the expected log-likelihood for a regression-reset candidate is
\begin{equation*}
\begin{aligned}
&\mathbb E_{q_{\beta,g}^{(b)}q_{\sigma^2,g}^{(b)}q_\phi^{(b)}}
\!\left[\log p\!\left(\mathbf z_b\mid\mathbf W_b,
\boldsymbol\theta_b,\sigma^2,\boldsymbol\phi\right)\right]\\
&\quad=-\frac{n_b^\ast}{2}\log(2\pi)
-\frac{n_b^\ast}{2}\ell_{b,g}
-\frac{\kappa_{b,g}}{2}\mathcal R_{b,g}^{\mathrm{FJ}}.
\end{aligned}
\end{equation*}
The same expression holds for a regression-retention candidate with its corresponding $\mathcal R_{b,g}^{\mathrm{FJ}}$.

For a regression-reset candidate, the Gaussian expectations are
\begin{equation*}
\begin{aligned}
\mathbb E_{q_{\beta,g}^{(b)}}
\!\left[\log q_{\beta,g}^{(b)}(\boldsymbol\theta_b)\right]
&=-k\log(2\pi)+\frac12\log|\mathbf H_{b,g}|-k,\\
\mathbb E_{q_{\beta,g}^{(b)}}
\!\left[\log\tilde\pi_{\beta,g}^{(b),\mathrm{FJ}}(\boldsymbol\theta_b)\right]
&=-k\log(2\pi)+\frac12\log|\mathbf P_{0,b,g}|\\
&\quad-\frac12\operatorname{tr}(\mathbf P_{0,b,g}\boldsymbol\Sigma_{b,g})\\
&\quad-\frac12(\mathbf m_{b,g}-\boldsymbol\mu_{0,b,g})^\top
\mathbf P_{0,b,g}(\mathbf m_{b,g}-\boldsymbol\mu_{0,b,g}).
\end{aligned}
\end{equation*}
Hence,
\begin{equation*}
\begin{aligned}
\mathcal K_{\beta,b,g}^{\mathrm{FJ}}
&:=\operatorname{KL}\!\left(
q_{\beta,g}^{(b)}(\boldsymbol\theta_b)
\,\middle\|\,
\tilde\pi_{\beta,g}^{(b),\mathrm{FJ}}(\boldsymbol\theta_b)\right)\\
&=\frac12\Big[
\operatorname{tr}(\mathbf P_{0,b,g}\boldsymbol\Sigma_{b,g})
+(\mathbf m_{b,g}-\boldsymbol\mu_{0,b,g})^\top
\mathbf P_{0,b,g}(\mathbf m_{b,g}-\boldsymbol\mu_{0,b,g})\\
&\hspace{18mm}-2k+\log|\mathbf H_{b,g}|-\log|\mathbf P_{0,b,g}|
\Big].
\end{aligned}
\end{equation*}
For a regression-retention candidate, the corresponding calculation in Section~\ref{app:elbo_pm} gives
\begin{equation*}
\begin{aligned}
\mathcal K_{\beta,b,g}^{\mathrm{FJ}}
&:=\operatorname{KL}\!\left(q_{\beta,g}^{(b)}(\boldsymbol\beta_b)
\,\middle\|\,q_{\beta,r_g}^{(b-1)}(\boldsymbol\beta_b)\right)\\
&=\frac12\Big[
\operatorname{tr}(\mathbf A_{0,r_g}^{(b-1)}\mathbf V_{b,g}^{-1})\\
&\quad+(\mathbf m_{b,g}-\boldsymbol\beta_{0,r_g}^{(b-1)})^\top
\mathbf A_{0,r_g}^{(b-1)}
(\mathbf m_{b,g}-\boldsymbol\beta_{0,r_g}^{(b-1)})\\
&\quad-k+\log|\mathbf V_{b,g}|
-\log|\mathbf A_{0,r_g}^{(b-1)}|\Big].
\end{aligned}
\end{equation*}
The terms $\mathcal K_{\sigma^2,b,g}$ and $\mathcal K_{\phi,b}$ have the expressions in Section~\ref{app:elbo_pm}, evaluated at the FJ variational and inherited parameters.

With
\begin{equation*}
J_{b,g}^{\mathrm{FJ}}
=-\frac{n_b^\ast}{2}\log(2\pi)
-\frac{n_b^\ast}{2}\ell_{b,g}
-\frac{\kappa_{b,g}}{2}\mathcal R_{b,g}^{\mathrm{FJ}}
-\mathcal K_{\beta,b,g}^{\mathrm{FJ}}
-\mathcal K_{\sigma^2,b,g},
\end{equation*}
the batchwise ELBO is
\begin{equation*}
\mathcal L_b^{\mathrm{FJ}}(q)
=\sum_{g\in\mathcal G_b}\omega_{b,g}
\left[J_{b,g}^{\mathrm{FJ}}+\log\tilde\omega_{b,g}-\log\omega_{b,g}\right]
-\mathcal K_{\phi,b}.
\end{equation*}
The FJ ELBO is evaluated on the full candidate approximation before marginalization and pruning.

\section{Additional Ablation and Sensitivity Experiments}
\label{sec:supp_reset_ablation}

We further examine SR-SVB through three complementary sensitivity
analyses. First, we compare the PM and FJ boundary treatments under the
main sixteen-scenario design. Second, we compare separate and joint-only
resetting under the block-specific changes used in the main simulation.
Third, we consider a targeted simultaneous-change scenario in which
both parameter blocks change at the same boundary.

\subsection{Sensitivity to the Boundary Treatment: PM versus FJ}
\label{sec:supp_pm_fj}

We compare the PM boundary treatment used in the main text with the
FJ treatment developed in Sections~\ref{app:variational_fj} and~\ref{app:elbo_fj}. Both variants use
the same SR-SVB framework and are evaluated on the same sixteen
changepoint scenarios and Monte Carlo replications, with all
non-boundary settings unchanged.

Figure~\ref{fig:pm_fj_metrics} compares recall, precision, $F_1$,
changepoint location error $L$, and detection delay $D$ for the overall,
regression, and innovation-variance results, using the definitions and
matching rules in Section~\ref{subsec:sec4.2} of the main text. The two variants show
similar detection performance, with overall recall of approximately
$94$--$100\%$. The reported $L$ and $D$ values remain below $0.02$
batches for both variants across all three summaries. These errors
and delays are conditional on matched events, and FJ does not
uniformly improve on PM across scenarios.
Figure~\ref{fig:pm_fj_runtime} compares fitting time per dataset.
FJ-SVB has a higher mean fitting time in all sixteen scenarios.
For regression-reset candidates, FJ updates a $2k$-dimensional joint
Gaussian instead of the $k$-dimensional factor used by PM.
With $k=30$, the regression update therefore increases from 30 to
60 dimensions, and cross-covariance terms also enter the variance
and AR updates. These additional matrix and moment calculations
increase the cost of the local batchwise updates.
\begin{figure}[t!]
\centering
\begin{subfigure}{\textwidth}
\centering
\includegraphics[width=0.98\linewidth]
{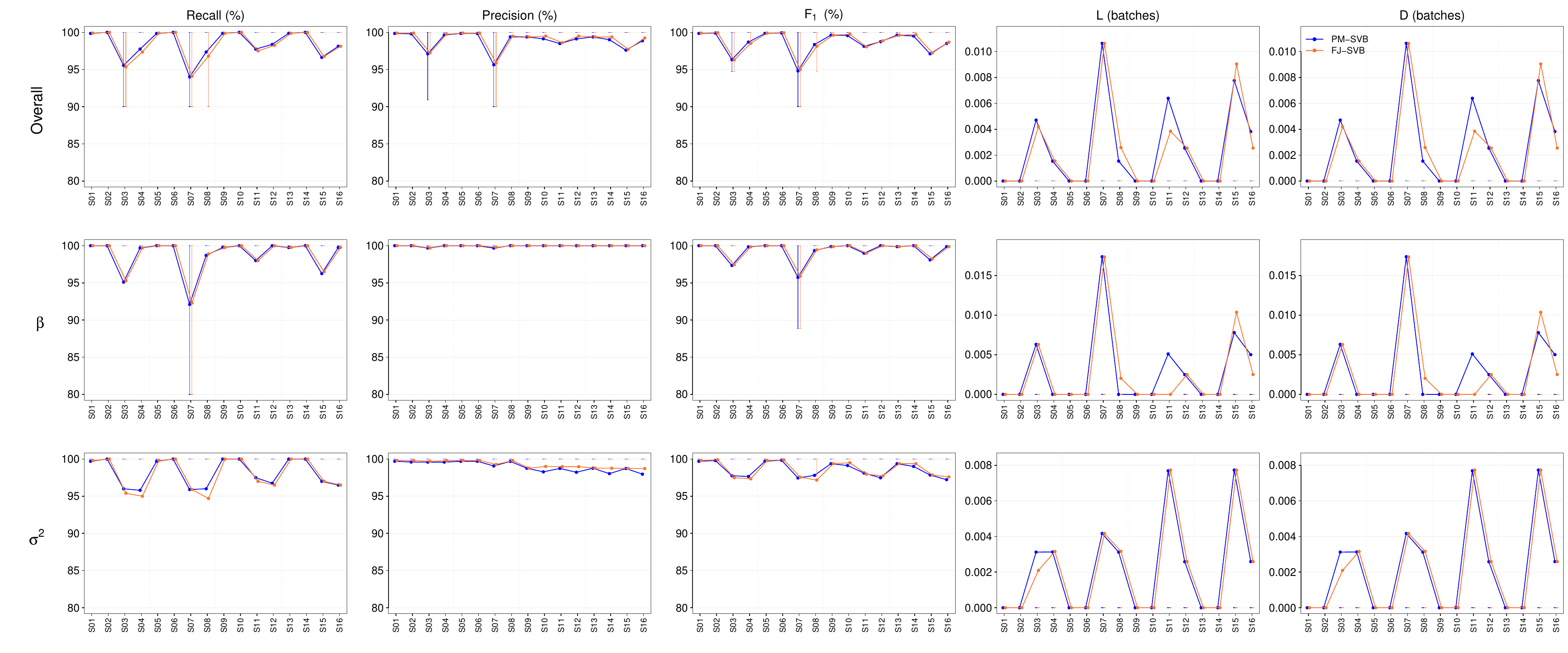}
\caption{Pooled detection metrics; vertical bars show the
replicate-level interquartile range.}
\label{fig:pm_fj_metrics}
\end{subfigure}

\vspace{0.6em}

\begin{subfigure}{0.5\textwidth}
\centering
\includegraphics[width=\linewidth]
{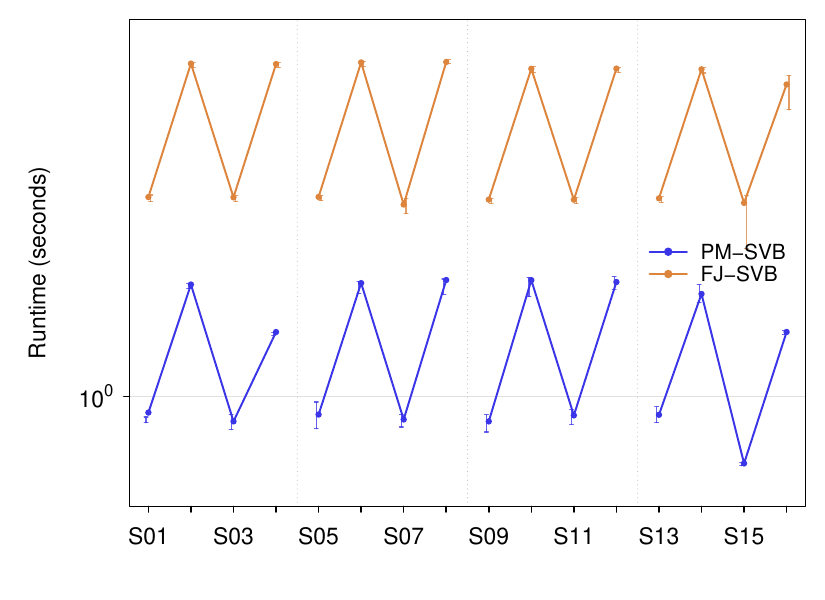}
\caption{Mean fitting time per dataset on a logarithmic scale;
vertical bars show the interquartile range.}
\label{fig:pm_fj_runtime}
\end{subfigure}
\caption{Comparison of the PM and FJ boundary treatments across the
sixteen changepoint scenarios.}
\label{fig:pm_fj_comparison}
\end{figure}
Figure~\ref{fig:fj_coverage} reports the batchwise empirical 95\%
coverage trajectories under FJ-SVB, using the same coverage construction as in Section~\ref{subsec:sec4.2} of the main text. For $\boldsymbol{\beta}$, FJ-SVB yields coverage slightly closer to the nominal level than PM-SVB, particularly under the Large-AR scenarios.

\begin{figure}[t!]
\centering
\includegraphics[width=0.98\textwidth]
{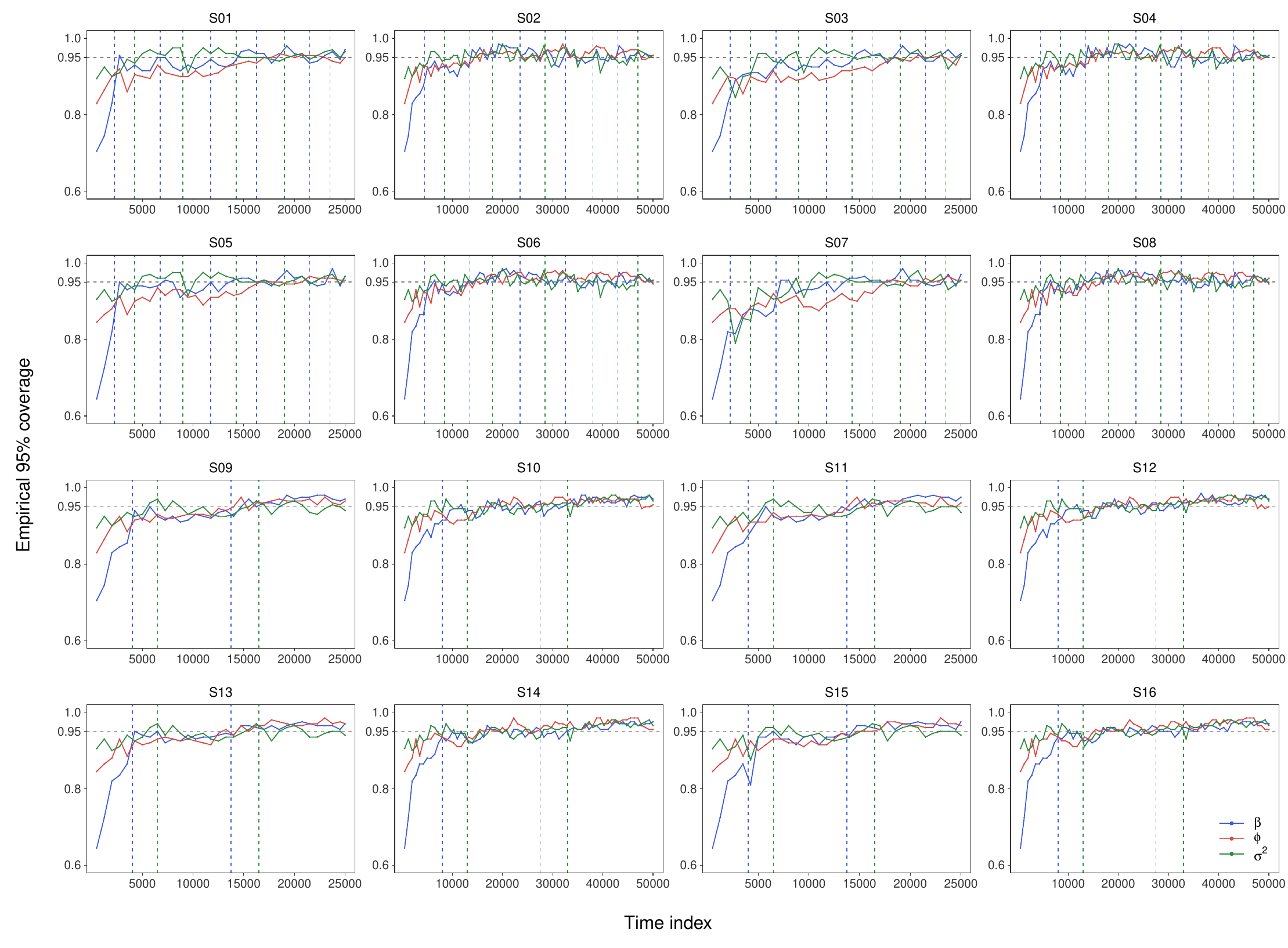}
\caption{Empirical 95\% coverage under FJ-SVB for
$\boldsymbol\beta$, $\boldsymbol\phi$, and $\sigma^2$.}
\label{fig:fj_coverage}
\end{figure}

The similar detection performance and lower fitting cost support PM
as the default boundary treatment in these experiments. FJ provides
an alternative that explicitly retains uncertainty about the preceding
regression state during the local boundary update.

\subsection{Block-Specific Changes: Separate Versus Joint Resetting}
\label{sec:supp_joint_reset_ablation}

We compare SR-SVB with a restricted joint-reset SVB that permits
only $(z_\beta,z_{\sigma^2})\in\{(0,0),(1,1)\}$.
Both methods use the PM boundary treatment and share the remaining
settings, including the AR factor, discount strengths, pruning rule,
batch structure, and convergence criterion.
Both procedures assign prior probability $0.99$ to no reset, so the
joint-reset procedure assigns the remaining $0.01$ entirely to its
single reset state and is not disadvantaged by a smaller overall
reset probability.
Both methods are evaluated on the same S01--S16 datasets and $R=200$
replications described in Section~\ref{subsec:sec4.2} of the main text.
Because joint-reset SVB cannot distinguish regression from variance
changes by construction, detection is evaluated type-agnostically.
Its thresholds are calibrated under the same $5\%$ stream-level
false-alarm budget as in the main experiment, and the independent
validation false-alarm probabilities range from $2\%$ to $6\%$.

Table~\ref{tab:block_specific_detection} summarizes the detection results.
Across 22,400 true changes, joint-reset SVB attains $58.54\%$ recall,
compared with $98.99\%$ for separate SR-SVB.
The difference is almost entirely due to variance changes:
regression-change recall is $99.60\%$ versus $99.92\%$, whereas
variance-change recall is $17.48\%$ versus $98.06\%$.
Table~\ref{tab:block_specific_postchange} reports estimation error
and coverage at the first and fourth batches following each true change.
\begin{table}[t]
\centering
\caption{Block-specific reset ablation pooled across S01--S16.}
\label{tab:block_specific_ablation}

\begin{subtable}{\linewidth}
\centering
\caption{Type-agnostic detection.}
\label{tab:block_specific_detection}
\small
\begin{tabular}{lrrrrrr}
\toprule
Method & TP & FP & FN & Recall (\%) & Precision (\%) & $F_1$ (\%) \\
\midrule
Joint-reset SVB
& 13,113 & 0 & 9,287 & 58.540 & 100.000 & 73.849 \\
Separate SR-SVB
& 22,174 & 85 & 226 & 98.991 & 99.618 & 99.304 \\
\bottomrule
\end{tabular}
\end{subtable}

\vspace{0.8em}

\begin{subtable}{\linewidth}
\centering
\caption{Post-change estimation error and empirical $95\%$ coverage.}
\label{tab:block_specific_postchange}
\small
\setlength{\tabcolsep}{3.8pt}
\begin{tabular}{cclrrrr}
\toprule
& & & \multicolumn{2}{c}{Estimation error}
& \multicolumn{2}{c}{Coverage} \\
\cmidrule(lr){4-5}\cmidrule(lr){6-7}
True change & \shortstack{Post-change\\batch} & Method
& $\beta$ & $\sigma^2$ & $\beta$ & $\sigma^2$ \\
\midrule
$\beta$ only
& First  & Joint-reset & 0.545 & 0.136 & 0.909 & 0.918 \\
&        & Separate    & 0.534 & 0.036 & 0.934 & 0.937 \\
& Fourth & Joint-reset & 0.278 & 0.063 & 0.935 & 0.944 \\
&        & Separate    & 0.278 & 0.028 & 0.942 & 0.949 \\
\addlinespace
$\sigma^2$ only
& First  & Joint-reset & 0.196 & 0.950 & 0.935 & 0.081 \\
&        & Separate    & 0.148 & 0.166 & 0.945 & 0.931 \\
& Fourth & Joint-reset & 0.211 & 0.519 & 0.941 & 0.319 \\
&        & Separate    & 0.134 & 0.072 & 0.942 & 0.948 \\
\bottomrule
\end{tabular}
\end{subtable}
\end{table}
As shown in Table~\ref{tab:block_specific_postchange}, after a
regression-only change, the two methods have similar regression
estimation errors. However, joint-reset SVB also discounts
the inherited variance information when resetting the regression
block, even though the true variance has not changed.
At the first post-change batch, the variance estimation error is
$0.136$ under joint-reset SVB, compared with $0.036$ under separate
SR-SVB. By the fourth post-change batch, the corresponding errors
are $0.063$ and $0.028$.
After a variance-only change, joint-reset SVB cannot discount the
inherited variance information without also discounting the accumulated
information about all $k=30$ unchanged regression coefficients.
The resulting loss of concentration in the regression distribution
can reduce support for the joint-reset candidate, making variance-only
changes harder to detect.
Its lower variance-change recall is accompanied by larger variance
estimation errors and substantial undercoverage.
At the first post-change batch, the variance estimation error is
$0.950$ under joint-reset SVB, compared with $0.166$ under separate
SR-SVB, and variance coverage is $8.1\%$ versus $93.1\%$.
By the fourth post-change batch, variance coverage remains at
$31.9\%$ under joint-reset SVB, compared with $94.8\%$ under separate
SR-SVB.
These results are consistent with the joint-reset constraint
hindering adaptation to variance-only changes.
The separate $(1,0)$ and $(0,1)$ states instead allow information
about the changed block to be discounted without also discounting
the unchanged block.

\subsection{Sensitivity to Simultaneous Regression and Variance Changes}
\label{sec:supp_joint_change}

To examine simultaneous changes, which are absent from S01--S16,
we construct a targeted scenario, J01, from S03.
This setting combines the smaller sample size $n=25{,}000$ with
Large-AR errors and weak signals, providing a challenging case for
post-change adaptation.
The fixed parameter states, AR coefficients, covariate realizations,
and standardized innovations are retained, but the five regression
transitions and five variance transitions are paired to produce five
simultaneous $(\beta,\sigma^2)$ changes.
With $R=200$, this gives 1,000 true joint changes.

We compare the default SR-SVB, a balanced-prior variant, and the
restricted joint-reset SVB, labeled Default, Balanced-prior, and
Joint-reset in Table~\ref{tab:joint_change_sensitivity}.
Under the default independent reset prior,
$P(1,1)=h_\beta h_{\sigma^2}\approx2.5\times10^{-5}$.
The balanced specification retains the total reset probability of
$0.01$ but allocates it equally across the three reset types:
$P(0,0)=0.99$ and
$P(1,0)=P(0,1)=P(1,1)=0.01/3$.
The default and joint-reset specifications reuse the frozen thresholds
from the $n=25{,}000$, Large-AR calibration stratum.
For the balanced-prior specification, thresholds are recalibrated
using the same null calibration streams and frozen before J01
evaluation. Its independent validation false-alarm probability is $3\%$.

Table~\ref{tab:joint_change_detection} reports exact-type detection
using the episode definitions and matching rules in Section~\ref{subsec:sec4.2}
of the main text.
A true positive requires a matched joint alarm; all unmatched
episodes, including single-channel alarms, count as false positives.
For the four-state specifications, an episode is labeled joint when
both calibrated reset channels are active at its first batch.
Ignoring alarm type, all three specifications detect every change
at the first post-change batch.
Requiring joint alarms, however, gives recalls of $84.3\%$ and
$95.6\%$ for the default and balanced-prior specifications,
respectively, compared with $100\%$ for joint-reset SVB.
The corresponding $F_1$ scores are $82.65\%$, $94.70\%$, and $100\%$.
For matched joint events, $L=D=0$ under all three specifications.
\begin{table}[t]
\centering
\caption{Sensitivity to simultaneous regression and variance changes
in J01.}
\label{tab:joint_change_sensitivity}

\begin{subtable}{\linewidth}
\centering
\caption{Exact-type joint-change detection.}
\label{tab:joint_change_detection}
\small
\begin{tabular}{lrrrrrr}
\toprule
Specification & TP & FP & FN
& Recall (\%) & Precision (\%) & $F_1$ (\%) \\
\midrule
Default
& 843 & 197 & 157 & 84.300 & 81.058 & 82.647 \\
Balanced-prior
& 956 & 63 & 44 & 95.600 & 93.817 & 94.700 \\
Joint-reset
& 1,000 & 0 & 0 & 100.000 & 100.000 & 100.000 \\
\bottomrule
\end{tabular}
\end{subtable}

\vspace{0.8em}

\begin{subtable}{\linewidth}
\centering
\caption{Post-change estimation error and empirical $95\%$ coverage.}
\label{tab:joint_change_postchange}
\small
\setlength{\tabcolsep}{4pt}
\begin{tabular}{lcrrrr}
\toprule
& & \multicolumn{2}{c}{Estimation error}
& \multicolumn{2}{c}{Coverage} \\
\cmidrule(lr){3-4}\cmidrule(lr){5-6}
Specification & \shortstack{Post-change\\batch}
& $\beta$ & $\sigma^2$ & $\beta$ & $\sigma^2$ \\
\midrule
Default
& First & 0.615 & 0.322 & 0.707 & 0.678 \\
Balanced-prior
& First & 0.603 & 0.208 & 0.774 & 0.793 \\
Joint-reset
& First & 0.579 & 0.117 & 0.936 & 0.936 \\
\addlinespace
Default
& Fourth & 0.331 & 0.083 & 0.905 & 0.879 \\
Balanced-prior
& Fourth & 0.322 & 0.066 & 0.922 & 0.919 \\
Joint-reset
& Fourth & 0.318 & 0.057 & 0.938 & 0.941 \\
\bottomrule
\end{tabular}
\end{subtable}
\end{table}
Table~\ref{tab:joint_change_postchange} reports estimation error and
coverage at the first and fourth batches following each true change,
regardless of detection outcome.
Joint-reset SVB has the smallest estimation errors and coverage
closest to the nominal level at both reported batches.
The balanced-prior specification improves on the default, reaching
regression and variance coverages of $92.2\%$ and $91.9\%$ by the
fourth post-change batch, compared with $93.8\%$ and $94.1\%$
under joint-reset SVB.

Joint resetting performs best in J01, where its restriction matches
the true change structure.
In contrast, Section~\ref{sec:supp_joint_reset_ablation} shows its
limitations under variance-only changes.
The balanced-prior specification improves joint detection and early
posterior adaptation in this experiment without restricting the
four-state model to joint resets.

\section{Details of the Changepoint Experiment}
\label{app:changepoint_details}

This section provides the fixed changepoint schedules, implementation
details, and calibration and matching rules for the experiment in
Section~\ref{subsec:sec4.2} of the main text.

\paragraph{Fixed simulation design.}
For each value of $N_{\mathrm{cp}}$, the same changepoint schedule is
used across the two AR specifications and signal-strength settings.
For $n=25{,}000$, the schedules are
\begin{equation}
\label{eq:fixed_cp_schedules}
\begin{aligned}
\mathcal T_{10}
&=(2250,4250,6750,9000,11750,14250,16250,19000,21500,23500),\\
\mathcal T_{4}
&=(4000,6500,13750,16500).
\end{aligned}
\end{equation}
For $n=50{,}000$, each location in
\eqref{eq:fixed_cp_schedules} is multiplied by two.
Regression and variance changes alternate, beginning with a regression
change. A changepoint $\tau$ is the final observation under the
preceding parameter state, with the new state applying from $\tau+1$.
Parameter paths are fixed across replications. Within each replication,
covariate and innovation draws are shared across scenarios; different
replications use independent draws.

\paragraph{SR-SVB implementation and uncertainty quantification.}
Detection and calibration use the pre-pruning marginal reset
probabilities, whereas uncertainty summaries use the retained
post-pruning approximation. The regression region uses the mean and
covariance of the retained Gaussian mixture, with its Mahalanobis
cutoff obtained from draws from that mixture. The AR region uses the
shared Gaussian factor, and the variance interval is the equal-tailed
interval of the retained inverse-gamma mixture. Coverage of the
nominal $95\%$ regions is evaluated across all replications at each
fitted batch, regardless of detection outcome.

\paragraph{FOCuS implementation.}
FOCuS receives the raw response and covariates. It applies a
multivariate Gaussian detector to standardized batchwise least-squares
estimates and a Gamma detector to residual sums of squares, with one
update per input batch. To limit the cost of full-dimensional
convex-hull pruning for the $k=30$ regression channel, we use the
package's low-dimensional projection approximation described by
\citet{romano2026focus}. Pruning uses all singleton coordinates and
all circular adjacent-coordinate pairs, while the Gaussian detector
input remains $k$-dimensional.

Let $\mathcal H$ denote the reference sample of $m$ observations,
with design matrix $\mathbf X_{\mathcal H}$, response
$\mathbf y_{\mathcal H}$, and ordinary-least-squares estimate
$\widehat{\boldsymbol\beta}_{\mathcal H}$. Define
\begin{equation}
\label{eq:focus_standardization}
\begin{aligned}
s_{\mathcal H}^{2}
&=\frac{\|\mathbf y_{\mathcal H}
-\mathbf X_{\mathcal H}\widehat{\boldsymbol\beta}_{\mathcal H}\|_2^2}{m-k},
&
\mathbf Q_{\mathcal H}
&=\frac{\mathbf X_{\mathcal H}^{\top}\mathbf X_{\mathcal H}}{m},
&
\mathbf U^\top\mathbf U
&=\frac{\mathbf Q_{\mathcal H}}{s_{\mathcal H}^{2}}.
\end{aligned}
\end{equation}
Diagonal regularization is added when required for the Cholesky
factorization. For batch $b$, let
$\widehat{\boldsymbol\beta}_b$ be its least-squares estimate and
$S_b=\|\mathbf y_b-\mathbf X_b\widehat{\boldsymbol\beta}_b\|_2^2$.
The standardized summary is
$\mathbf z_b=\mathbf U(\widehat{\boldsymbol\beta}_b
-\widehat{\boldsymbol\beta}_{\mathcal H})$.
This transformation is fixed within each monitoring segment and
standardizes coefficient coordinates without filtering temporal
dependence.

The Gaussian detector receives $n_b\mathbf z_b$ with weight $n_b$;
the Gamma detector receives $S_b$ with weight $n_b-k$ and base-shape
parameter $1/2$. Both use the unknown-baseline formulation. Reference
summaries enter the detectors, but alarm evaluation begins only after
the reference period.

\paragraph{fxmonitor implementation.}
Let $e_t=y_t-\mathbf x_t^\top\widehat{\boldsymbol\beta}_{\mathcal H}$
and $\widehat v=m^{-1}\sum_{t\in\mathcal H}e_t^2$, with the reference
coefficient estimate held fixed throughout the monitoring segment.
The supplied score vector and covariance estimate are
\begin{equation}
\label{eq:fxmonitor_scores}
\begin{aligned}
\boldsymbol\psi_t
&=\begin{pmatrix}
\mathbf x_t e_t\\
e_t^2-\widehat v
\end{pmatrix},\\
\mathbf J_0
&=\begin{pmatrix}
\displaystyle\frac{1}{m}\sum_{t\in\mathcal H}
e_t^2\mathbf x_t\mathbf x_t^\top & \mathbf 0\\[2mm]
\mathbf 0^\top &
\displaystyle\frac{1}{m}\sum_{t\in\mathcal H}(e_t^2-\widehat v)^2
\end{pmatrix},
&
\mathbf J
&=\mathbf J_0+10^{-10}a\mathbf I_{k+1},
\end{aligned}
\end{equation}
where $a$ is the mean of the positive diagonal entries of
$\mathbf J_0$. The block-diagonal covariance preserves covariance
among regression score coordinates without mixing them with the
variance score. No lagged score products are included.

Numbering observations from the beginning of the reference sample,
the cumulative process and package boundary are
\begin{equation}
\label{eq:fxmonitor_process}
\begin{aligned}
\mathbf W_j
&=m^{-1/2}\mathbf J^{-1/2}\sum_{\ell=1}^{j}\boldsymbol\psi_\ell,
&
g_j&=c\,\frac{j}{m},
\end{aligned}
\end{equation}
where $\mathbf J^{-1/2}$ is the symmetric inverse square root and
$c$ is the package critical value at nominal level $0.05$.
The regression and variance monitoring statistics are
$\max_{1\leq i\leq k}|W_{j,i}|/g_j$ and $|W_{j,k+1}|/g_j$,
respectively, permitting channel-specific calibration while preserving
the time dependence of the package boundary.

Both frequentist procedures initially use 500 observations as
reference. After an alarm, both channels suspend monitoring while
four new complete 250-observation batches are collected to form the
next reference sample. For FOCuS, these batches enter separately with
their own regression estimates and residual sums of squares, while
their pooled observations determine the transformation in
\eqref{eq:focus_standardization}. Alarms are evaluated at complete
batch endpoints over the common horizons of $25{,}000$ and
$50{,}000$ observations.

\paragraph{Calibration details.}
Each of the four sample-size--AR strata uses 400 no-change streams
for calibration and 100 independent streams for validation, generated
from the corresponding initial parameter state. All methods use the
same null datasets and batch grid. Each cutoff pair is shared across
the signal strengths and changepoint counts within its stratum.
For FOCuS and fxmonitor, channelwise maxima are computed along
uninterrupted initial-reference monitoring paths, so calibration
concerns the first alarm rather than alarms after a restart.
SR-SVB uses the maximum pre-pruning reset probability in each channel.

For each $d=1,\ldots,20$, candidate cutoffs are based on the
$(d+1)$st largest calibration maximum in each channel.
FOCuS and fxmonitor use strict exceedance. SR-SVB implements
the same tie exclusion by storing the next representable value
above the order statistic and using a non-strict comparison.
For each method, we select the largest $d$ for which at most
20 of the 400 calibration streams signal in either channel,
counting each stream once. The four cutoff pairs are frozen
before validation and change-stream evaluation.

Table~\ref{tab:cp_calibration_validation} reports the selected
cutoffs and false-alarm counts. Observed calibration rates range from
$4.75\%$ to $5.00\%$, within the prescribed $5\%$ budget; independent
validation rates range from $2\%$ to $11\%$.
The numerical cutoffs are on method-specific scales.

\begin{table}[!t]
\centering
\small
\setlength{\tabcolsep}{3.5pt}
\renewcommand{\arraystretch}{1.08}
\caption{Channel-specific cutoffs and stream-level false alarms.
CAL and VAL show counts, with percentages in parentheses.}
\label{tab:cp_calibration_validation}
\begin{tabular}{llrrrr}
\toprule
Stratum & Method & $\beta$ cutoff & $\sigma^2$ cutoff & CAL & VAL \\
\midrule
\multirow[t]{3}{*}{25K--Large}
& fxmonitor & 3.75244 & 3.95107 & 20/400 (5.00) & 7/100 (7.00) \\
& FOCuS & 451.386 & 64.2814 & 20/400 (5.00) & 11/100 (11.00) \\
& SR-SVB & $4.71\times 10^{-7}$ & 0.241800
& 20/400 (5.00) & 9/100 (9.00) \\
\midrule
\multirow[t]{3}{*}{50K--Large}
& fxmonitor & 3.78887 & 3.95107 & 20/400 (5.00) & 7/100 (7.00) \\
& FOCuS & 499.119 & 70.5484 & 20/400 (5.00) & 9/100 (9.00) \\
& SR-SVB & 0.00141726 & 0.386276 & 20/400 (5.00) & 4/100 (4.00) \\
\midrule
\multirow[t]{3}{*}{25K--Small}
& fxmonitor & 1.56491 & 1.71475 & 20/400 (5.00) & 5/100 (5.00) \\
& FOCuS & 111.836 & 12.7386 & 19/400 (4.75) & 4/100 (4.00) \\
& SR-SVB & $2.77\times 10^{-7}$ & 0.238682
& 20/400 (5.00) & 10/100 (10.00) \\
\midrule
\multirow[t]{3}{*}{50K--Small}
& fxmonitor & 1.56491 & 1.73006 & 20/400 (5.00) & 6/100 (6.00) \\
& FOCuS & 117.086 & 14.0410 & 19/400 (4.75) & 2/100 (2.00) \\
& SR-SVB & 0.00185596 & 0.378505 & 20/400 (5.00) & 5/100 (5.00) \\
\bottomrule
\end{tabular}
\end{table}

\paragraph{Alarm episodes and reported changepoints.}
For SR-SVB, each uninterrupted run of batches with at least one
active reset channel forms an episode. Its first batch determines
the alarm batch and type: regression, variance, or joint according
to the channels active in that batch.
FOCuS and fxmonitor report the first threshold crossing in each
monitoring segment and restart as described above.

SR-SVB and fxmonitor report the boundary immediately preceding the
alarm batch. FOCuS maps its retrospective changepoint estimate to
an input-batch boundary using cumulative batch sizes for the regression
channel and cumulative residual degrees of freedom for the variance
channel. Locations are left undefined when this mapping is not unique
within numerical tolerance.

For exact-type one-to-one matching, true changes are processed
chronologically. Let $b_j$ denote the last batch before change $j$
and $b_{j+1}$ the last batch before the next change of either type.
For the final change, $b_{j+1}$ is the final evaluation batch.
Each change is matched to the earliest unused episode of the same
type whose alarm batch $a$ and reported boundary $\hat b$ satisfy
$b_j<a\leq b_{j+1}$ and $b_j\leq\hat b<b_{j+1}$.
Unmatched episodes, including those with undefined locations, count
as false positives; unmatched true changes count as false negatives.

As a sensitivity check across S01--S16, we retained the original
exact-type SR-SVB matches and counted the 55 additional within-episode alarmed
batches as false positives. Overall precision decreased from
$98.89\%$ to $98.65\%$ and $F_1$ from $98.58\%$ to $98.46\%$.

\section{Real Data Application Details}
\label{app:btc_details}

This section describes preprocessing and reporting for the Bitcoin
realized-volatility application in Section~\ref{sec:sec5} of the main text.

\paragraph{Data source and return construction.}
We use the Bitstamp BTC/USD data described in the main text,
from January~1, 2014 through November~26, 2025.
After sorting timestamps in UTC and retaining the last record
at duplicate timestamps, we obtain five-minute closes from the
last available one-minute close in each interval.
Five-minute log returns are computed between adjacent closes
on the regular grid, without interpolating missing prices.
For each non-overlapping 30-minute interval, realized variance is
$RV_t=\sum_{i=1}^{6}r_{t,i}^2$, requiring all six returns.
Incomplete intervals at the sample boundaries are removed,
while internal missing intervals are retained for the treatment
described below.
\paragraph{Warm-up adjustment and missing values.}
The warm-up sample runs from January~1, 2014 at 00:30 UTC to
January~1, 2015 at 00:30 UTC, excluding the latter endpoint.
The offset $c$ is $1\%$ of the median warm-up realized variance,
and $m_{s(t)}$ is the warm-up median of $\log(RV+c)$ for the
corresponding weekday and half-hour slot. These quantities remain
fixed during fitting, and warm-up observations initialize the lagged
predictors.
Internal missing intervals remain on the regular grid. Their log
realized variances are replaced by $m_{s(t)}$, giving $y_t=0$.
These completed responses enter the working likelihood and weekly
response averages, while missing realized variances remain missing
in the return-composition predictors. The single internal gap
contains 39 half-hour intervals on March~15, 2025.

\paragraph{Trading activity and return composition.}
Let $N_t$ be the sum of one-minute close prices times BTC volume
within interval $t$, using available contributions. Relative log
notional volume $v_t$ is $\log(1+N_t)$ minus its trailing median
at the same weekday and half-hour over the preceding eight weeks,
requiring at least four valid values.
Define $R_t=\sum_{i=1}^{6}r_{t,i}$,
$RV_t^-=\sum_{i=1}^{6}r_{t,i}^2\mathbf{1}\{r_{t,i}<0\}$, and
$BV_t=(\pi/2)\sum_{i=1}^{6}|r_{t,i}r_{t,i-1}|$, where
$r_{t,0}=r_{t-1,6}$.
Bipower products therefore include the half-hour boundary and require
at least five valid products per interval. The 24-hour summaries
use the preceding 48 intervals, exclude the current interval,
and require complete values for each series entering the ratio.
Table~\ref{tab:btc_predictors} summarizes the predictor definitions,
where $\epsilon>0$ is a small constant used in ratio denominators.

\paragraph{Macro variables and the trading-hours indicator.}
The macro controls represent daily return signals for E-mini Nasdaq-100
futures and the U.S.\ dollar index, and the sign-reversed daily return signal
for Treasury-note futures as a yield-change proxy rather than a
directly observed yield change. Each 30-minute observation uses
the value for the most recent retained date strictly preceding
its UTC date, with values held constant within each UTC day.

Daily return signals are obtained by median-centering standardized
return increments and summing them by UTC date.
For each series, dates with numerically negligible
sums are excluded; thus, ``previous-day'' need not refer to the
preceding calendar day. The centering medians and numerical tolerances
are computed from the full series, whereas final predictor centering
and scaling use the warm-up sample.
The U.S.-hours indicator equals one for 30-minute intervals
overlapping regular U.S.\ equity trading hours, 9:30 a.m.--4:00 p.m.
New York time on Monday--Friday, and zero otherwise.
Daylight saving time is accounted for, but exchange holidays are
not separately excluded. Missing indicator values are set to zero.

\begin{table}[!t]
\centering
\small
\setlength{\tabcolsep}{4pt}
\renewcommand{\arraystretch}{1.12}
\caption{Response and predictor definitions and interpretations before
warm-up standardization.}
\label{tab:btc_predictors}
\begin{tabular}{
@{}
>{\raggedright\arraybackslash}p{0.30\linewidth}
>{\raggedright\arraybackslash}p{\dimexpr0.70\linewidth-2\tabcolsep\relax}
@{}
}
\toprule
Variable & Definition and interpretation \\
\midrule

\texttt{y}
& $y_t=\log(RV_t+c)-m_{s(t)}$, the seasonally adjusted log
realized variance. \\

\texttt{har\_week}
& $\frac{1}{336}\sum_{j=1}^{336}y_{t-j}$, the weekly average
of completed responses, representing volatility persistence. \\

\texttt{rel\_log\_notional\_lag1}
& $v_{t-1}$, lagged log notional volume relative to its trailing
same-time-of-week baseline. \\

\texttt{ret\_30m\_lag1}
& $R_{t-1}$, the direction and magnitude of the preceding
half-hour price return. \\

\texttt{downside\_share\_24h\_lag1}
& $\frac{\sum_{j=1}^{48}RV_{t-j}^-}
{\sum_{j=1}^{48}RV_{t-j}+\epsilon}-\frac12$, the share of
negative-return variation over the preceding 24 hours, minus one half. \\

\texttt{jump\_share\_24h\_lag1}
& $\frac{\max\{\sum_{j=1}^{48}(RV_{t-j}-BV_{t-j}),0\}}
{\sum_{j=1}^{48}RV_{t-j}+\epsilon}$, clipped to $[0,1]$.
Excess realized variance over bipower variation serves as a proxy
for jump-like movements over the preceding 24 hours. \\

\texttt{trend\_ratio\_lag1}
& $\frac{R_{t-1}^2}{6RV_{t-1}+\epsilon}$, clipped to $[0,1]$,
measuring net directional movement relative to within-interval variation. \\

\texttt{NQ\_prev\_day}
& Daily return signal for E-mini Nasdaq-100 futures for the previous
retained date. \\

\texttt{DXY\_prev\_day}
& Daily return signal for the U.S.\ dollar index for the previous
retained date. \\

\texttt{Yield\_prev\_day}
& Sign-reversed daily return signal for Treasury-note futures for the previous
retained date; a yield-change proxy rather than
a directly observed yield change. \\

\texttt{us\_open}
& Indicator for the weekday U.S.\ trading-hours window. \\

\texttt{rel\_volume\_x\_us\_open}
& Product of $v_{t-1}$ and the current U.S.-hours indicator,
formed before standardization, allowing the activity--volatility
association to differ during those hours. \\

\bottomrule
\end{tabular}
\end{table}

\paragraph{Standardization and interaction.}
Continuous predictors, including the weekly HAR term and the
interaction, are centered at their warm-up medians and scaled by
$\max\{1.4826\,\mathrm{MAD},\mathrm{SD},10^{-8}\}$ to avoid excessively small scaling factors.
Standardized values are clipped to $[-12,12]$, with missing values
set to zero. The intercept and \texttt{us\_open} remain unscaled.
The interaction is formed from $v_{t-1}$ and \texttt{us\_open}
before standardization, so the U.S.-hours coefficient represents
the fitted contrast at $v_{t-1}=0$, the trailing activity baseline.

\paragraph{Reset probabilities and retrospective summaries.}
For descriptive reporting, marginal and joint reset probabilities
are obtained by summing posterior mixture weights over the
corresponding reset configurations. A channel alarm is declared when its marginal
probability is at least $0.50$. Alarms are evaluated at batch ends,
whereas displayed boundaries mark batch starts. For retrospective
regime summaries, consecutive exceedances within each channel are
grouped and represented by the batch with the largest reset
probability. Coefficient and variance summaries use the posterior
at the final batch of each displayed regime.

\end{document}